\newtheorem{theorem}{Theorem}[section]
\newtheorem{proposition}[theorem]{Proposition}
\newtheorem{lemma}[theorem]{Lemma}
\newtheorem{corollary}[theorem]{Corollary}
\newcommand{\SortNoop}[1]{}
\newtheorem{defn}[theorem]{Definition}
\newenvironment{definition}[1][nonsense]{\ifthenelse{\equal{#1}{nonsense}}{\begin{defn}\upshape}{\begin{defn}[#1]\upshape}}{\end{defn}}
\newcommand{\PNPlogsq}{\textup{\rm P$^{\text{\rm NP[log$^2$]}}$}\xspace}
\newcommand{\PNP}{\textup{\rm P$^{\text{\rm NP}}$}\xspace}
\newcommand{\NP}{\textup{\rm NP}}
\newcommand{\coNP}{\textup{\rm coNP}}
\newcommand{\LFP}{\text{\rm LFP}\xspace}
\newcommand{\GFP}{\text{\rm GFP}\xspace}
\newcommand{\limp}{\rightarrow}
\newcommand{\atoms}{{\mathrm{atoms}}}
\newcommand{\tr}{\textsf{tr}}
\newcommand{\adom}{\mathit{adom}}
\newtheorem{claim}{Claim}
\title{Queries with Guarded Negation (full version)\titlenote{Detailed proofs of the results in this paper can be found in the appendix.
We would like to thank Alkis Polyzotis for helpful comments.} }
\author{
\alignauthor {Vince B\'ar\'any} \\
   \affaddr{TU Darmstadt, Germany}\\
       \email{\texttt{barany@mathematik.tu-darmstadt.de}}
\alignauthor {Balder ten Cate} \\
   \affaddr{UC Santa Cruz, CA, USA}\\
       \email{\texttt{btencate@ucsc.edu}}
\alignauthor {Martin Otto} \\
   \affaddr{TU Darmstadt, Germany}\\
       \email{\texttt{otto@mathematik.tu-darmstadt.de}}  
   }
\begin{document}
\maketitle

\begin{abstract}
A well-established and fundamental insight in database theory is that negation (also known as complementation) tends to make queries difficult to process and difficult to reason about. Many basic problems are decidable and admit practical algorithms in the case of unions of conjunctive queries, but become difficult or even undecidable when queries are allowed to contain negation. 
Inspired by recent results in finite model theory, we consider a restricted form of negation, \emph{guarded negation}. We introduce a fragment of SQL, called GN-SQL, as well as a fragment of Datalog with stratified negation, called GN-Datalog, that allow only guarded negation, and we show that these query languages are computationally well behaved, in terms of testing query containment, query evaluation, open-world query answering, and boundedness.
GN-SQL and GN-Datalog subsume a number of well known query languages and constraint languages, such as unions of conjunctive queries, monadic Datalog, and frontier-guarded tgds. In addition, an analysis of standard benchmark workloads shows that most usage of negation in SQL in practice is  guarded negation.
\end{abstract}

\section{Introduction}

A well-established and fundamental insight of database theory is that \emph{negation} (also called \emph{complementation} or \emph{difference}) tends to make queries difficult to reason about. Recall that the unions of conjunctive queries are the first-order queries that can be expressed without using negation. Many basic problems are decidable and admit practical algorithms in the case of unions of conjunctive queries, but are undecidable in the case of arbitrary first-order queries. Examples include \emph{query containment} and \emph{open world query answering}. 

We argue that most queries in practice use only a restricted form of
negation, which is called \emph{guarded negation} and was first
considered in~\cite{BtCS11} (in the study of decidable fragments of
first-order logic).  By guarded negation we mean that queries may involve negative conditions only if these conditions, intuitively, pertain to a single record in the database.  For instance, if a database schema contains relations Author(AuthID,Name) and Book(AuthID,Title,Year,Publisher), the query that asks for \emph{authors that did not publish any book with Elsevier}
since ``not publishing a book with Elsevier'' is a property of an author. The query that asks for \emph{pairs of authors and book titles where the author did not publish the book}, on the other hand, is not allowed,
since it involves a negative condition (in this case, an inequality) pertaining to two values that do not necessarily co-occur in a record in the database.  
The requirement of guarded negation can be formally stated most easily in terms of the Relational Algebra: we allow the use of the difference operator $E_1-E_2$ provided that $E_1$ is a projection of a relation from the database.

Based on an analysis of standard SQL benchmark workloads, we argue
that guarded negation covers most uses of negation in SQL in practice.
Furthermore, building on recent results in logic and finite model
theory~\cite{BtCS11,BOW}, we show that queries with guarded negation
are computationally very well behaved. For instance, query containment
and open world query answering are decidable for first-order queries
with guarded negation, and boundedness is decidable for the
guarded-negation fragment of Datalog with stratified negation. We also
determine the complexity of query evaluation for queries with guarded negation, which (under reasonable complexity theoretic assumptions) is easier than the same problem for queries with unguarded negation.

Our results show that guarded negation is a fruitful concept for databases, in the sense that it enables solving central decision problems in database theory more efficiently.  We also believe that guarded negation is a fruitful concept from a more practical point of view, allowing for efficient query plans and query optimization strategies. This is something we are exploring in a separate line of investigation.

\medskip\par\noindent\textbf{Outline and Main Results.}
In Section~\ref{sec:prel} we review the definition of GNFO, guarded-negation first-order logic, and GNFP, guarded-negation fixed-point logic, as well as the main known decidability and complexity results for these logics \cite{BtCS11}. We also provide an equivalent characterization of GNFO in terms of the Relational Algebra.

In Section~\ref{sec:GNSQL}, we investigate what it means for an SQL query to be negation-guarded. Specifically, we identify syntactic restrictions on the use of negation in SQL queries, and we show that the first-order queries satisfying  these restrictions  can be translated to GNFO, and, in fact, are expressively complete for GNFO, in the sense of Codd's completeness theorem. Furthermore, by means of an analysis of standard SQL benchmark workloads, we show that most SQL queries in practice satisfy the syntactic restrictions. 

In Section~\ref{GNDsec}, similarly, we introduce a syntactic fragment
of Datalog with stratified negation, called GN-Datalog, which admits a
translation into GNFP.

In Section~\ref{sec:existing}, we show that GN-SQL and GN-Datalog subsume a number of important existing query languages and constraint languages. In particular, GN-Datalog subsumes both monadic Datalog (which it extends by allowing IDBs of arbitrary arity, and negation, subject to guardedness conditions) and unions of conjunctive queries.

In Section~\ref{sec:containment}, we show that query containment 
is 2ExpTime-complete for GN-SQL queries as well as for 
GN-Datalog queries (note that the decidability of these problems 
follows via translations into GNFO and GNFP). %

In Section~\ref{sec:answering}, we determine the complexity of query evaluation and open-world query answering for GN-SQL and for GN-Datalog. While the data complexity of query evaluation is in PTime, both for GN-SQL and for GN-Datalog, in terms of combined complexity, the problem is complete for the complexity class \PNPlogsq (for GN-SQL) and \PNP (for GN-Datalog).  
The data complexity of \emph{open world query answering} for GN-SQL 
with respect to incomplete databases is coNP-complete. 
The problem can be solved in PTime for a considerable fragment of GN-SQL.

In Section~\ref{sec:boundedness}, we prove decidability of the \emph{boundedness} problem 
for GN-Datalog. Boundedness is a classical decision problem in the study of query 
optimization for recursive queries. It is known to be undecidable for Datalog, 
but decidable for monadic Datalog. 
Our result can be viewed as a powerful generalization of the decidability 
of boundedness for monadic Datalog queries \cite{CosmadakisGaKaVa88}.

We conclude in Section~\ref{sec:discussion} by discussing possibilities for further extending GN-SQL and GN-Datalog.

\section{Preliminaries}\label{sec:prel}

In this section, we review definitions and results concerning the guarded-negation logics GNFO and GNFP from \cite{BtCS11}. These results will be put to extensive use in the rest of this paper. We assume familiarity with the basic syntax and semantics of first-order logic. 

For clarity, we will maintain a distinction between instances and structures: a structure has an associated domain, which may be a superset of its active domain, and which may depend on the structure in question. Furthermore, structures may interpret not only relation symbols but also constant symbols (which denote, not necessarily distinct, domain elements). 
Thus, instances may be viewed as a special case of structures, where the domain is the active domain and there are no constant symbols. 
Unless explicitly stated otherwise (by means of the adjective ``unrestricted''), we always assume structures and instances to be finite.

\vskip -.4em
\paragraph*{GNFO} Guarded Negation First-Order Logic (GNFO) is the fragment of first-order logic consisting of all formulas built up from atomic formulas (including equalities) using
  conjunction, disjunction, existential quantification, and guarded negation, that is, negation in the specific form $\alpha\land\neg\phi$ where $\alpha$
  is an atomic formula (possibly an equality statement) and all free variables of $\phi$ occur in $\alpha$. Note that, since the guard $\alpha$ is
  allowed to be an equality statement, we are essentially able to negate any formula with at most one free variable 
  (by writing $x=x\land\neg\phi(x)$).  Formally, the formulas of GNFO are generated by the recursive definition
\[ \phi ::= R(t_1, \ldots, t_n) \mid t_1=t_2 \mid \phi_1\land\phi_2 \mid \phi_1\lor\phi_2 \mid \exists x\phi\mid \alpha\land\neg\phi \]
where each $t_i$ is either a variable or a constant symbol, and, in the last clause, $\alpha$ is an atomic formula containing all free variables of $\phi$.
Note that function symbols (of arity greater than zero) are not considered.

  In the above definition, we required $\alpha$ to be an atomic formula containing all free variables of the negated formula $\phi$. Occasionally, it is convenient to allow a slightly more liberal syntax.
 Let us say that 
 $\alpha$ is a \emph{generalized guard} for $\phi$ if $\alpha$ is a disjunction of existentially quantified atomic formulas such that the free variables of $\phi$ are included in the free variables of each disjunct. One could extend GNFO by allowing generalized guards in the definition of guarded negation, thus admitting formulas such as $(\exists uv\,R(x,y,u,v) \lor \exists uv\,R(y,x,u,v))\land\neg Sxy$.  This would not increase the expressive power of GNFO: if a negation is guarded by a generalized guard, we can ``pull out'' the disjunction and the existential quantification  to obtain an equivalent formula without generalized guards (at the cost of a possibly exponential blow-up in formula size). In particular, the above example can be equivalently expressed by
$\exists uv(R(x,y,u,v)\land \neg Sxy)\lor \exists uv(R(y,x,u,v)\land \neg Sxy)$. Therefore, for simplicity, our definition of GNFO does not allow for generalized guards. %

\vskip -.4em
\paragraph*{GNFP} 
Guarded Negation Fixed Point Logic (GNFP) further extends GNFO with an operator 
for least fixed points of positively definable monotone operations on relations. 
That is, we introduce second-order variables (also called fixed-point variables) 
of arbitrary arity, which may be used to form atomic formulas in the same way 
as ordinary relation symbols, and if $\phi$ is any GNFP formula, $X$ an $n$-ary 
second-order variable $(n\geq 1)$ occurring only positively in~$\phi$ (i.e, 
under an even number of negations), $\textbf{x}=x_1, \ldots, x_n$ a sequence of 
first-order variables, and $\textbf{t}=t_1, \ldots, t_n$ a sequence of terms 
(first-order variables or constant symbols), and the free first-order variables 
of $\phi$ are included in $\textbf{x}$, then 
\begin{center}
$[\LFP_{X,\textbf{x}}~\alpha\land\phi](\textbf{t})$ 
\end{center}
is also a formula of GNFP, where $\alpha$ is a generalized guard for $\phi$, 
i.e., a disjunction of existentially quantified atomic formulas (involving only 
atomic relations, no second-order variables), such that all free first-order 
variables of $\phi$ are also free variables of each disjunct of $\alpha$.%

In the above formula, the \LFP operator is a generalized quantifier 
binding the variables $X$ and $\textbf{x}$.
The formula expresses that the tuple $\textbf{t}$ belongs to the least 
fixed-point of the monotone operation on relations defined by $\alpha\land\phi$.
Incidentally, here, unlike in the case of GNFO, it is important that $\alpha$ 
is allowed to be a \emph{generalized} guard. 

In what follows, whenever we consider \LFP formulas, we will always assume 
that they do not have any free second-order variables. 
The formal semantics of $[\LFP_{X,\textbf{x}}~\alpha\land\phi](\textbf{y})$ 
is the familiar one from least fixed-point logic, cf.~\cite{AHV95}.  
If the formula $\phi$ has at most one free variable $x$, we may omit 
the guard $\alpha$, which can be assumed to be the equality statement $x=x$. 
For example, the GNFP formula
\[
   [\LFP_{X,x} ~ P(x) \lor \exists y~R(x,y)\land X(y)](z)
\]
says that there is an $R$-path from $z$ to some element in $P$. 

\vskip -.4em
\paragraph*{Definability of Greatest Fixed Points}
Besides the above \emph{least fixed-point} operator, we can consider an analogous operator \GFP for taking the \emph{greatest fixed point} of a definable monotone operation on relations. 
However, as it turns out, it is possible to define the \GFP operator in terms of the \LFP operator (and vice versa) using a dualization via guarded negation. Specifically, $[\GFP_{X,\textbf{x}}~\alpha(\textbf{x})\land\phi(\textbf{x})](\textbf{t})$ can be equivalently expressed as
$\alpha(\textbf{t})\land\neg[\LFP_{X,\textbf{x}} \alpha(\textbf{x})\land\neg\phi'(\textbf{x})](\textbf{t})$, where $\phi'$ is obtained from $\phi$ by replacing all subformulas of the form 
$X(\textbf{t}')$ by $\alpha(\textbf{t}')\land\neg X(\textbf{t}')$.  
For this reason, the above definition of GNFP does not include \GFP as a primitive operator.

\vskip -.4em
\paragraph*{Definability of Simultaneous Fixed Points}
It is common, in the literature on fixed point logics, to consider also a \emph{simultaneous} least fixed point operator, that takes as arguments not a single formula but a tuple of formulas. More precisely, in the context of GNFP it is natural to consider also formulas of the form $[\LFP_{X_i} S](\textbf{t})$ where 
\[\small 
  S = \begin{cases}
         X_1(\textbf{x}_1) &\!\!\leftarrow~ 
            \alpha_1(\textbf{x}_1) \land \phi_1(X_1, \ldots, X_n,\textbf{x}_1) \\ 
        &\tiny{\vdots} \\ 
        X_n(\textbf{x}_n) &\!\!\leftarrow~ 
            \alpha_n(\textbf{x}_n)\land \phi_n(X_1, \ldots, X_n, \textbf{x}_n)
       \end{cases}
\]
is a system of GNFP formulas, with each $X_k$ a distinct second-order variable, whose arity matches the length of the tuple $\textbf{x}_k$, and which occurs only positively in $\phi_1, \ldots, \phi_n$, and where $\textbf{t}$ is a tuple of terms of the same length as $\textbf{x}_i$. Here, the system $S$ can be viewed as defining a monotone operation
on tuples of relations, and the above formula expresses that $\textbf{t}$ belongs to the $i$-th component of the least fixed point of this operation.
It is well known that simultaneous fixed point expressions of this form can be expressed equivalently using a nesting of ordinary, single-variable, fixed point operators, possibly at the cost of an exponential blow-up in formula size (cf.~for example~\cite{ArnoldNiw}). Hence,
extending GNFP with such a simultaneous least fixed point operator does not increase its expressive power. 

\vskip -.4em
\paragraph*{Disjunctive Normal Form for GNFO and Width}
 We say that a GNFO formula is in \emph{Disjunctive Normal Form} (DNF) 
if it is a disjunction of disjunction-free GNFO formulas, 
no existential quantifier occurs directly below a conjunction sign, 
and no conjunction sign occurs directly below a negation sign. 
Equivalently, a GNFO formula is in DNF if it is a disjunction of 
GNFO formulas $\phi$ generated by the following recursive definition:
\begin{equation} \label{eq:DNF}
\begin{array}{lll} 
   \phi &::=& \exists x_1, \ldots, x_n(\zeta_1\land\cdots\land\zeta_m) \\
  \zeta &::=& R(t_1, \ldots, t_n) ~\mid~ (t_1=t_2) \mid \alpha\land\neg\phi 
\end{array}
\end{equation}
where, in the last clause, $\alpha$ is an atomic formula containing 
all free variables of $\phi$.
Every GNFO formula is equivalent to one in DNF, of possibly exponential size, 
that can be obtained by repeatedly applying the following equivalences. 
\[\small\begin{array}{@{}l@{}}
  (\exists x\phi)\land\psi \simeq \exists x'(\phi[x'/x]\land\psi) , ~~\quad 
  \phi\land(\psi\lor\chi) \simeq (\phi\land\psi)\lor(\phi\land\chi) \\[.5em] 
  \exists x(\phi\lor\psi) \simeq \exists x\phi \lor \exists x\psi , ~~\quad\ 
  \alpha \land \lnot ( \phi \land \psi) \simeq (\alpha \land \lnot \phi) \lor (\alpha \land \lnot \psi) 
\end{array}\]
The \emph{width} of a GNFO formula $\phi$ is the number of variables occurring 
(free or bound) in the DNF formula obtained from $\phi$ by applying the above rules. 

A \emph{union of conjunctive queries} (UCQ) is a GNFO formula in DNF without negation.
Thus, GNFO can be naturally viewed as an extension of UCQs with guarded negation.

\vskip -.4em
\paragraph*{Known Decidability and Complexity Results}
The following theorem summarizes what is known about GNFO and GNFP that is relevant 
for present purposes. Recall that the satisfiability problem has as input 
a formula $\phi(\textbf{x})$, and asks whether there exists a structure $M$ 
and a tuple of elements $\textbf{a}$ such that $M\models\phi(\textbf{a})$. 
The entailment problem takes as input two formulas $\phi(\textbf{x})$, $\psi(\textbf{x})$, 
and asks whether it is the case that, for every structure $M$ and for every tuple 
of elements $\textbf{a}$, $M\models\phi(\textbf{a})$ implies $M\models\psi(\textbf{a})$. 
The model checking problem has as input a formula $\phi(\textbf{x})$, a structure $M$, 
and a tuple of elements $\textbf{a}$, and asks whether $M\models\phi(\textbf{a})$.

\begin{theorem}[\cite{BtCS11}]\label{thm:GNFO}
\begin{enumerate} \setlength{\itemsep}{-0.8mm}
\item The satisfiability problem and the entailment problem for GNFO and for GNFP 
      are decidable and 2ExpTime-complete. This holds both for finite structures 
      and for unrestricted structures.
\item For GNFO formulas, satisfiability over finite structure coincides with satisfiability 
      over unrestricted structures, and similarly for entailment. 
      The same does \emph{not} hold for GNFP.
\item The model checking problem for GNFO is \PNPlogsq-complete (combined complexity). 
      For GNFP, the problem is \PNP-hard and is contained in $\NP^\NP\cap \coNP^\NP$.
\end{enumerate}
\end{theorem}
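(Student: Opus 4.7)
The plan is to follow the automata-theoretic approach standard for guarded logics. First, establish a \emph{tree-like model property}: every satisfiable GNFO (resp.\ GNFP) formula has a model admitting a tree decomposition whose bags are guarded sets of cardinality bounded by the width of the formula. This is proved by a guarded unraveling, starting from any model and building a tree of guarded bags linked by overlaps while preserving atomic facts on each bag. Such tree-shaped models can be encoded as labeled trees over an alphabet of single-exponential size in the formula. By induction on the structure of the formula, construct a two-way alternating parity tree automaton of single-exponential size accepting exactly the codes of such models; its emptiness is decidable in exponential time in the automaton size, yielding a 2ExpTime satisfiability algorithm. Entailment reduces to unsatisfiability of $\phi \land \neg\psi$ (using the equality guard when there is at most one free variable, and Skolemizing the free variables by fresh predicates otherwise). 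Matching 2ExpTime-hardness is inherited from the guarded fragment, which embeds into GNFO.

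For part (2), the finite model property of GNFO follows from the tree-like model property together with a folding argument: since bags have bounded size, only finitely many isomorphism types of bag-plus-overlap-with-parent arise, so one quotients a (possibly infinite) tree model by identifying bags of the same type, producing a finite model witnessing the formula. Entailment inherits finite controllability because its negation reduces to a GNFO satisfiability problem. For GNFP the FMP fails; one exhibits a GNFP sentence satisfiable only over infinite structures, e.g.\ combining an \LFP that computes the transitive closure of a binary relation $R$ with a guarded clause requiring every element to have an $R$-successor while being $R$-unreachable from itself.

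For part (3), GNFO model checking is approached by first normalizing to DNF so that subformulas requiring witnesses have all their free variables inside a guarded set of the instance, of which there are only polynomially many. A bottom-up evaluation uses an NP oracle to test, for each existential subformula, whether a witnessing tuple exists over these guarded sets; a careful binary-search accounting of the alternation between positive existentials and guarded negations yields $O(\log^2 n)$ oracle calls, matching \PNPlogsq. For GNFP, \PNP-hardness follows from the capacity of fixed-point operators to express alternating reachability, and the $\NP^\NP \cap \coNP^\NP$ upper bound is obtained via a two-player game characterization of fixed-point truth with polynomial-size strategies on both sides. The main obstacle I anticipate is the first step, the tree-like model property with sharp width control: ensuring that the automaton alphabet of local types remains single-exponential is what keeps the satisfiability bound at 2ExpTime rather than 3ExpTime, and reconciling this with the free interaction between guarded negation and arbitrary unguarded positive parts of a GNFO formula is the technical heart of the proof.
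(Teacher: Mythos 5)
First, note that the paper does not prove this theorem at all: it is imported verbatim from \cite{BtCS11}, so there is no in-paper proof to match. Judged on its own terms, your reconstruction follows a plausible route for part (1) --- though a different one from \cite{BtCS11}, which obtains the 2ExpTime bound via a width-preserving (but exponentially blowing-up) satisfiability-preserving translation into the guarded fragment and guarded fixed-point logic, and then invokes the known width-sensitive decision procedure for those logics, rather than building tree automata directly from GNFO. Your own worry is well placed: the unguarded positive (CQ) parts are exactly what forces the detour, and your single-exponential alternating automaton would have to internalize all ways a CQ match can be scattered across bags of the decomposition.

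There are two genuine gaps. For part (2), the folding argument is unsound as stated: quotienting a tree-like model by identifying bags of the same type creates short cycles, and these cycles can realize new matches of conjunctive queries that occur \emph{under} a guarded negation $\alpha\land\neg\phi$, thereby falsifying the formula in the quotient. Avoiding this is the whole difficulty of the finite model property for GNFO; the known proofs go through finite hypergraph covers of large girth (Otto) or a Rosati-style finite chase, precisely so that no new cyclic CQ matches are introduced. A type-based quotient with no girth control does not yield the FMP. For part (3), your hardness argument for GNFP model checking is wrong: alternating reachability is P-complete, so expressing it only gives PTime-hardness, not \PNP-hardness. The \PNP lower bound needs a reduction that interleaves a polynomial-length \emph{sequence} of NP-hard subproblems whose answers feed into one another --- e.g.\ a LEX(SAT)-style construction in which each bit of the lexicographically least satisfying assignment is computed by a CQ evaluation guarded-negated against the previous bits (this is exactly the shape of the reduction the present paper uses for GN-Datalog in Theorem~\ref{thm:GNDatalog-PNP}). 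You also assert rather than prove the \PNPlogsq lower bound for GNFO model checking; that, too, is a separate reduction and not a consequence of the upper-bound analysis.
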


In the above theorem, \PNPlogsq refers to those problems that can be solved by a 
polynomial time deterministic algorithm that is allowed to ask $O(\log^2(n))$ queries 
to an NP-oracle, cf.~Section~\ref{sec:evaluation}.
A close analysis of the 2ExpTime upper bound argument for the satisfiability and entailment problems of GNFP shows that these results extend to the case with simultaneous fixed-point operators (both on finite structures and on unrestricted structures).\footnote{
    \small Specifically, the proof of the 2ExpTime upper bound for GNFP is based on 
    a satisfiability preserving translation from GNFP to guarded fixed-point logic (GFP). 
    The translation may give rise to an exponential blow-up in the size of the formula, 
    but it preserves the width (following a suitable definition of width, analogous to 
    the definition of width for GNFO formulas). The satisfiability problem for GFP formulas, 
    in turn, is decidable in time $2^{poly(|\phi|)\cdot exp(\text{width}(\phi))}$ 
    (where $poly(n)$ is short for $n^{O(1)}$ and $exp(n)$ is short for $2^{poly(n)}$) 
    by a reduction to the emptiness problem for a suitable type of automata 
    \cite{GradelWalukiewicz99,BaranyBojanczyk11}. 
    The translation from GFP formulas to automata extends immediately to the case for 
    formulas containing simultaneous fixed-point operators.  
    Furthermore, the polynomial-time inductive satisfiability-preserving translation 
    from GNFP to GFP given in \cite{BtCS11} (which in fact simply commutes with the 
    fixed-point operators) extends in a straightforward manner to the case where the 
    input and output formulas may contain simultaneous fixed-point operators.}

\begin{figure*} 
\begin{center}
\begin{tabular}{lll}
\emph{query} &:=& \textsf{select} ($t_1 \textsf{ as } \textsc{attr}_1, \ldots, t_n \textsf{ as } \textsc{attr}_n$) \textsf{from} (\textsc{rel$_1$} $R_1$, \ldots, \textsc{rel$_m$} $R_m$) \textsf{where} \emph{condition}
\\ &&  $~\mid~$ \emph{query} \textsf{union} \emph{query} $~\mid~$ \emph{query} \textsf{intersect} \emph{query} $~\mid~$ \emph{query} \textsf{except} \emph{query} \\[1mm]
\emph{condition} &:=& \textsf{true} $~\mid~$ $t_1=t_2$ $~\mid~$ $t$ \textsf{in} \emph{query} $~\mid~$ \textsf{exists}(\emph{query})
  \\ && $~\mid~$ \emph{condition} \textsf{and} \emph{condition} $~\mid~$ \emph{condition} \textsf{or} \emph{condition} $~\mid~$ \textsf{not}(\emph{condition})
\end{tabular}
\end{center}
\vspace{-4mm}
\caption{Grammar for FO-SQL queries \label{fig:SQL}}
\end{figure*}

\subsection{Guarded Negation in Relational Algebra} \label{GNAlg}
The concept of guarded negation can be equivalently cast in terms of the Relational Algebra, where negation is expressed by means of the difference operator.  Consider the Relational Algebra (RA) defined over a schema consisting of relation symbols of specified arity using the following primitive operators (cf.~\cite{AHV95} for their semantics).
\begin{description} \setlength{\itemsep}{-0.9mm}
\item[Atomic Relations:] every relation symbol belongs to $\mathrm{RA}$.
\item[Selection:] if $E \in \mathrm{RA}$ has arity $k$ and $1\leq i,j\leq k$, 
   then $\sigma_{i=j}(E)$ belongs to $\mathrm{RA}$ and has arity $k$.
\item[Projection:] if $E\in \mathrm{RA}$ has arity $k$ and $1\leq i_1, \ldots, i_n\leq k$, 
   then $\pi_{i_1 \ldots i_n}(E)$ belongs to $\mathrm{RA}$ and has arity $n$.
\item[Crossproduct:] if $E_1,E_2\in \mathrm{RA}$ have arity $k$ and $n$, 
   respectively, then $E_1\times E_2$ belongs to $\mathrm{RA}$ and has arity $k+n$.
\item[Union, Intersection, and Difference:] if $E_1,E_2\in \mathrm{RA}$ 
   both have arity $k$, then $E_1\cup E_2$, $E_1\cap E_2$ and $E_1-E_2$ 
   belong to $\mathrm{RA}$ and have arity $k$.
\end{description}

Codd's completeness theorem states that $\mathrm{RA}$ has the same expressive power as the domain-independent fragment of first-order logic, cf.~\cite{AHV95}. 
Let us briefly recall here the definition of domain independence for
first-order formulas without constant symbols \cite{AHV95}. 
The \emph{active domain} of a structure $M$ is the set 
$\adom(M)$
of all elements that occur in a tuple belonging to one of the relations. For any structure $M$, let $M'$ be a copy of the same structure but where all elements outside $\adom(M)$ are removed.  A first-order formula $\phi(\textbf{x})$ without constant symbols is \emph{domain-independent} if (i) whenever $M\models\phi(\textbf{a})$, then the tuple $\textbf{a}$ consists of elements of $\adom(M)$, and (ii) for all tuples $\textbf{a}$ consisting of elements of $\adom(M)$, $M\models\phi(\textbf{a})$ if and only if $M'\models\phi(\textbf{a})$. The same definition applies to formulas with fixed-point operators.  Examples of first-order formulas that are \emph{not} domain-independent are $P(x)\lor Q(y)$, $x=x$, and $\neg P(x)$. 

We say that a relation algebra expression is \emph{negation-guarded} if every occurrence of the difference operator is of the form $\pi_{i_1 \ldots i_m}(R)-E$ where $R$ is a relation symbol. We denote by GN-RA the negation-guarded fragment of RA. It can be shown by straightforward inductive translations that GN-RA captures GNFO in the following sense.

\begin{theorem} \label{thm:GNRA}
  Every $k$-ary GN-RA expression is equivalent to a domain-independent 
  GNFO formula $\phi(x_1, \ldots, x_k)$, and vice versa, via a linear translation from GN-RA to GNFO and an exponential translation backwards.
\end{theorem}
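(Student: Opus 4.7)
The plan is to prove the two directions by structural induction, with the asymmetry in size coming from the normal form required for the backward translation.

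\textbf{From GN-RA to GNFO (linear).} Proceed by induction on the construction of $E\in\mathrm{GN\text{-}RA}$, producing for each $E$ of arity $k$ a GNFO formula $\phi_E(x_1,\ldots,x_k)$. The cases for atomic relations, selection $\sigma_{i=j}$, projection $\pi_{\vec i}$, crossproduct, union and intersection are the standard translations into positive first-order logic with equality. The only new case is difference: for $E = \pi_{i_1\ldots i_m}(R) - E_2$ with $R$ of arity $n$ and $E_2$ of arity $m$, set
\[
  \phi_E(\vec x) \;\equiv\; \exists y_1\cdots y_n\Bigl(R(y_1,\ldots,y_n)\land\lnot\phi_{E_2}(y_{i_1},\ldots,y_{i_m})\land\bigwedge_{j=1}^m x_j=y_{i_j}\Bigr).
\]
Here $R(y_1,\ldots,y_n)$ is an atomic guard containing all free variables of $\phi_{E_2}(y_{i_1},\ldots,y_{i_m})$, so the negation is guarded in the sense of GNFO. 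Because each RA operator is translated once, the overall blow-up is linear. A routine induction shows that $\phi_E$ is domain independent and equivalent to $E$.

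\textbf{From domain-independent GNFO to GN-RA (exponential).} First, apply the DNF rewriting recalled in Section~\ref{sec:prel} to obtain an equivalent disjunction of formulas generated by~(\ref{eq:DNF}); this is the step responsible for the exponential blow-up. It then suffices to translate each disjunction-free DNF formula $\phi = \exists\vec y(\zeta_1\land\cdots\land\zeta_m)$ and take a union at the end. I would translate inductively each DNF subformula $\phi$ into a GN-RA expression $E_\phi$ whose columns correspond to the free variables of $\phi$. Positive atoms translate to $\sigma$- and $\pi$-combinations of relation symbols; a conjunction of atoms (or of a mix of atoms and already-translated subformulas) is assembled by crossproduct followed by equi-selections merging shared variables, and an existential quantifier becomes a projection that drops the corresponding column. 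The delicate case is a conjunct of the form $\alpha\land\lnot\psi$: here $\alpha$ is an atom $R(t_1,\ldots,t_n)$ whose free variables include those of $\psi$, and by induction $\psi$ has been translated to some $E_\psi$. Produce the GN-RA expression
\[
  \sigma_{\mathrm{eq}}\bigl(R - \widetilde E_\psi\bigr),
\]
where $\sigma_{\mathrm{eq}}$ encodes the equalities forced by repeated variables/constants in $\alpha$, and $\widetilde E_\psi$ is $E_\psi$ padded out to arity $n$ by crossproduct with the appropriate projections of $R$ itself so that $R\cap\widetilde E_\psi = \{r\in R : \psi \text{ holds on the relevant columns of } r\}$. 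Since the first operand of the difference is literally a relation symbol $R$, this is a legal GN-RA difference. A final projection reads off the free-variable columns. Domain-independence of $\phi$ guarantees that every free variable is anchored in some positive atom on every branch of the disjunction, so $E_\phi$ and $\phi$ agree on every instance.

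\textbf{Main obstacle.} The backward direction is where the real work lies: one must keep the first argument of every difference of the form ``projection of a relation symbol''. The trouble is that guarded negations in GNFO allow the guard $\alpha$ to contain repeated variables, constants, or extra variables that are not free in $\psi$; a naive translation would put a general RA expression on the left of the $-$, violating the GN-RA grammar. The fix I sketched above---padding $E_\psi$ with projections of the guarding relation $R$ itself and then pushing the needed selections and projections outside the difference---handles this, but verifying that the padded $\widetilde E_\psi$ intersects $R$ exactly in the intended tuples, and that the bookkeeping of columns goes through uniformly in the induction, is the main technical burden. Once this invariant is maintained, the equivalence follows by a straightforward induction on DNF subformulas, giving an exponential-size GN-RA expression as claimed.
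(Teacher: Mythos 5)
Your forward translation matches the paper's (the same inductive table, with the same guarded-negation clause for the difference operator), so that direction is fine. The backward direction, however, has a genuine gap: you only treat guarded negations $\alpha\land\lnot\psi$ whose guard $\alpha$ is a relational atom $R(t_1,\ldots,t_n)$. GNFO --- and the DNF grammar~(\ref{eq:DNF}) you reduce to --- explicitly allows the guard to be an \emph{equality} $t_1=t_2$; this is the mechanism by which GNFO negates an arbitrary formula with at most one free variable (as in $x=x\land\lnot\phi(x)$), so the case cannot be dropped. For such a conjunct there is no relation symbol to place on the left of the difference. The paper handles it by first building a GN-RA expression $\textsc{adom}$ (the union of all unary projections of atomic relations), writing the equality-guarded negation as $\pi_{1,1}(\textsc{adom}-\pi_1\sigma_{1=2}\tr_{\mathbf{y}}(\phi))$, and then distributing the union inside $\textsc{adom}$ out of the scope of the difference so that each first argument becomes a projection of a single relation symbol --- and that distribution is precisely where the paper locates the exponential blow-up. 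Your proposal, which attributes the blow-up entirely to the DNF conversion, is silent on this case, and your padding trick (crossproducts with projections of the guard relation $R$) has no analogue when the guard is an equality.

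Beyond that, your route differs from the paper's in a legitimate way: the paper does not convert to DNF but translates inductively under the active-domain semantics (conjunction to intersection, disjunction to union, existential quantification to projection), using $\textsc{adom}$ columns to pad out variables not occurring in a subformula, and invokes domain-independence only once, at the top level. Your substitute for this --- the claim that domain-independence of $\phi$ forces every free variable of every disjunct to be anchored in a positive relational atom --- is unjustified and is doing real work in your argument: a variable may be anchored only through a chain of equalities, or, once equality guards are admitted, not anchored in any relational atom of its disjunct at all; the $\textsc{adom}$ device sidesteps this entirely. If you add the equality-guard case via an explicit active-domain expression and the union-distribution step, your DNF-based argument can be completed, but as written it does not cover all of GNFO.
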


Let $R, S$ be relation symbols of arity 2 and 1, respectively. 
The following RA expressions are \emph{not} negation-guarded.
\begin{trivlist}
\item $(\pi_1(R)\times S) - \pi_{1,1}(R)$ \hfill (distinct pairs from $\pi_1(R)\times S$)
\item  $\pi_{1,4}(\sigma_{2=3}(R\times R)) - R$ \hfill (reachability in two steps, not one)
\item $\pi_{1}(R) - \pi_{1}((\pi_{1}(R)\times S)-R)$  \hfill (the quotient $R\div S$)
\end{trivlist}
In fact, it follows from results in \cite{BtCS11} that none of these expressions is equivalent to a GN-RA expression.

Observe that in the above definition of GN-RA we did not allow for the use of constant values in selections and projections. This was only to simplify presentation. All complexity results that we will present go through in the presence of constant values, cf.~Section~\ref{sec:discussion}.

\section{Guarded Negation in SQL} \label{sec:GNSQL}

In this section, we discuss what it means for an SQL query to have guarded negation. More precisely, we consider a simple, first-order expressively complete, fragment of SQL with a set-based semantics, that we call FO-SQL, and we characterize the queries in this fragment that can be expressed in GNFO.

\vskip -.4em
\paragraph*{FO-SQL: a Simple First-Order Fragment of SQL} 
In this section, unlike in the rest of the paper, we work with named schemas. 
A \emph{named schema} is a collection of relation names, each with an associated list of attribute names. For the discussion below, assume we have a fixed schema, say, consisting of \textsc{book(isbn,author,title)} and \textsc{location(isbn,shelf,number)}. We also fix an infinite supply of ``tuple variables'' (also known as \emph{aliases}, and denoted by $R_1, R_2, \ldots$). 
By a \emph{term} $t$ we will mean an expression of the form $R_i.\textsc{attr}$ where $R_i$ is a tuple variable and \textsc{attr} is an attribute name. 

We consider SQL expressions that are generated by the simple grammar given in Figure~\ref{fig:SQL}, where each $t_i$ is a term, each \textsc{rel$_i$} is a relation name, $\textsc{attr}_1, \ldots, \textsc{attr}_n$ are distinct attribute names, and $R_1, \ldots, R_m$ are distinct tuple variables.  This grammar generates queries that may have free tuple variables, i.e., there may be occurrences of tuple-variables $R_i$ that are not in the scope of a \textsf{select-from-where} clause where they are declared. We will refer to queries with free tuple variables as \emph{open queries} (or \emph{correlated subqueries}), and we refer to queries without free tuple variables as \emph{closed queries} (or \emph{uncorrelated subqueries}). We will denote by $FV(q)$ the set of free tuple variables of $q$. We will be mainly interested in closed queries.  Note, however, that closed queries \emph{are} allowed to contain subexpressions of the form $\textsf{exists}(q)$ or of the form $t \textsf{ in } q$ where $q$ is an open query.

We only consider queries that are well-typed in the sense that each (open or closed) query 
can be consistently assigned a (unique) \emph{type}, which is a list of attribute names, where
\begin{enumerate} \setlength{\itemsep}{0em}
\item the type of a \textsf{select-from-where} query 
      is the set of attribute names specified in its \textsf{select} clause;  
\item the \textsf{union}, \textsf{intersect}, and \textsf{except} operators 
      take as arguments two queries of equal type, yielding a query of the same type. 
\end{enumerate}
Furthermore, terms $R_i.\textsc{attr}$ are only allowed to occur when $\textsc{attr}$ 
belongs to the schema of the relation to which the occurrence of $R_i$ in question is bound, 
and conditions of the form $t \textsf{ in } q$ are allowed only when $q$ is a unary query, 
i.e., when the type of $q$ consists of a single attribute.

By an FO-SQL query, we will mean a closed query satisfying the above requirements. 
Two examples are given in Figure~\ref{fig:GNSQL-examples}. We assume that the reader 
is familiar with the semantics of SQL, and hence omit the formal semantics 
of the fragment FO-SQL. We just mention that we disregard order and duplicates, 
treating relations as \emph{sets} of tuples. It is known that, under this set-based semantics, 
FO-SQL is expressively complete for first-order logic, in the sense of Codd's 
expressive completeness theorem \cite{AHV95,Libkin03}. 
That is, FO-SQL queries have the same expressive power as the domain-independent 
fragment of first-order logic. Since FO-SQL queries are defined in terms of named schemas, 
while the syntax of first-order logic is based on unnamed schemas in which the 
attributes of a relation are identified by natural numbers instead of by attribute names, 
here, we consider a FO-SQL query $q$ of type $\{A_1, \ldots, A_n\}$ to be ``equivalent'' 
to a first-order formula $\phi(x_1, \ldots, x_n)$, containing the relation names $\textsc{rel}_i$ 
occurring in $q$ as relation symbols of appropriate arity, if for every instance $I$ 
and for every $n$-tuple $\textbf{a}$, the tuple $\textbf{a}$ is an answer to $q$ in $I$ 
if and only if $I\models\phi(\textbf{a})$.

The most important features of (full) SQL that are excluded in the above definition of FO-SQL 
are \emph{constants}, \emph{arithmetical comparison}, and \emph{aggregation}. 
We will discuss the importance of these restrictions later in Section~\ref{sec:discussion}.

\vskip -.4em
\paragraph*{GN-SQL:\ the Guarded-Negation Fragment of FO-SQL}

\begin{figure} 
\begin{tabular}{@{}l@{}}
\begin{tabular}{@{}l@{}}
\textsf{select} $A.\textsc{name}$ \textsf{from} $\textsc{author}~ A$ \textsf{where not exists}( \\
~ \textsf{select} $B.\textsc{title}$ \textsf{from} $\textsc{book}~ B$ \textsf{where} $B.\textsc{auth} = A.\textsc{name}$)
\end{tabular}
\\ \\
\begin{tabular}{@{}l@{}}
\textsf{select} $A.\textsc{name}$ \textsf{from} $\textsc{author}~ A$ \textsf{where not exists}( \\
~ \textsf{select} $B.\textsc{title}$ \textsf{from} $\textsc{book}~ B$ \textsf{where not} $B.\textsc{auth} = A.\textsc{name}$)
\end{tabular}
\end{tabular}
\caption{Two examples of FO-SQL queries, the first negation-guarded and the second not.
\label{fig:GNSQL-examples}}
\end{figure}

We say that an SQL query is \emph{negation-guarded} if the following two conditions hold: 
\begin{enumerate} \setlength{\itemsep}{0em}
\item each \textsf{except} operator has as its first argument a simple projection
      and as its second argument an uncorrelated subquery.
\item each \textsf{not} operator has as its argument a condition 
      with at most one free tuple variable. 
\end{enumerate}
Here, by a \emph{simple projection}, we mean a \textsf{select-from-where} query, 
where the \textsf{where}-clause is `\textsf{true}'. 
GN-SQL is the fragment of FO-SQL consisting of all negation-guarded queries.

 To illustrate this definition, 
consider the two queries given in Figure~\ref{fig:GNSQL-examples}. The first query 
involves a single occurrence of negation, which is guarded, since the negated condition 
has only one free tuple variable, namely $A$. The second query, on the other hand, 
is \emph{not} a GN-SQL query, since the second occurrence of negation is not guarded. 
Indeed, the condition $B.\textsc{auth} = A.\textsc{name}$ has two free tuple variables. 

The next theorem states that GN-SQL captures GNFO, in the same way that FO-SQL captures 
full first-order logic, as we discussed above (the same conventions apply, concerning 
what it means for a FO-SQL query to be equivalent to a first-order formula).

\begin{theorem}[GN-SQL is Codd-complete for GNFO]\label{thm:Codd-GNSQL}
Each GN-SQL query can be translated in linear time into an equivalent domain-independent GNFO formula. 
Conversely, each domain-independent GNFO formula can be translated in exponential time into an equivalent GN-SQL query.
\end{theorem}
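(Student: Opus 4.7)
The plan is to prove both directions by structural induction, with the key subtlety that the GNFO-to-GN-SQL direction requires preprocessing the input formula into DNF (the source of the exponential blow-up), while the GN-SQL-to-GNFO direction is a direct compositional translation.

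For the forward direction, I would define by induction two translation functions $\tau_q$ on queries and $\tau_c$ on conditions. A select-from-where query with \textsf{from} $(\textsc{rel}_1\,R_1,\ldots,\textsc{rel}_m\,R_m)$ \textsf{where} $c$ is sent to $\exists \bar{R}_1\cdots\bar{R}_m(\textsc{rel}_1(\bar{R}_1)\land\cdots\land\textsc{rel}_m(\bar{R}_m)\land\tau_c(c))$, with free variables corresponding to the selected terms; \textsf{union}, \textsf{intersect}, and \textsf{except} become $\lor$, $\land$, and guarded negation respectively. For \textsf{except} $E_1-E_2$, the GN-SQL restriction that $E_1$ be a simple projection of a relation ensures that after pulling out existentials a single atom $\textsc{rel}(\bar R)$ containing all free variables of $\tau_q(E_2)$ is available as the GNFO guard. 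For conditions, \textsf{in} and \textsf{exists} are translated via existentials; the critical \textsf{not $c$} case is handled as follows: by the guardedness restriction $c$ has at most one free tuple variable $R$, so all free variables of $\tau_c(c)$ are attributes of $R$, and the atom $\textsc{rel}(\bar R)$ from the enclosing \textsf{from} clause (in scope by well-typedness) can be duplicated to serve as the GNFO guard for $\neg\tau_c(c)$. The translation is clearly linear and preserves domain-independence, since every free variable appears in a relational atom introduced by a \textsf{from} clause.

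For the backward direction, I would first rewrite the given domain-independent GNFO formula $\phi$ into the DNF form of equation~(\ref{eq:DNF}) via the equivalences listed in the preliminaries; this step is exponential in the worst case. The result is a disjunction of formulas of the shape $\exists\bar y\,(\zeta_1\land\cdots\land\zeta_m)$ where each $\zeta_i$ is an atom, an equality, or a guarded negation $\alpha\land\neg\psi$. I would then translate inductively: the top disjunction becomes \textsf{union}; each disjunct becomes a single select-from-where whose \textsf{from} clause contains one fresh tuple variable per relational atom (including the guards of guarded-negation conjuncts), whose \textsf{where} clause collects the equalities and the negated conditions, and whose \textsf{select} clause lists the free variables of the disjunct. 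A guarded-negation conjunct $\alpha\land\neg\psi$ with $\alpha=R(\bar t)$ contributes the binding of $R$ to its table plus a \textsf{not exists}$(q_\psi)$ clause where $q_\psi$ is the recursive translation of $\psi$; because the free variables of $\psi$ are among the attributes of $R$, this negated condition has $R$ as its sole free tuple variable, satisfying the GN-SQL restriction on \textsf{not}.

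The main obstacle is maintaining the structural correspondence between GN-SQL's two guardedness restrictions and GNFO's single-atom guard precisely throughout the induction. I would sidestep the \textsf{except} restriction entirely in the backward direction by always using \textsf{not exists} for guarded negation; the inductive invariant that a guarded-negation subformula $\alpha\land\neg\psi$ translates to a \textsf{not exists} whose body refers externally only to the attributes of $\alpha$'s tuple variable then makes the ``at most one free tuple variable'' condition an automatic consequence of GNFO's own guardedness. With this invariant in place, correctness and the stated complexity bounds follow by a routine induction.
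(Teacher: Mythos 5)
Your forward direction matches the paper's proof: a simultaneous induction on queries and conditions, with the single free tuple variable of a negated condition supplying the relational atom that guards the negation in GNFO. That part is fine.

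The backward direction takes a genuinely different route from the paper, and it has a gap. The paper does \emph{not} go through DNF; it first gives a direct polynomial inductive translation that assumes an auxiliary unary relation $\textsc{adom}$ holding the active domain, and then eliminates $\textsc{adom}$ by replacing each $\textsf{from}$-binding to $\textsc{adom}$ with a union over all relation names and attribute positions --- that elimination, not DNF conversion, is the source of the exponential blow-up. The reason the paper needs $\textsc{adom}$ is exactly the case your translation does not handle: variables that are not covered by any relational atom of their disjunct. In GNFO the guard $\alpha$ of a negation may be an equality, so a DNF disjunct can contain a conjunct such as $y=y\land\neg S(y)$ with $y$ existentially quantified and occurring in no relational atom. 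For a concrete failing input, take $\phi(x)=\exists y\bigl(R(x)\land y=y\land\neg S(y)\bigr)\lor R(x)$: the disjunction is domain-independent (it is equivalent to $R(x)$), it is in DNF, yet in the first disjunct there is no table to which a tuple variable for $y$ could be bound, so your recipe of ``one fresh tuple variable per relational atom, including the guards of guarded-negation conjuncts'' produces no $\textsf{from}$-clause entry for $y$ and the $\textsf{select\mbox{-}from\mbox{-}where}$ translation of that disjunct cannot be formed. Note that domain independence of the whole formula does not pass down to individual disjuncts, so you cannot argue the problem away. Your argument can be repaired by binding every such unanchored variable to the active domain, realized as a union over all unary projections of all relations (equivalently, the paper's $\textsc{adom}$ relation plus its elimination), and by adding an explicit clause for equality-guarded negations $t_1=t_2\land\neg\psi$; but as written the induction is incomplete precisely at the step where the real work lies.
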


It can be shown that the exponential complexity of the translation from GNFO to GN-SQL is in general unavoidable for formulas of the form $(R(x_1)\lor S(x_1)) \land \cdots \land (R(x_n)\lor S(x_n))$.  On the other hand, the proof of Theorem~\ref{thm:Codd-GNSQL} shows that if the schema includes a unary relation $\textsc{adom}$ that is guaranteed to denote the active domain of the instance, then there is a polynomial translation. 

\begin{figure}\small
\begin{center}
\begin{tabular}{@{}l||l|l|l|l|l@{}}
  benchmark       
  & \rotatebox{90}{queries} 
  & \rotatebox{90}{\parbox{16mm}{\raggedright queries with negation$^1$}}
  & \rotatebox{90}{\parbox{16mm}{\raggedright queries with unguarded negation}} 
  & \rotatebox{90}{\parbox{16mm}{\raggedright queries with inequalities$^2$}} 
  & \rotatebox{90}{\parbox{16mm}{\raggedright queries with unguarded inequalities}} \\
\hline
TPC-H      &  22  &  4  &  0  &  3  &  1 \\ 
TPC-DS     &  99  &  8  &  1  &  8  &  7 \\
SkyServer  &  48  &  2  &  0  &  8  &  1
\end{tabular}
\end{center}

{\small $^1$ By negation, we mean any occurrence of \textsf{not} or \textsf{except}.} 

{\small $^2$ An inequality is any occurrence of $<>$ or \textsf{!=}. An inequality is \emph{guarded} if the corresponding negation \textsf{not(\ldots = \ldots)} is guarded.} 

\caption{Usage of negation in SQL benchmarks \label{fig:usage}}
\end{figure}

\subsection{Negation in Practice: a Benchmark Study}

In order to assess the usage of negation in SQL queries in practice, we have studied the workloads of two standard SQL benchmarks, namely TPC-H~\cite{TPCH} and TPC-DS~\cite{TPCDS}. These benchmarks were designed to evaluate and compare the performance of relational database management systems. In addition, we studied the sample queries published on the Sloan Digital Sky Survey (SDSS) SkyServer website \cite{SkyServer}, a selection of actual queries submitted by SDSS users.  For each query, we investigated whether the query uses negation, and, if so, whether the query is negation-guarded. We also studied the use of inequalities, and investigated which of these inequalities can be expressed using guarded negation. The results, given in Figure~\ref{fig:usage}, shows that most queries using negation use only guarded negation.  
We should note here that most queries contain SQL constructs, such as aggregation, that do not belong to FO-SQL. Therefore, the queries are not necessarily expressible in GN-SQL. The statistics in Figure~\ref{fig:usage} are only concerned specifically with the explicit use of \emph{negation}. We also did not investigated the use of other SQL constructs such as \emph{outer joins}, that can, in some sense, be viewed as involving an implicit form of negation.

\section{Guarded Negation in Datalog}
\label{GNDsec}

In this section, we present a powerful variant of Datalog with 
stratified negation, which we call GN-Datalog and which, in terms of
its expressive power, is contained in GNFP.
We first briefly recall the syntax and semantics of Datalog, 
with and without stratified negation.

\newcommand{\IDB}{\textup{IDB}\xspace}
\newcommand{\EDB}{\textup{EDB}\xspace}
\newcommand{\Rules}{\textup{Rules}\xspace}

\begin{definition}[Datalog]
A \emph{Datalog program} is specified by a triple $\Pi = (\EDB^{\Pi},\IDB^{\Pi},\Rules^{\Pi})$, where ${\EDB}^\Pi$ and ${\IDB}^\Pi$ are disjoint sets of relation names, each with an associated arity, and $\Rules^\Pi$ is a finite set of rules of the form 
\[ \phi \leftarrow \psi_1, \ldots, \psi_n\] where $\phi, \psi_1, \ldots, \psi_n$ are atomic formulas of the form $R(x_1, \ldots, x_n)$ with $R\in{\EDB}^\Pi\cup {\IDB}^\Pi$ and $x_1, \ldots, x_n$ a sequence of first-order variables of appropriate length. We refer to $\phi$ as the head of the rule, and $\psi_1, \ldots, \psi_n$ as the body of the rule. 
In addition, we require that (i) every first-order variable occurring in the head 
of a rule must occur in the body, and (ii) the relation in the head of each rule 
must be an IDB relation.

A Datalog query is a pair $(\Pi,Ans)$, where $\Pi$ is a Datalog program and $Ans$ is a union of conjunctive queries over the schema $\EDB^{\Pi}\cup\IDB^{\Pi}$.
The semantics of a Datalog query is defined as follows: first,
if $\Pi$ is a Datalog program, $I$  an instance over the schema $\EDB^{\Pi}$, and $k$ a natural number, then we denote by 
$\Pi^k(I)$ the instance over the schema $\EDB^{\Pi}\cup\IDB^{\Pi}$ containing all facts that can be derived from the facts in $I$ using at
most $k$ rounds of applications of rules of $\Pi$. In addition, we denote by $\Pi^{\infty}(I)$ the union $\bigcup_k \Pi^k(I)$. 
If $q=(\Pi,Ans)$ is a Datalog query and $I$ an instance over the schema $\EDB^{\Pi}$, then we denote by 
$q(I)$ the set of all tuples that are an answer to the query $Ans$ in $\Pi^{\infty}(I)$.
\end{definition}

We remark that the above definition differs slightly from the standard presentation of Datalog. Usually, $Ans$ is required to be a designated relation from $\IDB^{\Pi}$ instead of a union of conjunctive queries. The presentation we use here is convenient as it helps simplify the definitions below. On the other hand, note that this is not essential: a Datalog program can always be extended with an additional IDB relation and with additional rules computing the $Ans$ query.

\begin{definition}[Datalog with Stratified Negation]
A \emph{Datalog$^\neg$ program} is a Datalog program $\Pi$ where the body of each rule may, 
in addition, contain atomic formulas of the form $\neg R(x_1, \ldots, x_n)$ provided 
that $R\in \EDB^{\Pi}$, and provided that each first-order variable occurring in 
the head or body of the rule occurs positively in the body. 
A \emph{Datalog program with stratified negation} is a sequence $\tilde{\Pi} = (\Pi_1, \ldots, \Pi_n)$ of Datalog$^\neg$ programs, called strata, with $n\geq 1$, where for each $i=2\ldots n$, ${\EDB}^{\Pi_i} = {\EDB}^{\Pi_{i-1}}\cup{\IDB}^{\Pi_{i-1}}$. We use
${\EDB}^{\tilde{\Pi}}$ and ${\IDB}^{\tilde{\Pi}}$ to denote 
${\EDB}^{\Pi_1}$ and $\bigcup_{i=1\ldots n} {\IDB}^{\Pi_i}$, respectively. 

A \emph{Datalog query with stratified negation} is a pair $(\tilde{\Pi},Ans)$,  
where $\tilde{\Pi}= (\Pi_1, \ldots, \Pi_n)$ is a Datalog program with stratified negation 
and $Ans$ is a union of conjunctive queries over the schema $\EDB^{\tilde{\Pi}}\cup\IDB^{\tilde{\Pi}}$. 
The semantics of Datalog programs and of Datalog queries extends naturally 
to Datalog with stratified negation, by defining $\tilde{\Pi}^\infty(I)$ 
as $\Pi_n^{\infty}(\Pi_{n-1}^{\infty}(\cdots \Pi_1^{\infty}(I)\cdots))$ 
for $\tilde{\Pi}=(\Pi_1, \ldots, \Pi_n)$.  
\end{definition}

We say that a Datalog program $\Pi$ is \emph{non-recursive} 
if no IDB occurs in the body of any of its rules, and hence, in particular, 
for all instances $I$ we have that $\Pi^{\infty}(I)=\Pi^1(I)$.
We say that a Datalog program with stratified negation is non-recursive 
if it consists entirely of non-recursive strata. %

\begin{definition}[GN-Datalog]
A GN-Datalog program is a Datalog program with stratified negation $\tilde{\Pi}=(\Pi_1, \ldots, \Pi_n)$, where each rule
 \[\phi_0\leftarrow (\neg)\phi_1, \ldots, (\neg)\phi_n ~~~ \in \Rules^{\Pi_k}~~(1\leq k\leq n)\]
is \emph{negation guarded}, meaning that the following holds:
\begin{itemize}
\item[]
    For each atom $\phi_i$ that either occurs negated in the body or is the head, the body includes a positive atom $\phi_j$ containing all first-order variables occurring in $\phi_i$, and $\phi_j$ uses a relation from $\EDB^{\Pi_k}$.%
\footnote{\small To understand why this is the appropriate definition of negation guardedness, observe that a rule of the form $\phi\leftarrow \psi_1, \ldots, \psi_n$ expresses that $\neg\exists\textbf{x}(\psi_1\land\cdots\land\psi_n\land\neg\phi)$, i.e., the head of the rule plays the same role as a negated atom in the body. }

\end{itemize}
A \emph{GN-Datalog query} is a Datalog query with stratified negation, 
where each rule is negation guarded. Note that this requirement concerns only the rules; 
the answer query $Ans$ can be any union of conjunctive queries.
\end{definition}

\begin{theorem} \textup{\bf(Non-recursive GN-Datalog is Codd-complete for GNFO)} \label{thm:Codd-nr-GN-Datalog}
  Each non-recursive GN-Datalog query is equivalent to a
  domain-independent GNFO formula, and vice versa, via exponential 
translations.
\end{theorem}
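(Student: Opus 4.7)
The plan is to use the DNF characterization of GNFO as a bridge: a DNF GNFO formula is a UCQ whose conjuncts are atoms or guarded-negation subformulas, which maps naturally to a GN-Datalog query in which the answer UCQ $Ans$ performs the outer combination and each lower stratum computes the subformulas that sit inside a guarded negation. Strata will correspond to negation nesting depth, and the exponential cost arises from DNF conversion in one direction and from inlining IDB definitions in the other.

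For the GN-Datalog $\to$ GNFO direction I would induct on the stratum index $k$. Assume every IDB from strata $<k$ has already been replaced by a domain-independent GNFO formula over the original EDB schema. A rule $P(\textbf{x}) \leftarrow \psi_1,\ldots,\psi_p,\neg\chi_1,\ldots,\neg\chi_q$ of stratum $k$ translates to the GNFO disjunct $\exists\textbf{y}(\bigwedge_j \psi_j \land \bigwedge_i(\gamma_i \land \neg\chi_i))$, where $\gamma_i$ is the EDB-atom guard of $\chi_i$ supplied by GN-Datalog's guardedness condition; head-guardedness of the rule ensures that the result is domain-independent in $\textbf{x}$. Disjoining over all rules for $P$ yields $\phi_P(\textbf{x})$, and substituting the inductively obtained formulas for the lower-stratum IDB symbols appearing in $\phi_P$ preserves GNFO, since the outer guards $\gamma_i$ are atoms untouched by substitution. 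Applying the same substitution to $Ans$ produces the final GNFO formula, at exponential cost from duplicated substitutions.

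For the reverse direction I first convert $\phi(\textbf{x})$ to DNF, paying the exponential cost, and then walk the DNF tree from innermost negation outwards. Stratum $1$ installs rules $\textsc{adom}(x) \leftarrow R(\ldots,x,\ldots)$ for every EDB relation $R$ and argument position, so $\textsc{adom}$ becomes an EDB atom for every subsequent stratum. For each guarded-negation occurrence $\alpha(\textbf{w}) \land \neg\psi(\textbf{w})$ in the DNF tree I then introduce, in a stratum above those already built for $\psi$, two fresh IDBs: first $P_\psi^\alpha(\textbf{w})$, defined by one rule per DNF disjunct of $\psi$ with $\alpha(\textbf{w})$ conjoined to its body (nested guarded negations being replaced by the previously-built $Q$-atoms, so the body contains no negation); and second $Q_\psi^\alpha(\textbf{w})$ defined by the single rule $Q_\psi^\alpha(\textbf{w}) \leftarrow \alpha(\textbf{w}), \neg P_\psi^\alpha(\textbf{w})$, which semantically represents $\alpha \land \neg\psi$ because $P_\psi^\alpha$ represents $\alpha \land \psi$. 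The guard $\alpha$, which by GNFO's syntax contains every variable in $\textbf{w}$, simultaneously covers head and negated atom. Finally $Ans$ is obtained from $\phi$'s DNF by replacing every top-level $\alpha \land \neg\psi$ with $Q_\psi^\alpha(\textbf{w})$; being a UCQ it is not subject to guardedness and can freely combine positive atoms.

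The main obstacle is GN-Datalog's head-guardedness condition: it demands that a single EDB atom in the body cover every head variable, so a disjunct like $R(x) \land S(y)$ cannot be encoded by a single rule. The key idea that makes the above construction work is to push all outer combinations into $Ans$ (which is not subject to guardedness) and to arrange every rule head to match the variable tuple $\textbf{w}$ of some guarded negation, whose guard $\alpha$ syntactically supplies the required single-atom cover. Any free variable in a rule body that is still uncovered by a positive EDB conjunct is range-restricted by conjoining $\textsc{adom}(x)$, which is semantically neutral by domain-independence of the source formula; equality conjuncts, which GN-Datalog bodies do not permit, are eliminated by variable substitution. Once these syntactic gadgets are in place, the inductions go through by a routine semantic check.
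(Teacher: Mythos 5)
Your proposal is correct and follows essentially the same route as the paper's proof: an induction on strata with substitution of lower-stratum IDB definitions for the GN-Datalog~$\to$~GNFO direction, and, for the converse, DNF conversion followed by introducing one pair of IDBs per guarded-negation subformula (your $P^\alpha_\psi$/$Q^\alpha_\psi$ are the paper's $X_{\alpha\land\chi}$/$X_{\alpha\land\neg\chi}$) with all outer disjunction pushed into $Ans$. The only cosmetic difference is that you range-restrict and handle equality guards via the unary $\textsc{adom}$ IDB where the paper additionally introduces a binary IDB $X_{=}$ for equality guards; both devices play the same role.
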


  The translation from non-recursive GN-Datalog to GNFO
  given in the proof of Theorem~\ref{thm:Codd-nr-GN-Datalog} can be extended in a straightforward manner to a translation from GN-Datalog to the extension of GNFP with simultaneous fixed-point operators. Since simultaneous fixed-point operators can be eliminated (at the cost of an additional exponential blow-up, cf.~Section~\ref{sec:prel}), we obtain following:

\begin{theorem}%
\label{thm:Codd-GN-Datalog}
 Each GN-Datalog query is equivalent to a domain-independent alternation-free GNFP formula. 
\end{theorem}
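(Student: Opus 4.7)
The plan is to follow the roadmap sketched in the paragraph preceding the theorem: first extend the translation of Theorem~\ref{thm:Codd-nr-GN-Datalog} to a direct translation from full (recursive) GN-Datalog into GNFP augmented with simultaneous least fixed-point operators, and then invoke the remark in Section~\ref{sec:prel} that such simultaneous fixed points can be eliminated at an exponential blow-up.

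The translation proceeds by induction on the number of strata. Given $\tilde\Pi=(\Pi_1,\ldots,\Pi_n)$, assume inductively that strata $\Pi_1,\ldots,\Pi_k$ have been translated so that each IDB symbol $X$ they define is associated with a GNFP formula $\phi_X(\textbf{x})$. For stratum $\Pi_{k+1}$, collect for each IDB $Y_i\in\IDB^{\Pi_{k+1}}$ all rules whose head is $Y_i$, disjoin their bodies, and apply the non-recursive translation of Theorem~\ref{thm:Codd-nr-GN-Datalog} rule-by-rule. The result has the exact shape of the system $S$ displayed in Section~\ref{sec:prel}: each equation reads $Y_i(\textbf{x}_i)\leftarrow\alpha_i(\textbf{x}_i)\land\phi_i(Y_1,\ldots,Y_m,\textbf{x}_i)$, where $\alpha_i$ is supplied directly by the negation-guardedness condition on GN-Datalog rules, which promises, for the head and for every negated body atom, a positive $\EDB^{\Pi_{k+1}}$ atom containing all the relevant variables. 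One then forms the simultaneous \LFP $[\LFP_{Y_i} S]$, substitutes back the previously-obtained $\phi_X$ for each occurrence of a previous-stratum IDB $X$ (positively, or via guarded negation where $X$ was negated), and appends the answer query $\mathit{Ans}$ on top as a UCQ over the enriched vocabulary. No GFP is ever introduced, and each nested \LFP occurs positively inside outer ones, so the resulting formula is alternation-free; it is domain-independent because every first-order variable of every rule head is anchored by a positive body atom, and because every guarded negation we introduce has its free variables pinned by its EDB guard.

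The main technical subtlety lies in keeping the guards syntactically legal: a generalized guard in GNFP must be a disjunction of existentially quantified \emph{atomic} formulas, whereas $\alpha_i$ at stratum $k{+}1$ may mention a previous-stratum IDB $X$ whose translation $\phi_X$ is itself a non-atomic \LFP expression. I would handle this by treating the previous-stratum IDBs as fresh ``pseudo-EDB'' relation symbols throughout the translation of the current stratum, so that the simultaneous-\LFP block produced at stratum $k{+}1$ is syntactically a legitimate GNFP(+simultaneous\LFP) formula over the enriched vocabulary; the pseudo-EDB symbols are then eliminated only afterwards by substituting in the defining \LFP formulas inherited from earlier strata. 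Because $X$ appears in the current stratum either positively (directly or inside a guard) or under a guarded negation whose guard atom is a genuine EDB of $\Pi_{k+1}$, and because $\phi_X$ is itself a GNFP formula, the substitution preserves the GNFP syntactic shape. The remaining work is largely bookkeeping.

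Finally, one applies the simultaneous-fixed-point elimination described in Section~\ref{sec:prel} to convert the resulting formula into a standard, single-variable GNFP formula (at the price of an additional exponential blow-up). The output is an alternation-free, domain-independent GNFP formula equivalent to the original GN-Datalog query, which is exactly the statement of the theorem.
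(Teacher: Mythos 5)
Your proposal follows exactly the route the paper takes: extend the stratum-by-stratum translation of Theorem~\ref{thm:Codd-nr-GN-Datalog} to recursive strata by packaging each stratum as a simultaneous least fixed point (with the guards supplied by the negation-guardedness condition), substitute away lower-stratum IDBs, and then eliminate the simultaneous fixed-point operators at an exponential cost using the remark in Section~\ref{sec:prel}; the paper itself offers no more detail than this. The one place where your argument is thinner than it should be is precisely the subtlety you flag and then dismiss as bookkeeping: when a previous-stratum IDB atom $X(\mathbf{z})$ occurs \emph{in a guard position} (guarding a negation or a fixed point), substituting its defining formula $\phi_X$ does not literally preserve GNFP syntax, because guards must be disjunctions of existentially quantified \emph{atomic} formulas over atomic relations, and $\phi_X$ is an LFP expression. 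The missing observation is that, by head-guardedness of the rules defining $X$ (applied inductively down to stratum~1), $\phi_X(\mathbf{z})$ is equivalent to $\beta_X(\mathbf{z})\land\phi_X(\mathbf{z})$ for a generalized guard $\beta_X$ built from base-EDB atoms covering $\mathbf{z}$; one then uses $\beta_X$ --- with its disjunctions and existential quantifiers pulled out, as described in Section~\ref{sec:prel} --- as the actual syntactic guard, while $\phi_X$ itself is conjoined only in positive (unguarding) positions. With that repair your induction goes through and yields the alternation-free, domain-independent GNFP formula claimed.
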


The translation from non-recursive GN-Datalog to GNFO provided 
by Theorem~\ref{thm:Codd-nr-GN-Datalog} involves an exponential
blow-up, due to an elimination of subformula sharing. 
The translation from GN-Datalog to GNFP provided by 
Theorem~\ref{thm:Codd-GN-Datalog} involves another exponential blow-up, 
due to the elimination of simultaneous fixed-point operators. 
These sources of exponential complexity can be avoided 
(i) 
if we transcribe GN-Datalog queries into GNFP formulas 
over a larger schema (containing a relation symbol not only 
for each EDB of the GN-Datalog query, but also for each IDB), 
and (ii) freely use simultaneous fixed-point operators in the GNFP
formula. 
More precisely, the proof of Theorem~\ref{thm:Codd-GN-Datalog} can be 
adapted in a straightforward manner to show the following result, 
which will be useful later on
 (where, for two schemas $S\subseteq \widehat{S}$, an \emph{$\widehat{S}$-expansion}
of an instance $I$ over $\widehat{S}$ is an instance over $\widehat{S}$ that agrees with $I$ on all facts over $S$).

\begin{theorem}\label{thm:Codd-GN-Datalog-PTIME}
  For every $k$-ary GN-Datalog query $q$ over a schema $S$ one can
  compute in polynomial time a GNFP sentence $\phi_q$ and a GNFP
  formula 
$\psi_q(x_1, \ldots, x_k)$, 
both with simultaneous fixed point operators, and over a possibly larger schema $\widehat{S}$, such that
\begin{trivlist}
\item 1. each instance $I$  has a unique $\widehat{S}$-expansion $\widehat{I}$ satisfying $\phi_q$.
\item 2. for all instances $I$ and $k$-tuples $\textbf{a}$, $\textbf{a}\in q(I)$ iff $\widehat{I}\models\psi_q(\textbf{a})$.
\end{trivlist}
\end{theorem}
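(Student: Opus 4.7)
The plan is to extend the schema $S$ to $\widehat{S}$ by adjoining a fresh relation symbol of matching arity for each IDB of $q = (\tilde{\Pi}, Ans)$ (abusing notation, I continue to use the same symbols), with the intent that $\widehat{I}$ should interpret each new symbol by its value in $\tilde{\Pi}^\infty(I)$. Proceeding stratum by stratum, I would apply the construction in the proof of Theorem~\ref{thm:Codd-GN-Datalog}, but stopping \emph{before} the elimination of simultaneous fixed points, to obtain for each IDB $R$ a GNFP formula $\chi_R(\textbf{x})$ over $\widehat{S}$ that computes the LFP value of $R$ using one simultaneous LFP per stratum and referencing the $\widehat{S}$-symbols of strictly lower strata as ordinary atoms. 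Retaining simultaneous fixed-point operators keeps each $\chi_R$ of polynomial size. The answer formula $\psi_q(\textbf{x})$ is simply $Ans$ re-read as a UCQ over $\widehat{S}$.

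The crux is the sentence $\phi_q$, which must force every $\widehat{S}$-expansion $\widehat{I}$ of $I$ to interpret each new symbol $R$ by exactly $\chi_R$. I would express this as the conjunction, over all IDBs $R$, of two inclusions. The inclusion ``$R\subseteq\chi_R$'' is easy to guard,
\[
   \neg \exists \textbf{x}\, \bigl(R(\textbf{x}) \land \neg\chi_R(\textbf{x})\bigr),
\]
since the atomic $R(\textbf{x})$ in $\widehat{S}$ supplies a legal guard for the inner negation. The reverse inclusion ``$\chi_R\subseteq R$'' is delicate, because $\chi_R$ is not atomic and cannot directly guard $\neg R(\textbf{x})$. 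Here I would exploit the defining property of GN-Datalog: every rule with head $R(\textbf{x})$ contains in its body an EDB atom $\beta$ mentioning all the variables of $\textbf{x}$. For each such guard atom $\beta(\textbf{x},\textbf{y})$ appearing in some rule for $R$, I would assert
\[
   \neg \exists \textbf{x}\textbf{y}\, \bigl(\beta(\textbf{x},\textbf{y}) \land \chi_R(\textbf{x}) \land \neg R(\textbf{x})\bigr),
\]
in which $\beta$ legitimately guards $\neg R(\textbf{x})$. Conjoining over all such $\beta$ yields ``$\chi_R\subseteq R$'', since every tuple in $\chi_R$ is derived by some rule for $R$ and hence witnesses one of the $\beta$'s.

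For verification, the expansion $\widehat{I}$ is unique: processing strata bottom-up, each $\chi_R$ is deterministically determined from $I$ together with the already-pinned lower-stratum relations, and $\phi_q$ pins $R$ to that value. Correctness of $\psi_q$ reduces to the observation that $\widehat{I}$ interprets every IDB by its value in $\tilde{\Pi}^\infty(I)$, so evaluating the UCQ $Ans$ over $\widehat{I}$ yields exactly $q(I)$. Both $\phi_q$ and $\psi_q$ are polynomial in $q$ because (i) the per-stratum simultaneous fixed points inside $\chi_R$ are never flattened, and (ii) the additional assertions in $\phi_q$ are linear in the number of rules of $\tilde{\Pi}$.

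The main obstacle, as above, is the direction $\chi_R\subseteq R$: without the negation-guardedness condition on GN-Datalog rules there would be no atomic guard available for $\neg R(\textbf{x})$ in GNFP, and one would be forced to eliminate the simultaneous fixed points, incurring precisely the exponential blow-up this theorem is designed to avoid. The presence of an EDB guard in every GN-Datalog rule is exactly what rescues the polynomial encoding.
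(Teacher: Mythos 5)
Your proposal is correct and follows exactly the route the paper intends (the paper only remarks that the proof of Theorem~\ref{thm:Codd-GN-Datalog} ``can be adapted in a straightforward manner''): expand the schema with a symbol per IDB, keep the per-stratum simultaneous fixed points to stay polynomial, and axiomatize the expansion by two guarded inclusions, with the EDB guard required in every GN-Datalog rule head supplying the atomic guard for $\neg R(\textbf{x})$ in the direction $\chi_R\subseteq R$. Your write-up in fact supplies the details the paper leaves implicit, and the crux you identify is the right one.
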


\section{Relationships with Existing Languages}\label{sec:existing}

\emph{Monadic Datalog} is a well-known Datalog fragment that combines
an interesting level of expressiveness with good algorithmic behavior 
thanks to a tight connection with tree automata, which 
also make monadic Datalog suitable for a number of applications, 
e.g.~\cite{GottlobKoch2004}. It also stands out a fragment 
for which the boundedness problem is decidable \cite{CosmadakisGaKaVa88}, 
Theorem~\ref{monDLdecthm} below. 
Monadic Datalog does not allow any form of negation and since all IDB 
predicates are unary, guardedness of rule heads is guaranteed, 
so that monadic datalog rules are trivially negation guarded. 
We will show that boundedness remains decidable for GN-Datalog.%

Datalog LIT is a fragment of stratified Datalog whose model checking 
has linear-time data complexity~\cite{GottlobGV02}. 
Each Datalog LIT rule must either contain in its body as `guard' 
a positive literal containing all variables occurring in the rule, 
or must solely be comprised of unary literals (including its head).  
While the `guard' of guarded rules need not be an EDB atom, 
\cite{GottlobGV02} shows that every Datalog LIT program can 
(in exponential time) be transformed into 
an equivalent one having only EDB atoms as guards. 
The latter are trivially negation guarded. 
Every Datalog LIT program is thus equivalent to a GN-Datalog program. 

GNFO subsumes a number of formalisms having currency in ontological reasoning, 
such as the linear- and guarded tuple generating dependencies (tgds) underlying 
the recently promoted Datalog$^\pm$~\cite{CGL09dl} framework and the more 
general frontier-guarded tgds \cite{Baget11ijcai} that subsume the description 
logics DL-Lite$_{R}$ (which captures RDF Schema), $\mathcal{ELI}$,  
and $\mathcal{ELH}^{dr}_{\bot}$ \cite{Baget11AI}, which is the core 
of the proposed OWL-EL profile of the OWL 2.0 ontology language.
GNFO can encode query answering and containment assertions involving 
such specifications of constraints or TBoxes. 
A tgd is a sentence %
\begin{equation} \label{eq:tgd}
  \forall \textbf{x},\textbf{y} \ \phi(\textbf{x},\textbf{y}) 
        \limp \exists \textbf{z} \ \psi(\textbf{y},\textbf{z})
\end{equation}
where both $\phi$ and $\psi$, called the body and the head, respectively, 
of the tgd rule, are conjunctions of positive atoms. 
When working under OWA one can assume, as a matter of convenience, 
that the head of every tgd is a single atom.  
A tgd is linear if $\phi$ consists of a single atom; it is guarded 
if $\phi$ contains as conjunct an atom $R(\textbf{x},\textbf{y})$, the `guard', 
in which all of the variables of the body occur together; and it is 
frontier-guarded if the body contains an atom $P(\textbf{y})$ in which 
all of the variables shared by the body and the head of the rule occur together. 
Every frontier-guarded tgd naturally translates to a GNFO sentence.  

Other query languages that can be viewed as fragments of GNFO 
include the semi-join algebra~\cite{Leinders05}, as well as
Unary Conjunctive View Logic (UCV) and Core XPath,
cf.~\cite{tCS11}.

\section{Query Containment}\label{sec:containment}

We now exploit the connection with GNFO and GNFP to show that query containment is decidable for GN-SQL and for GN-Datalog.  Recall that a query $q$ is satisfiable if there exists an instance $I$ such that the set of answers $q(I)$ is non-empty, and that a query $q_1$ is contained in a query $q_2$ if, for all instances $I$, $q_1(I)\subseteq q_2(I)$. The satisfiability problem can be viewed as (the complement of) a special case of the query containment problem, where the second query $q_2$ is any fixed unsatisfiable query.

\begin{theorem}\label{thm:containment}
Query containment is 2ExpTime-complete for both GN-SQL queries and GN-Datalog queries.
Hardness holds already for satisfiability of non-recursive GN-Datalog, and GN-SQL, 
over a fixed EDB schema.
\end{theorem}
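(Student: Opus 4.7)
For the upper bound, my plan is to reduce query containment to the known 2ExpTime-decidable entailment problems for GNFO and GNFP. For GN-SQL I would apply the linear translation of Theorem~\ref{thm:Codd-GNSQL} to turn $q_1,q_2$ into domain-independent GNFO formulas $\phi_1(\textbf{x}),\phi_2(\textbf{x})$, so that $q_1\subseteq q_2$ becomes the GNFO entailment $\phi_1(\textbf{x})\models\phi_2(\textbf{x})$, decidable in 2ExpTime by Theorem~\ref{thm:GNFO}(1). For GN-Datalog, the direct composition of Theorems~\ref{thm:Codd-GN-Datalog} and~\ref{thm:GNFO} would incur two unavoidable exponential blow-ups (eliminating subformula sharing and then eliminating simultaneous fixed points); instead, I will invoke Theorem~\ref{thm:Codd-GN-Datalog-PTIME} to produce, in polynomial time, GNFP formulas $\phi_{q_i}$ and $\psi_{q_i}(\textbf{x})$ with simultaneous fixed-point operators over extended schemas $\widehat{S}_i$, choosing disjoint auxiliary symbols for $i=1,2$. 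Then $q_1\subseteq q_2$ holds iff
\[
   \phi_{q_1}\wedge\phi_{q_2}\wedge\psi_{q_1}(\textbf{x})\models\psi_{q_2}(\textbf{x}),
\]
which is again 2ExpTime by Theorem~\ref{thm:GNFO}(1) together with the footnote extending that bound to the simultaneous case.

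For the matching lower bound, I will polynomially reduce from the 2ExpTime-hard satisfiability problem for GNFO sentences over a fixed finite signature (Theorem~\ref{thm:GNFO}(1)). Given such a sentence $\phi$, introduce a fresh IDB symbol $P_\psi$ of arity $|\mathrm{FV}(\psi)|$ for every subformula $\psi$ of $\phi$, together with a unary auxiliary IDB $\textsc{adom}$ defined from the EDBs by one rule per position of each EDB relation. Traversing $\phi$ bottom-up, each Boolean connective and existential quantifier is encoded by one or two negation-free rules referring to the IDBs of the immediate subformulas; each guarded negation $\alpha\wedge\neg\chi$ is placed in a new stratum and encoded by the rule
\[
    P_{\alpha\wedge\neg\chi}(\textbf{y})\leftarrow\alpha'(\textbf{y}),\ \neg P_\chi(\textbf{x}),
\]
where $\alpha'=\alpha$ if $\alpha$ is a relational atom and $\alpha'=\textsc{adom}(x)$ if $\alpha$ is an equality guard $x=x$. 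Since $\alpha'$ is a positive atom of the current stratum's EDB containing all free variables of $\chi$, the rule is negation-guarded. Taking the answer query to be $\exists\textbf{y}\,P_\phi(\textbf{y})$, I obtain a polynomial-size non-recursive GN-Datalog query over the (fixed) EDB signature of $\phi$, satisfiable iff $\phi$ is.

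For GN-SQL hardness I would exploit the fact that the Datalog program just constructed follows the syntax \emph{tree} of $\phi$, so each $P_\psi$ occurs at most once outside its own defining rule. Inlining therefore produces a single nested FO-SQL expression of polynomial size in which every \textsf{not} is applied to a condition with at most one free tuple variable and every \textsf{except} sits against a simple projection (either $\textsc{adom}$ or an EDB relation), which is precisely a GN-SQL query. The main obstacle throughout is this final polynomiality step: the generic GNFO-to-GN-SQL translation of Theorem~\ref{thm:Codd-GNSQL} is exponential because of the DNF conversion, so the lower-bound reduction cannot go through that translation as a black box; polynomiality must be recovered either by exploiting the tree structure of the reduction above or, equivalently, by invoking the polynomial active-domain-based variant of the translation noted after Theorem~\ref{thm:Codd-GNSQL}.
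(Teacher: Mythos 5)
Your upper-bound argument is correct and is essentially the paper's: GN-SQL containment goes through the linear translation to GNFO, and GN-Datalog containment goes through Theorem~\ref{thm:Codd-GN-Datalog-PTIME} with disjoint auxiliary symbols; your entailment $\phi_{q_1}\wedge\phi_{q_2}\wedge\psi_{q_1}(\textbf{x})\models\psi_{q_2}(\textbf{x})$ is just a rearrangement of the one used in the paper.

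The lower bound, however, has a genuine gap. Your ``one IDB per subformula'' encoding does not produce a GN-Datalog program. Negation-guardedness requires that the \emph{head} of every rule be covered by a single positive EDB atom of the current stratum containing all of the head's variables. For a subformula such as $R(x,y)\land S(y,z)$ the rule $P_{R\land S}(x,y,z)\leftarrow P_{R}(x,y),\,P_{S}(y,z)$ has no such guard, and no single atom (certainly not the unary $\textsc{adom}$) can supply one; the same problem arises for disjunctions and for existential subformulas with several free variables. This is precisely why the paper does \emph{not} assign IDBs to arbitrary subformulas: in the proof of Theorem~\ref{thm:Codd-nr-GN-Datalog} only subformulas of the form $\alpha\land\neg\chi$ get IDBs (so that $\alpha$ itself guards the head), and conjunction, disjunction and existential quantification are flattened into UCQ rule bodies --- which presupposes the formula is in DNF. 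Since the DNF conversion is exponential in general, the paper must separately establish (Proposition~\ref{prop-sat-lower-bound}) that GNFO satisfiability is already 2ExpTime-hard for formulas \emph{in DNF} over a fixed schema; this is the real content of the lower bound and requires a nontrivial modification of the UNFO hardness construction (replacing the $\bigwedge_i(P_i(x)\leftrightarrow P_i(y))$ subformulas, whose naive DNF is exponential, by a disjunction-free gadget). Your reduction cites only the general hardness of Theorem~\ref{thm:GNFO}(1) and so never supplies this ingredient. The same issue infects your GN-SQL argument, since it inlines the (non-well-formed) Datalog program; your fallback via the polynomial \textsc{adom}-based translation is closer to the paper's route, but it too needs the hard formulas to be in DNF, to be domain-independent, and to force a unary active-domain predicate, all of which Proposition~\ref{prop-sat-lower-bound} arranges and your proposal does not.
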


The 2ExpTime upper bounds for GN-SQL follow directly from Theorem~\ref{thm:Codd-GNSQL} 
and Theorem~\ref{thm:GNFO}. The 2ExpTime upper bounds for GN-Datalog do not follow 
directly from Theorem~\ref{thm:Codd-GN-Datalog} and Theorem~\ref{thm:GNFO}, 
due to the exponential complexity of the translation from GN-Datalog to GNFP involved. 
However, it follows using Theorem~\ref{thm:Codd-GN-Datalog-PTIME}: 
let $q_1, q_2$ be $k$-ary GN-Datalog queries ($k\geq 0$), and 
let $\phi_1, \psi_1(x_1, \ldots, x_k)$ and $\phi_2, \psi_2(x_1, \ldots, x_k)$ be 
the GNFP-formulas with simultaneous fixed point operators 
obtained by Theorem~\ref{thm:Codd-GN-Datalog-PTIME}. 
We may assume without loss of generality that the only relation symbols that $\phi_1, \psi_1$ and $\phi_2, \psi_2$ have in common are the relation symbols that appear in $q_1$ and $q_2$. 
It follows that $q_1$ is contained in $q_2$ if and only if $\phi_1\land\psi_1(x_1, \ldots, x_k)\models \phi_2\to \psi_2(x_1, \ldots, x_k)$. This gives us the desired result, since, as we explained in Section~\ref{sec:prel}, the 2ExpTime upper bound for GNFP entailment from Theorem~\ref{thm:GNFO} extends to the case with simultaneous least-fixed point operators. 
The lower bounds are obtained by adapting the proof of an 2ExpTime-hardness result for a fragment of GNFO in \cite{tCS11}. 

Theorem~\ref{thm:containment} generalizes the known decidability result for monadic datalog 
and unions of conjunctive queries~\cite{CosmadakisGaKaVa88}.
In addition, it easily implies the decidability
of containment of Datalog queries in Unions of Conjunctive
Queries~\cite{ChaudhuriVardi97}. This can be seen as follows.
For each Datalog query $q$, let $\widehat{q}$ be the GN-Datalog query 
obtained from $q$ by guarding each rule using an additional conjunct 
that is a fresh EDB relation. Then for each UCQ $q'$ over the original
schema, we have that $q$ is contained in $q'$ if and only if $\widehat{q}$
is contained in $q'$. One direction follows directly from the fact
that $\widehat{q}$ is contained in $q$. For the other direction, note 
that every counterexample $I$ to the containment of $q$ in $q'$  gives
rise to a counterexample $I'$ to the containment of $\widehat{q}$
in $q'$. The instance $I'$ in question extends I by interpreting each 
new EDB relation as the total relation containing all tuples over the 
active domain of $I$.

As a direct consequence of the finite model property of GNFO \cite{BtCS11} 
and of Theorems~\ref{thm:Codd-GNSQL} and~\ref{thm:Codd-nr-GN-Datalog}, respectively,
we find that query containment is \emph{finitely controllable} 
for GN-SQL queries and for non-recursive GN-Datalog queries. 
By this we mean that one query is contained in an other on finite instances 
if, and only if, the containment holds on unrestricted instances (a finite model property). 

\begin{theorem} \label{thm:fincontrol}
Satisfiability and containment are finitely controllable for GN-SQL and 
for non-recursive GN-Datalog. %
\end{theorem}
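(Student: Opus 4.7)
\medskip\noindent\textbf{Proof proposal.}
The plan is to reduce finite controllability of GN-SQL and non-recursive GN-Datalog to the finite controllability of GNFO provided by Theorem~\ref{thm:GNFO}(2). The bridge is supplied by the Codd-completeness results, namely Theorem~\ref{thm:Codd-GNSQL} and Theorem~\ref{thm:Codd-nr-GN-Datalog}, which translate each query $q$ of the respective language into an equivalent \emph{domain-independent} GNFO formula $\phi_q(\textbf{x})$.

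First I would take care of satisfiability. Given a query $q$, let $\phi_q(\textbf{x})$ be the domain-independent GNFO translation. Unsatisfiability of $q$ over finite instances is the statement that $\forall\textbf{x}\,\neg\phi_q(\textbf{x})$ holds on every finite structure (since on any finite instance $I$ there is a tuple $\textbf{a}$ with $I\models\phi_q(\textbf{a})$ iff $\textbf{a}\in q(I)$), and by domain independence the passage between structures and their active-domain restrictions does not affect whether $\phi_q$ has a satisfying tuple. Applying Theorem~\ref{thm:GNFO}(2), which asserts that GNFO-satisfiability on finite structures coincides with GNFO-satisfiability on unrestricted structures, yields that $q$ is satisfiable over finite instances iff $q$ is satisfiable over unrestricted instances.

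Next I would treat containment. Given two queries $q_1,q_2$ of the same arity, let $\phi_1(\textbf{x}),\phi_2(\textbf{x})$ be their domain-independent GNFO translations. The finite containment $q_1\subseteq q_2$ is equivalent, via the equivalence $q_i(I)=\{\textbf{a}:I\models\phi_i(\textbf{a})\}$, to the finite entailment $\phi_1(\textbf{x})\models_{\text{fin}}\phi_2(\textbf{x})$ over finite instances; a second appeal to domain independence (together with the fact that every finite structure restricts to a finite instance interpreting only the active domain, and conversely every instance is a structure) shows this is the same as finite entailment over finite structures. The second clause of Theorem~\ref{thm:GNFO}(2) then gives that this finite entailment coincides with unrestricted entailment, which translates back to containment over arbitrary instances.

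I do not expect significant obstacles; the only point requiring care is the back-and-forth between instances (whose domain equals the active domain) and arbitrary structures. This is precisely the issue that domain independence of $\phi_q$ is designed to handle: on a structure $M$ and its active-domain restriction $M'$, the formula $\phi_q$ has the same satisfying tuples, all of which lie in $\adom(M)$. Together with the translations and Theorem~\ref{thm:GNFO}(2), this yields both finite controllability of satisfiability and of containment for GN-SQL and for non-recursive GN-Datalog. Note that the argument does \emph{not} extend to recursive GN-Datalog, because GNFP itself lacks the finite model property (Theorem~\ref{thm:GNFO}(2)), consistent with the statement of the theorem.
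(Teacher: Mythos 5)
Your proposal is correct and follows essentially the same route as the paper: the paper derives Theorem~\ref{thm:fincontrol} directly from the finite model property of GNFO (Theorem~\ref{thm:GNFO}(2)) together with the Codd-completeness translations of Theorems~\ref{thm:Codd-GNSQL} and~\ref{thm:Codd-nr-GN-Datalog}, which is precisely your argument. Your explicit handling of the instance-versus-structure issue via domain independence, and your remark on why the argument fails for recursive GN-Datalog, are correct elaborations of details the paper leaves implicit.
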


\section{Query Answering}\label{sec:answering}

\subsection{(Closed-World) Query Evaluation}\label{sec:evaluation}

Since GN-SQL and non-recursive GN-Datalog admit translations into first-order logic, 
the \emph{data complexity} of query evaluation is in AC$^0$ for both query languages. 
Similarly, since GN-Datalog is contained in the fixed-point logic FO(LFP), 
the data complexity of query evaluation is in PTime.
In fact, there is a GN-Datalog query (a monadic Datalog query) for which 
query evaluation is PTime-hard in terms of data complexity~\cite{GottlobGV02}. 

In what follows we consider the \emph{combined complexity} of query evaluation.
Datalog evaluation is known to be ExpTime-complete for combined complexity 
(implicit in \cite{Vardi82}). Monadic Datalog evaluation is known to be 
NP-complete~\cite{GottlobKoch2004}.
The ``guarded fragment of Datalog'' (every rule contains an EDB atom containing 
all variables occurring in the rule) evaluation is in PTime \cite{GottlobGV02}. 
Non-recursive Datalog with stratified negation is PSPACE-complete~\cite{DEGV01}.

Recall that \PNPlogsq is the class of those problems 
that can be solved by a polynomial time deterministic algorithm 
that is allowed to ask $O(\log^2(n))$ queries to an NP-oracle. 
(It relates to better known complexity classes this way: 
NP $\subseteq$ DP $\subseteq$ \PNP$^{[\mathrm{log}]}$ $\subseteq$ \PNPlogsq 
$\subseteq$ $\cdots$ $\subseteq$ \PNP$^{[\mathrm{log}^i]}$ $\subseteq$ \PNP 
$\subseteq$ $\Sigma^p_2$ $\subseteq$ PSPACE $\subseteq$ EXPTIME.)

\begin{theorem}\label{thm:PNPlogsq}
The combined complexity of evaluating GN-SQL queries is \PNPlogsq-complete.
\end{theorem}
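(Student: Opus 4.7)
The plan is to establish matching upper and lower bounds using the connection between GN-SQL and GNFO developed in Section~\ref{sec:prel}. For the upper bound, the argument is immediate: given a GN-SQL query $q$, an instance $I$, and a candidate answer tuple $\textbf{a}$, Theorem~\ref{thm:Codd-GNSQL} translates $q$ in linear time into an equivalent domain-independent GNFO formula $\phi_q$, so the question ``$\textbf{a}\in q(I)$?'' becomes the GNFO model checking question ``$I\models\phi_q(\textbf{a})$?'', which is in \PNPlogsq by part~3 of Theorem~\ref{thm:GNFO}.

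For the lower bound, the plan is to reduce the \PNPlogsq-hard GNFO model checking problem to GN-SQL evaluation. The subtlety is that the backwards translation given by Theorem~\ref{thm:Codd-GNSQL} is exponential in general, so a naive conversion of the hard GNFO formula into a GN-SQL query fails. I would sidestep this using the polynomial-time variant of the translation that is available when a unary relation $\textsc{adom}$ denoting the active domain is present in the schema, as noted right after Theorem~\ref{thm:Codd-GNSQL}. Concretely, given an instance $(\phi(\textbf{x}),M,\textbf{a})$ of GNFO model checking, I would (i)~augment $M$ with a fresh unary predicate $\textsc{adom}$ populated with every element of the active domain, producing an instance $I$ whose size is linear in $|M|$; (ii)~relativize $\phi$ to $\textsc{adom}$ in the standard way, rewriting each $\exists x\,\psi$ as $\exists x(\textsc{adom}(x)\land\psi^{\textsc{adom}})$ and guarding the free variables at the top level with $\textsc{adom}(x_1)\land\cdots\land\textsc{adom}(x_k)$, to obtain a domain-independent GNFO formula $\phi'$ of size linear in $|\phi|$; and (iii)~apply the polynomial-time active-domain-assisted translation to turn $\phi'$ into an equivalent GN-SQL query $q$. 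Then $\textbf{a}\in q(I)$ iff $M\models\phi(\textbf{a})$, and the whole reduction runs in polynomial time.

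The main obstacle is step (iii): one must verify that the polynomial-size RA expression produced by the active-domain-assisted translation is syntactically negation-guarded (so that it actually lies in GN-RA) and can then be rendered as a GN-SQL query of comparable size. This amounts to a careful inspection of the inductive construction underlying Theorem~\ref{thm:Codd-GNSQL}, ensuring in particular that each use of the difference operator has the form $\pi_{i_1\ldots i_m}(\textsc{adom}\times\cdots\times\textsc{adom})-E$, which is a simple projection of an EDB relation and therefore legal in both GN-RA and GN-SQL. A secondary sanity check is that the \PNPlogsq-hardness reduction for GNFO model checking from \cite{BtCS11} produces instances of polynomial total size, so that augmenting with $\textsc{adom}$ only inflates the structure by a polynomial amount; this is routine for the kinds of truth-table-lookup reductions used to establish \PNPlogsq-hardness.
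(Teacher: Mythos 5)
Your proposal is correct and follows essentially the same route as the paper: the upper bound via the linear translation of Theorem~\ref{thm:Codd-GNSQL} combined with the GNFO model-checking bound of Theorem~\ref{thm:GNFO}, and the lower bound via the polynomial-time translation from GNFO to GN-SQL that becomes available once the instance is assumed to contain an $\textsc{adom}$ relation. The extra details you supply (relativizing quantifiers to $\textsc{adom}$ and checking guardedness of the resulting expression) are exactly the routine verifications the paper leaves implicit.
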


\begin{proof}
The upper bound follows directly from Theorem~\ref{thm:Codd-GNSQL} and Theorem~\ref{thm:GNFO}. 
For the lower bound, observe that the translation from GNFO to GN-SQL given in the proof 
of Theorem~\ref{thm:Codd-GNSQL} is polynomial in the presence of a unary relation $\textsc{adom}$ 
containing all elements in the active domain. We may assume without loss of generality 
that our input instance contains such a relation. Therefore, the lower bound from 
Theorem~\ref{thm:GNFO} extends to GN-SQL as well.
\end{proof}

We show here that the same problem is \PNP-complete for GN-Datalog. 
Recall that the best known upper bound on the complexity of model checking GNFP 
is $\NP^\NP\cap\coNP^\NP$.

\begin{theorem}\label{thm:GNDatalog-PNP}
The combined complexity of evaluating GN-Datalog queries is \PNP-complete.
Hardness holds already for non-recursive GN-Datalog queries with only 
unary IDB predicates and nullary negation. 
\end{theorem}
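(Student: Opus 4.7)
The plan is to sidestep the general $\NP^\NP\cap\coNP^\NP$ bound for GNFP model-checking (Theorem~\ref{thm:GNFO}) and instead use the structure of GN-Datalog directly. The crucial observation is that negation-guardedness forces every derivable IDB fact $R(\mathbf{a})$ to be witnessed by an EDB guard atom (of the current stratum) containing all head positions. By induction over the stratum index, this gives a polynomial bound (in $|I|+|q|$) on the total number of IDB facts that can ever be derived. Hence, stratum by stratum, the naive bottom-up fixed-point computation terminates after polynomially many rounds. Within each round, deciding whether a given candidate fact $R(\mathbf{a})$ is derivable amounts to: does there exist a rule of the stratum and an assignment to its body variables that makes the head equal to $R(\mathbf{a})$ and each body atom true? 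Since any negated body atoms refer only to already-fixed EDB or lower-stratum IDB relations, this is an NP question. Multiplying by polynomially many rounds and candidate facts yields a polynomial number of NP oracle calls, plus one final NP call to evaluate $Ans$ on the completed IDBs; the whole procedure lives in \PNP.

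\textbf{Lower bound (\PNP-hardness).}
I would reduce from Wagner's canonical \PNP-complete problem Parity-SAT: given 3-CNF formulas $\phi_1,\ldots,\phi_n$, decide whether an odd number of them are satisfiable. The EDB will encode truth values via $\mathrm{Val}(0),\mathrm{Val}(1)$, a ``satisfied-clause'' relation $\mathrm{Sat}_3=\{0,1\}^3\setminus\{(0,0,0)\}$, a ``negation'' relation $\mathrm{Neg}=\{(0,1),(1,0)\}$, together with relations encoding the clauses of each $\phi_i$. For each $i$, I introduce a nullary IDB $S_i$ defined by a single positive rule $S_i\leftarrow\beta_i$, where $\beta_i$ is a CQ body with existentially quantified variables for the truth values of $\phi_i$'s propositional variables and atoms $\mathrm{Sat}_3(\cdot,\cdot,\cdot)$ and $\mathrm{Neg}(\cdot,\cdot)$ encoding each clause; then $S_i$ is derivable iff $\phi_i$ is satisfiable, and the rule is trivially negation-guarded (the head is nullary and every body variable is guarded by a $\mathrm{Sat}_3$ atom). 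To compute the running parity I use $n$ further strata, each introducing one new nullary IDB $P_i$:
\[
  P_1 \leftarrow S_1, \qquad P_i \leftarrow P_{i-1}, \neg S_i, \qquad P_i \leftarrow \neg P_{i-1}, S_i \quad (i\ge 2).
\]
In stratum $i$, both $P_{i-1}$ and $S_i$ are EDB and nullary, so the negated atoms are trivially guarded by any positive atom in the body. The answer UCQ is just $P_n$. The program is non-recursive, every IDB is nullary (hence unary), and every negation is nullary, placing the instance into the claimed restricted fragment.

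\textbf{Main obstacle.}
The substantive step is the upper bound: pushing the complexity down from $\NP^\NP\cap\coNP^\NP$ to \PNP hinges entirely on the polynomial bound on derivable IDB facts, which in turn depends on the EDB-guard condition in the definition of GN-Datalog and must be verified inductively over strata. The lower bound is conceptually routine once one observes that SAT can be encoded into a CQ body without unguarded negation by replacing each literal with an auxiliary variable linked to the truth-value variable through the $\mathrm{Neg}$ relation.
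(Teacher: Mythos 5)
Your upper bound follows the paper's own argument: guardedness bounds the number of derivable IDB facts, so the bottom-up computation is a polynomial-time procedure that uses an \NP{} oracle to evaluate rule bodies, plus one final oracle call for $Ans$. One caution: your phrase ``witnessed by an EDB guard atom (of the current stratum)'' together with ``induction over the stratum index'' does \emph{not} yield a polynomial bound if guards may be lower-stratum IDB atoms --- in that case the fact count can multiply by a factor at every stratum (e.g.\ rules that permute the columns of a wide lower-stratum IDB can double the number of facts per stratum), and indeed the paper's Proposition~\ref{prop:IDB-guards} shows that admitting IDB guards makes evaluation ExpTime-complete. The polynomial bound needs the guard to be an atom over the \emph{original} EDB schema, so that every IDB fact at every stratum is directly a rule-determined projection of an input fact, giving at most $|I|\cdot|\Rules|$ derivable facts with no compounding across strata. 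This is the reading the paper's proof uses.

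The lower bound has a genuine gap: the source problem is wrong. Deciding whether an odd number of the formulas $\phi_1,\ldots,\phi_n$ are satisfiable is solvable with $n$ \emph{non-adaptive} \NP{} queries, hence lies in $\mathrm{P}^{\NP}_{||}=\Theta_2^p=$ \PNP$^{[\log]}$; Wagner's completeness result for this parity problem is for $\Theta_2^p$, not for \PNP. So even if your reduction is correct (and the $S_i$/$P_i$ stratification does encode the parity computation), it only establishes \PNP$^{[\log]}$-hardness, which is weaker than the \PNPlogsq-hardness already inherited from GN-SQL evaluation and far short of \PNP-hardness. To get \PNP-hardness you need a problem whose oracle queries are inherently \emph{adaptive}. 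The paper reduces from LEX(SAT): decide whether $x_n=1$ in the lexicographically least satisfying assignment of $\Phi(x_1,\ldots,x_n)$, which is \PNP-complete (Krentel/Wagner). Its construction uses one stratum per bit, with a nullary $Z_i$ testing satisfiability of $\Phi$ \emph{conditioned on the already-computed values} $X_1(\cdot),\ldots,X_{i-1}(\cdot)$ of the earlier bits, so that each stratum's \NP{} query genuinely depends on the answers to all previous ones. Your $P_i$ chain looks sequential but carries no such dependence --- each $S_i$ is determined independently --- which is precisely why it cannot capture \PNP. The rest of your gadgetry (the two-element structure, the $\mathrm{Neg}$ and $\mathrm{Sat}_3$ relations for encoding 3-CNF clauses as CQ bodies) matches the paper's and would transfer directly to the LEX(SAT) reduction.
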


\subsection{Open-World Query Answering}

Open world (OWA) query answering is the following problem: 
\emph{given a query $q$, an instance $I$, and a tuple of values $\textbf{a}$, 
decide whether it is the case that $\textbf{a}$ belongs to the answers of $q$ 
in every instance extending $I$ with additional facts.}
An instance of open-world query answering $I \models_\mathrm{OWA} q(\textbf{a})$ 
thus asks for the unsatisfiability of $I \cup \{ \lnot q(\textbf{a}) \}$ in the usual 
first-order semantics, when treating $I$ as a set of atomic facts with its elements as constants.
Open world semantics is the natural choice when working with incomplete databases, 
in data exchange settings, and in the context of ontological reasoning. 
In each of these settings, open world query answering is an extensively 
researched problem. 

\newcommand{\OWA}{\textrm{OWA}}

In this section, we investigate the data complexity of open-world query answering 
for queries with guarded negation. Formally, for each query $q$ we denote by $\OWA_q$ 
the problem, given an instance $I$ and a tuple of values $\textbf{a}$ from $\adom(I)$, 
to decide whether $I\models_\OWA q(\textbf{a})$. 
More generally, for each query $q$ and for each set of constraints $\Sigma$, 
we denote by $\OWA_{q,\Sigma}$ the problem, given an instance $I$, to decide 
whether $I,\Sigma\models_\OWA q$. 

Note that, in the absence of constraints, for conjunctive queries $q$, 
by monotonicity, the problem $I\models_\OWA q(\textbf{a})$ coincides 
with $I\models q(\textbf{a})$, and therefore $\OWA_q$ is in PTime (in fact, in AC$^0$).
For first-order queries $q$, on the other hand, the problem $\OWA_q$ can be undecidable. 
We will show below that the problem is decidable for first-order queries with guarded negation.

As constraints, we will consider tuple-generating dependencies, cf.~\eqref{eq:tgd}, 
and key constraints. As noted above, linear-, guarded- and frontier-guarded 
tgds~\cite{Baget11ijcai,CGL09dl} are expressible in GNFO. 
With respect to OWA query answering, conjunctive queries are known to be FO-rewritable 
relative to linear tgds \cite{CGL09dl} and possess Datalog rewritings relative to 
frontier-guarded tgds \cite{Baget11rep}. 
Accordingly, the data complexity of open-world query answering 
for conjunctive queries against linear tgds 
is in $\mathrm{AC}_0$, in PTime for frontier-guarded tgds, and PTime-complete 
already for guarded tgds \cite{CGL09dl}. 

We begin by observing that OWA query answering for GNFO queries, 
as for many description logics~\cite{Ros06icdt,CGLLR06kr,OCE06dl}, 
has coNP data complexity. 
For an instance $I$, we denote by $|I|$ the  total number of facts of $I$, and, for 
two instances $I,J$, we write $I\subseteq J$ if every fact of $I$ is a fact of $J$. 

\begin{proposition} \label{prop:OWA_linModel}
Let $\phi(\textbf{x})$ be a fixed GNFO formula. 
For an instance $I$ and a tuple $\textbf{a}$ of elements from $\adom(I)$, 
if there is an instance $J\models\phi(\textbf{a})$ with $I\subseteq J$, 
then there is an instance $J\models\phi(\textbf{a})$ with $I\subseteq J$ and $|J|=O(|I|)$. 
\end{proposition}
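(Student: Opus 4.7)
The plan is to prove a relativized small-model theorem for GNFO by combining the guarded-negation tree-like model property from~\cite{BtCS11} with a type-pumping argument. Since $\phi$ is fixed, its width $w$, its quantifier depth $d$, and the size of any DNF equivalent of $\phi$ are all constants. Without loss of generality I would assume $\phi$ is in DNF, pick a single disjunct $\phi' = \exists \mathbf{y}(\zeta_1 \wedge \cdots \wedge \zeta_m)$ that $J$ witnesses at $\mathbf{a}$, and focus on finding a small model of $\phi'(\mathbf{a})$ over $I$.

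The key construction is a \emph{guarded-tree unraveling of $J$ over $I$}, in which $I$ is preserved intact as a core while the guarded cliques of $J$ containing elements outside $\adom(I)$ are unfolded into a tree of fresh copies, glued along their shared guards. The resulting structure $\hat{J}$ admits a natural projection to $J$ that is a $w$-GN-bisimulation and is the identity on $\adom(I)$. By the bisimulation invariance of GNFO under $w$-GN-bisimulations (a central tool from~\cite{BtCS11}), $\hat{J} \models \phi(\mathbf{a})$ still holds. I would then prune $\hat{J}$ to bounded branching and bounded depth: the number of distinct $d$-round $w$-GN-types of guarded tuples is a constant $N = N(\phi)$, so at each node it suffices to retain at most one subtree per realized $d$-round type, and to truncate at depth $d$. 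The pruned structure $J'$ is $w$-GN-bisimilar to $\hat{J}$ up to depth $d$ via a map extending the identity on $\adom(I)$, whence $J' \models \phi(\mathbf{a})$. Its size is bounded by $|I| \cdot N^{d+1} = O(|I|)$, since $I$ contributes at most $|I|$ guarded cliques, each of which serves as the root of a subtree of bounded branching and bounded depth.

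The main obstacle will be engineering the unraveling so that $I$ remains literally a substructure of $J'$, rather than merely bisimilar to one. This is what forces us to unravel only those guarded cliques that introduce elements outside $\adom(I)$, and to verify that this restricted unraveling still yields an identity-on-$I$ GN-bisimulation. A secondary subtlety is that the pruning must not re-introduce witnesses for negated subformulas; this is automatic from the fact that $w$-GN-bisimulation preserves satisfaction of GNFO formulas of width at most $w$ and quantifier depth at most $d$, provided the type classification and the depth truncation are carried out in lockstep.
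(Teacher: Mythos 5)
Your high-level strategy --- keep $I$ intact and replace the rest of $J$ by constant-size gadgets attached to the guarded tuples of $I$ --- is the right one, and it is essentially the shape of the paper's proof. But the step that is supposed to deliver the constant size bound, namely truncating the unraveling at tree-depth $d$ and keeping one subtree per realized $d$-round type, does not work. The reason is that existential quantification in GNFO is unguarded: a witness for a subformula such as $\exists x\,(P(x)\land\neg Q(x))$ may sit at arbitrary tree-depth in the unraveling of $J$, completely disconnected from the core. Concretely, if $I=\{S(a)\}$ and $J$ extends $I$ by a chain $R(a,b_1),R(b_1,b_2),\dots,R(b_{999},b_{1000})$ with $Q(b_i)$ for all $i<1000$ and with $P(b_{1000})$ but not $Q(b_{1000})$, then $J$ satisfies the quantifier-depth-one query $\exists x\,(P(x)\land\neg Q(x))$ while its depth-$1$ truncation does not. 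The same phenomenon shows that ``identical up to tree-depth $d$'' does not imply ``$d$-round GN-bisimilar'': a single round of the GN-bisimulation game already permits a global move to an arbitrary guarded tuple, so the spoiler wins by jumping to a deep bag whose atomic type has been truncated away. Your remark that ``the type classification and the depth truncation are carried out in lockstep'' is precisely where the missing idea hides.

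What actually makes each gadget constant-size is the finite model property of GNFO, and this is how the paper proceeds: it fixes, in advance and independently of $I$, one finite model $(M^\tau,\textbf{a}^\tau)$ of $\phi\land\tau$ for every satisfiable type $\tau$ of bounded width and bounded \emph{negation rank} (not quantifier depth), and then, for each atom $\textbf{b}$ of $I$, glues onto $\textbf{b}$ a fresh copy of the gadget for the type of $\textbf{b}$ in $J$. The result has at most $C\cdot|I|$ facts for a constant $C$. The correctness argument is not a bisimulation between the new model and $J$ --- they are not GN-bisimilar in general --- but an induction on negation rank showing that every guarded tuple of a gadget has the same bounded-width, bounded-negation-rank type in its own gadget as in the whole glued structure; the delicate case is reassembling witnesses that are scattered across several gadgets into a single formula that can be reflected back into $J$ and then into one gadget. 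A secondary point your write-up should also confront explicitly: facts of $J$ over $\adom(I)$ that are not guarded by any atom of $I$ can be quadratically numerous, so they cannot all be kept in the core, and they cannot be hung off a single guarded tuple of $I$; in the paper they are absorbed into the types of the atoms of $I$ and re-realized with fresh witnesses inside the gadgets.
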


Proposition~\ref{prop:OWA_linModel} tells us that, in solving the open-world query answering problem for GNFO queries, it suffices to consider ony instances whose
size is linear in the size of the input instance. This gives us the following:

\begin{theorem}
For each GNFO query $q$ (in particular, for each GN-SQL query),
$\OWA_q$ is in coNP. 
There is a boolean GN-SQL query $q$ for which $\OWA_q$ is coNP-hard.
\end{theorem}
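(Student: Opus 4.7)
For the coNP upper bound, I would show that any counterexample to $I \models_\OWA q(\textbf{a})$ can be taken of polynomial size, reducing the complement of $\OWA_q$ to an NP problem. A counterexample is an extension $J \supseteq I$ with $J \not\models q(\textbf{a})$. The plan is to invoke Proposition~\ref{prop:OWA_linModel}---or, more precisely, the small-model construction in its proof, which produces a substructure of $J$ of size $O(|I|)$ containing $I$ and GNFO-equivalent to $J$ on tuples from $\adom(I)$. Since $q$ is a fixed GNFO formula, this yields a linear-size $J' \supseteq I$ with $J' \not\models q(\textbf{a})$. An NP machine for the complement of $\OWA_q$ then guesses such a $J'$, checks $I \subseteq J'$, and evaluates the fixed query $q$ on $J'$ in polynomial time, establishing $\OWA_q \in$ coNP.

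For coNP-hardness, I would reduce from 3-UNSAT. Over a schema with EDB relations $C$ (unary, clauses), $P, N$ (binary: positive and negative literal occurrences), and $T$ (unary, truth assignment), encode a 3-CNF $\phi$ with variables $x_1, \ldots, x_n$ and clauses $c_1, \ldots, c_m$ as an instance $I_\phi$ in which $C$ lists the clauses, $P$ and $N$ record the polarities of occurrences, and $T$ is left empty. Take the fixed boolean query
\[
   q \;:=\; \exists c\,\Big(\,C(c) \,\land\, \neg\exists x(P(c,x) \land T(x)) \,\land\, \neg\exists x(N(c,x) \land \neg T(x))\,\Big),
\]
which asks whether some clause is falsified by the assignment encoded in $T$. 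Each negation is guarded---the two outer ones by $C(c)$, the inner $\neg T(x)$ by $N(c,x)$---so $q$ is GNFO and, being domain-independent, equivalent by Theorem~\ref{thm:Codd-GNSQL} to a boolean GN-SQL query. For correctness: any satisfying assignment $\alpha$ for $\phi$ yields the extension $J_\alpha := I_\phi \cup \{T(x_i) : \alpha(x_i) = 1\}$ in which every clause of $\phi$ is satisfied by $T$, so $J_\alpha \not\models q$ and hence $I_\phi \not\models_\OWA q$; conversely, any $J \supseteq I_\phi$ with $J \not\models q$ satisfies every clause of $\phi$ under the assignment $\alpha(x_i) := T^J(x_i)$. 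Thus $\phi$ is unsatisfiable iff $I_\phi \models_\OWA q$, yielding the desired hardness.

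The main obstacle lies in the upper bound: Proposition~\ref{prop:OWA_linModel} as stated concerns extensions \emph{satisfying} a GNFO formula, while OWA calls for extensions \emph{falsifying} the query. I would expect the proof of the proposition to produce, for any $J$, a linear-size sub-instance containing $I$ that preserves the full GNFO theory (positively and negatively) of tuples from $\adom(I)$; this is precisely the statement needed to contract a falsifying $J$ to a polynomial-size certificate. The lower bound is otherwise a routine guess-and-check reduction once the query $q$ above is in hand.
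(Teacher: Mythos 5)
Your coNP upper bound is essentially the paper's argument. The ``satisfying vs.\ falsifying'' obstacle you flag resolves exactly as you suspect: the paper applies Proposition~\ref{prop:OWA_linModel} to $\neg q$, which is again a GNFO formula once the free variables are replaced by constants naming $\textbf{a}$ (negating a sentence is trivially guarded); alternatively, the proof of the proposition establishes the two-way equivalence $M \equiv_w^r J$, which also suffices. So that half is fine.

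The lower bound, however, has a genuine gap. In open-world query answering, the adversary who falsifies $q$ may extend \emph{every} relation of $I_\phi$, including the relations $P$ and $N$ encoding the structure of the formula, not only the ``assignment'' relation $T$. In your query, $P$ and $N$ occur positively inside the witnessing conditions $\exists x(P(c,x)\land T(x))$ and $\exists x(N(c,x)\land\neg T(x))$ that must be made true to kill each clause's disjunct, so adding spurious $P$- or $N$-facts only helps the adversary. Concretely, for the unsatisfiable formula $\phi=(x_1\lor x_1\lor x_1)\land(\neg x_1\lor\neg x_1\lor\neg x_1)$, the extension $J=I_\phi\cup\{T(x_1),\,P(c_2,x_1)\}$ falsifies $q$: both clauses now have a ``positive occurrence'' made true by $T(x_1)$, even though $P(c_2,x_1)$ is a lie. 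Hence $I_\phi\not\models_{\mathrm{OWA}}q$ although $\phi$ is unsatisfiable, and your claimed converse direction (``$J\not\models q$ implies $\phi$ satisfiable via $\alpha(x_i):=T^J(x_i)$'') fails, because the witness $x$ need not be a genuine occurrence from $\phi$. The paper sidesteps this by reducing from 3-colorability with the query~\eqref{eq:threecol}, which is engineered so that the relations $N$ and $E$ the adversary might enlarge occur only positively in $q$ (enlarging them only makes $q$ harder to falsify), while the data the adversary must supply are unary colour facts $P_i(x)$ attached to the nodes; any falsifying extension then really does yield a proper 3-colouring (choose one colour per node). Your reduction would need to be restructured along these lines, e.g.\ so that the clause--literal incidence relations never occur under a negation in $q$.
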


\begin{proof} 
The coNP upper bound is immediate from the above proposition.
Given an instance $I$ with distinguished elements $\textbf{a}$,  
Proposition~\ref{prop:OWA_linModel} shows that to refute 
  $I \models_\mathrm{OWA} q(\textbf{a})$ 
it suffices to guess a linear size instance $J$ with $I\subseteq J$ 
and test in polynomial time that $J$ satisfies $\lnot q(\textbf{a})$.
The lower bound is established by a reduction from 3-colorability. 
Let $q$ be the GNFO sentence (for readability, we omit the repeated 
occurrences of $Nx$ as guard): \vspace{-1mm}
\begin{equation} \label{eq:threecol} \small
 \exists x (Nx \land \lnot P_1x  \land \neg P_2x \land\neg P_3x ) \ \lor \ 
 \bigvee_i \exists xy ( Exy \land P_ix \land P_iy ) ~~~
 \vspace{-1mm}
\end{equation} 
expressing that $P_1,P_2,P_3$ do not constitute a valid 3-coloring of the graph $(N,E)$.
It is easy to check that a simple undirected graph $G$ is not 3-colorable 
iff $G \models_\mathrm{OWA} q$, and it is straightforward to formulate 
the domain-independent boolean query~\eqref{eq:threecol} in GN-SQL. 
\end{proof}

This is remarkable, given that open-world query answering is in general undecidable for first-order  queries, even in the absence of constraints. 

Recall that every frontier-guarded tgd can be formulated as a GNFO sentence.
This allows us to lift the above result to the open-world query answering problem 
with constraints that are frontier-guarded tgds. More precisely, if $\Sigma$ is a 
set of frontier-guarded tgds, then $\OWA_{q,\Sigma}$, by definition, coincides with
$\OWA_{q\lor\bigvee_{\sigma\in\Sigma}\neg\sigma}$, and therefore we get the following.

\begin{corollary}
For each GNFO query $q$ and for each finite set of frontier-guarded tgds $\Sigma$,
$\OWA_{q,\Sigma}$ is in coNP. 
\end{corollary}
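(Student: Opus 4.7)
The plan is to reduce the constrained open-world problem $\OWA_{q,\Sigma}$ to an unconstrained open-world problem $\OWA_{q'}$ for a single GNFO query $q'$, and then to invoke the preceding theorem. The key observation, already suggested in the text, is the identity
\[
  I,\Sigma \models_\OWA q(\textbf{a}) \ \iff\
  I \models_\OWA \Bigl(q \ \lor\ \bigvee_{\sigma\in\Sigma}\neg\sigma\Bigr)(\textbf{a}),
\]
which holds because a superinstance $J\supseteq I$ violates the right-hand side iff $J$ satisfies every $\sigma\in\Sigma$ and falsifies $q(\textbf{a})$, i.e., iff $J$ is a $\Sigma$-counterexample to $I,\Sigma\models_\OWA q(\textbf{a})$.

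The heart of the argument, and the only step that requires care, is to show that $q' := q\lor\bigvee_{\sigma\in\Sigma}\neg\sigma$ is expressible in GNFO. Since $q$ is already in GNFO and GNFO is closed under disjunction, it suffices to show that $\neg\sigma$ is expressible in GNFO for each frontier-guarded tgd $\sigma$. Writing $\sigma$ as $\forall\textbf{x}\textbf{y}\,(\phi(\textbf{x},\textbf{y})\limp\exists\textbf{z}\,\psi(\textbf{y},\textbf{z}))$ where $\phi$ contains a frontier-guard atom $P(\textbf{y})$ that covers all variables $\textbf{y}$ shared with the head, we have
\[
  \neg\sigma \;\equiv\;
    \exists\textbf{x}\textbf{y}\bigl(\phi(\textbf{x},\textbf{y}) \,\land\,
      \bigl(P(\textbf{y})\land\neg\exists\textbf{z}\,\psi(\textbf{y},\textbf{z})\bigr)\bigr).
\]
Since $\phi$ is a conjunction of atoms (no negation), and the only negated subformula, $\exists\textbf{z}\,\psi(\textbf{y},\textbf{z})$, has its free variables $\textbf{y}$ guarded by the frontier-atom $P(\textbf{y})$, the resulting formula is in GNFO as required. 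The body $\exists\textbf{z}\,\psi(\textbf{y},\textbf{z})$ itself is a union of conjunctive queries and hence GNFO, so the construction indeed produces a GNFO sentence.

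With $q'$ in hand, $\Sigma$ is fixed, so $q'$ is a single fixed GNFO query. Applying the preceding theorem (that $\OWA_{q'}$ is in coNP for every fixed GNFO query $q'$) to $q'$ gives the desired coNP data-complexity bound for $\OWA_{q,\Sigma}$. I do not anticipate any real obstacle: the reduction is definitional and the GNFO-expressibility of $\neg\sigma$ is precisely what the frontier-guardedness condition buys us, which is exactly what makes frontier-guarded tgds subsumed by GNFO in the first place (as noted in Section~\ref{sec:existing}).
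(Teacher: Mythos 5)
Your proposal is correct and follows essentially the same route as the paper: the paper also observes that $\OWA_{q,\Sigma}$ coincides by definition with $\OWA_{q\lor\bigvee_{\sigma\in\Sigma}\neg\sigma}$ and then invokes the preceding coNP result, relying on the (earlier-stated) fact that every frontier-guarded tgd, and hence its negation, is expressible in GNFO. The only difference is that you spell out the guarded-negation form of $\neg\sigma$ explicitly, which the paper leaves implicit; your rendering of it is accurate.
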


In various contexts, such as data exchange \cite{FKMP05}, it is useful to consider
incomplete databases that contain, besides constant values, also
labeled null values. In this case, open world query answering is
defined not in terms of extensions of instances, but in terms of homomorphisms that are allowed to map the labeled null values to constant values or to other labeled null values. It is worth observing that the above proofs go through in this more general setting with null values, showing that for GNFO queries $q$ and for finite sets of frontier-guarded tgds $\Sigma$,
$\OWA_{q,\Sigma}$ is in coNP even over instances containing labeled nulls.

Next we identify a subfragment of GNFO that accommodates the earlier mentioned 
formalisms including conjunctive queries and frontier-guarded tgds 
and whose queries enjoy PTime data complexity for OWA.
Recall that open-world query answering $I \models_\mathrm{OWA} q$ asks for 
the unsatisfiability of $I \cup \{\lnot q\}$. 
Under negation, the subformula  
$ \exists x (Nx \land \lnot P_1x \land\neg P_2x \land\neg P_3x) $ 
of the coNP-complete query~\eqref{eq:threecol} turns into the
disjunctive requirement $\forall x(Nx \limp P_1x\lor P_2x \lor P_3x)$ 
that is, in an intuitive sense, ultimately responsible for intractability. 
Indeed, it has been observed in the context of DL-Lite that the 
introduction of even the weakest form of disjunction renders 
query answering intractable~(see, e.g.,~\cite{CGLLR06kr}). It turns out that the positive 
occurrence of conjunctions $\lnot A(x) \land \ldots \land \lnot B(x)$ 
involving two or more negated conjuncts %
are the only source of intractability in GNFO queries.

\begin{definition}[serial GNFO queries, SGNQ] ~\\
A GNFO-formula $\varphi$ is \emph{serial} if it is in DNF and
no conjunction $\lnot \chi(x) \land \ldots \land \lnot \psi(x)$ 
with two or more negated conjuncts occurs positively %
in $\varphi$, i.e., in the scope of an even number of negations.
Let SGNQ denote the set of serial GNFO queries. 
\end{definition}

Clearly, every union of conjunctive queries is a serial GNFO query. %
Furthermore, every frontier-guarded tgds, as well as its negation, is equivalent to a boolean serial GNFO queries. 
It fact, for every finite set $\Sigma$ of frontier-guarded tgds and for every serial GNFO query $q$, we have that
$q \lor \bigvee_{\sigma\in\Sigma} \lnot \sigma$ is a serial GNFO query. In other words, the
reduction from open-world query answering in the presence of frontier-guarded tgds to open-world query answering in the absence of tgds, that we gave earlier, holds also in the case of serial GNFO queries.

\begin{theorem} \label{thrm:OWA_SGNQ_PTIME} 
For each SGNQ $q$ and for each finite set $\Sigma$ of frontier-guarded tgds, $\OWA_{q,\Sigma}$ is in PTime.  

In fact, for every boolean SGNQ $q$ we can effectively compute a boolean Datalog query $q'$ such that for all instances $I$, we have $I \models_{\mathrm{OWA}} q \, \iff I\models q'$ .

There is a boolean SGNQ query $q$ for which $\OWA_q$ is PTime-complete.
\end{theorem}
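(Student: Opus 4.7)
\medskip\par\noindent\textbf{Proof plan.}
I intend to derive all three claims from the middle one (the Datalog rewriting of boolean SGNQ queries); the outer two claims then follow by short reductions once the rewriting is in hand.

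For the rewriting, I would first normalize $q$ to the DNF shape of \eqref{eq:DNF}, checking that the rewrite rules used to reach DNF preserve seriality (none of them introduces a positive conjunction with two negated conjuncts). Then, by induction on the resulting formula, I construct a boolean Datalog program with designated goal predicate $q'$ such that $I\models_{\OWA} q \iff I\models q'$. Atomic bases, disjunctions and existential quantifiers yield routine rules. The crucial inductive case is a positive-level conjunction $A_1\land\cdots\land A_k\land(\alpha\land\lnot\phi)$, where seriality limits us to \emph{at most one} guarded-negation conjunct and the inner $\phi$ is again serial. By induction there is a Datalog program $P_\phi$ capturing $\OWA_\phi$; the outer program wraps $P_\phi$ using stratified (in fact Horn) negation to combine the single negated conjunct with the positive atoms. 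The soundness argument is the heart of the proof: it combines Proposition~\ref{prop:OWA_linModel} (to restrict attention to polynomial-size counter-extensions) with a closure lemma saying that whenever a counter-extension $J\supseteq I$ to $q$ exists, a ``minimal'' one does, obtainable as a Horn-style product of candidate extensions. This closure is precisely what seriality buys us: it fails for general GNFO because a positively occurring $\lnot\chi\land\lnot\psi$ translates into genuinely disjunctive reasoning.

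For the PTime upper bound with constraints, I use the fact (noted just before the theorem) that every frontier-guarded tgd $\sigma$ is equivalent to a boolean SGNQ and that SGNQ is closed under $q\lor\bigvee_{\sigma\in\Sigma}\lnot\sigma$. Hence $I,\Sigma\models_{\OWA} q(\textbf{a})$ iff $I\models_{\OWA} q^*(\textbf{a})$ for $q^* := q\lor\bigvee_{\sigma\in\Sigma}\lnot\sigma$, reducing to the unconstrained case; free answer variables are handled by parametrizing the goal predicate with the candidate answer tuple. For PTime-hardness, I would reduce from OWA conjunctive-query answering under a fixed set $\Sigma_0$ of guarded tgds, already known to be PTime-complete \cite{CGL09dl}: given a boolean CQ $q_0$, the sentence $q_0\lor\bigvee_{\sigma\in\Sigma_0}\lnot\sigma$ is a boolean SGNQ realizing the same open-world problem.

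The main obstacle is the inductive rewriting step, specifically the closure lemma underlying its soundness; the syntactic verification that seriality is preserved by DNF conversion and the reductions at either end are essentially bookkeeping.
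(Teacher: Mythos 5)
There is a genuine gap at the centre of your argument. The ``closure lemma'' you invoke in the inductive rewriting step --- that whenever a counter-extension $J\supseteq I$ refuting $q$ exists, a canonical or minimal one exists, obtainable as a ``Horn-style product of candidate extensions'' --- is the entire mathematical content of the theorem, and you assert it rather than prove it. Note also that Proposition~\ref{prop:OWA_linModel} does not help here: small counter-models give a coNP bound (guess and check), not a PTime one; what you need for PTime is a \emph{single deterministic canonical} counter-model, and nothing in your plan supplies it. A second, related problem is compositionality. Refuting $q$ at the top level requires one extension $J$ that \emph{simultaneously} fails every disjunct $\exists\textbf{x}(\alpha_i\land\lnot\exists\textbf{y}\,\psi_i)$, i.e.\ satisfies every induced tgd $\forall\textbf{x}(\alpha_i\to\exists\textbf{y}\,\psi_i)$ at once. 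A per-subformula Datalog program deciding $\OWA_\phi$ for each subformula $\phi$ in isolation does not obviously compose into a program for the whole query, because the witnesses chosen when refuting one disjunct interact with the positive atoms and negations of the others. Your proposal defers exactly this interaction to the unproven lemma.

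The paper takes a different and more economical route that sidesteps both issues. It rewrites the boolean SGNQ $q$ in the form $\bigwedge_i\forall\textbf{x}(\alpha_i(\textbf{x})\to\exists\textbf{y}\,\psi_i(\textbf{x},\textbf{y}))\to\mathtt{false}$ and then, by induction on quantifier alternation rank, \emph{flattens} each non-quantifier-free $\psi_i$ by introducing a fresh predicate $W_i(\textbf{x},\textbf{y})$ and splitting the conjunct into several genuine frontier-guarded tgds (seriality is what guarantees each piece stays frontier-guarded). This reduces $I\models_{\OWA}q$ to entailment of $\mathtt{false}$ from $I$ under a finite set of frontier-guarded tgds, for which PTime data complexity and Datalog rewritability are known results of \cite{Baget11ijcai,Baget11rep}; the canonical counter-model your closure lemma would have to construct is supplied there by the chase. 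Your handling of the constraint set $\Sigma$ (folding $\bigvee_{\sigma\in\Sigma}\lnot\sigma$ into the query) matches the paper, and your PTime-hardness reduction from OWA answering of a fixed CQ under fixed guarded tgds \cite{CGL09dl} is fine. But as written, the middle claim --- the Datalog rewriting --- rests on an unproven and nontrivial assertion, so the proof is incomplete; either prove the closure lemma (essentially re-deriving the chase argument for frontier-guarded tgds) or reduce to the known results as the paper does.
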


The proof is based on a reduction from the open-world 
query answering problem for SGNQs in the presence of frontier-guarded tgds to the open-world query answering problem for 
conjunctive queries in the presence of frontier-guarded tgds. A PTime solution of the latter problem 
via Datalog rewritings is due to~\cite{Baget11rep}. %

Finally, we show that OWA answering GNFO queries under key constraints is undecidable. 
This holds even for a fixed GNFO query and a fixed key constraint %
of the form 
$\forall \textbf{x} y z ( F(\textbf{x},y) \land F(\textbf{x},z) \limp y=z )$  
with $F$ a relation symbol. 

\begin{theorem} \label{thrm:OWA_SGNQineq_undec} 
\begin{enumerate} \setlength{\itemsep}{-1mm}
\item[(i)]
There is a boolean conjunctive query $q$ and a set $\Sigma$ comprising guarded tgds 
and a single key constraint, so that $\OWA_{q,\Sigma}$ is undecidable.
\item[(ii)]
There is a boolean SGNQ $q$ and a key constraint $\sigma$,
so that $\OWA_{q,\{\sigma\}}$ is undecidable.
\end{enumerate}
\end{theorem}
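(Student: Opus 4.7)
The plan is to establish (i) by a reduction from a classical undecidable problem, and then to derive (ii) by absorbing the guarded tgds of (i) into the query using the closure property of serial GNFO queries noted in the paragraph preceding Theorem~\ref{thrm:OWA_SGNQ_PTIME}.

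For part (i), I would reduce from the halting problem of deterministic Turing machines (or, equivalently, from non-halting of two-counter Minsky machines). The schema contains a binary time-successor relation together with relations recording, for each timestamp $t$, the current head state, head position and tape contents near the head. Each guarded tgd encodes a single transition: its body consists of one ``configuration'' atom that packages together all information about the current step needed to determine the next, so that this atom serves as a guard for all body variables, and its head introduces a fresh successor timestamp via an existential quantifier, together with atoms describing the next configuration. A single key constraint of the form
\[
  \forall \textbf{x}\,y\,z\,\bigl( F(\textbf{x},y) \land F(\textbf{x},z) \limp y = z \bigr)
\]
is then used to force the labeled nulls created by parallel tgd applications to collapse whenever they describe the same cell, thereby turning the naively tree-shaped chase into an honest simulation of the deterministic computation. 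The boolean conjunctive query $q$ simply asks whether an accept-state atom ever occurs, and correctness reduces to the standard correspondence: the machine halts on the input encoded by $I$ iff the chase forced by the tgds and the key reaches an accept atom iff $I\models_{\OWA} q$ under the tgds and the key.

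For part (ii), I would invoke the closure property stated just before Theorem~\ref{thrm:OWA_SGNQ_PTIME}: for every boolean serial GNFO query $q_0$ and every finite set $\Sigma_{\mathrm{tgd}}$ of frontier-guarded tgds,
\[
  q \ := \ q_0 \,\lor\, \bigvee_{\tau \in \Sigma_{\mathrm{tgd}}} \lnot \tau
\]
is again a boolean serial GNFO query, and by definition $I\models_{\OWA} q$ under $\{\sigma\}$ iff $I\models_{\OWA} q_0$ under $\Sigma_{\mathrm{tgd}}\cup\{\sigma\}$. Taking $q_0$ to be the conjunctive query from (i) (trivially serial, as every UCQ is an SGNQ), $\Sigma_{\mathrm{tgd}}$ to be the guarded tgds from (i), and $\sigma$ to be its single key constraint, produces a boolean SGNQ $q$ and a single key constraint $\sigma$ for which $\OWA_{q,\{\sigma\}}$ inherits the undecidability established in (i).

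The main obstacle lies entirely in (i): arranging the Turing-machine encoding so that (a) each tgd body contains a single atom that genuinely guards all its body variables, (b) the single functional key suffices to identify the labeled nulls produced by parallel chase steps into one coherent computation history, and (c) no open-world extension of the initial instance can satisfy or falsify $q$ in ways that decouple it from the computation of~$M$. These three requirements pull in different directions; meeting them simultaneously typically forces the encoding to duplicate substantial contextual information inside each ``configuration'' atom so that one literal can serve as guard, and this is where the bulk of the technical work in (i) will go.
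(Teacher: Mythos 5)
Your part (ii) is exactly the paper's argument: fold the guarded tgds of (i) into the query as a disjunction of their negations, observe that the result is still a boolean SGNQ, and keep only the key constraint as a constraint. That step is fine and needs no further comment.

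Part (i) is where the theorem actually lives, and your proposal stops precisely where the work begins; moreover, the one concrete structural idea you float would not survive elaboration. A single ``configuration'' atom that packages ``all information about the current step needed to determine the next'' cannot exist for a machine with unbounded tape: the relation has fixed arity, so it cannot carry the whole tape, and if it only carries the contents near the head then no single body atom can guard a rule that copies the untouched cells into the next configuration. The paper's construction abandons the idea of guarding a configuration at once and instead builds an infinite grid. Rows are configurations (a $\mathit{succ}$-chain of tape cells); a guarded tgd $\mathit{succ}(x,y) \limp \exists u v\, \mathit{cell}(x,y,u,v)$ creates, for each horizontally adjacent pair of cells, a local \emph{window} atom containing those two cells and two fresh cells of the next row, and a further tgd derives $\mathit{next}(x,u) \land \mathit{next}(y,v) \land \mathit{succ}(u,v)$ from it. The single key constraint is functionality of $\mathit{next}$, and this is exactly the point your item (b) defers: that one functional dependency is what identifies the fresh nulls created by overlapping windows, collapsing the tree-shaped chase of the guarded tgds into a genuine grid. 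Transitions and the copying of unchanged symbols are then guarded tgds whose guard is the four-variable window atom, using unary predicates $S_q$ for the state and $L$, $R$ to mark positions left and right of the head; the query is $\exists x\, S_{acc}(x)$. So the resolution of your tension (a)--(c) is to shrink the guard to a two-cell window and let the key do the gluing, which is the missing core of (i) rather than a routine elaboration. One further slip: since $I \models_{\mathrm{OWA}} q$ relative to $\Sigma$ amounts to unsatisfiability of $I \cup \Sigma \cup \{\lnot q\}$ and is therefore recursively enumerable, the reduction must come from an r.e.-complete problem such as acceptance; your parenthetical alternative of reducing from \emph{non-halting} of two-counter machines goes in the wrong direction.
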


While undecidability of the uniform problem (where the query is part of the input) 
follows from various similar results for weaker formalisms \cite{Ros06icdt}, 
for a fixed query this seems to be a new result.

\section{Boundedness and First-Order Definability}\label{sec:boundedness}

\newcommand{\nc}{\newcommand} 
\nc{\FO}{\mathrm{FO}}
\nc{\GF}{\mathrm{GFO}}
\nc{\GSO}{\mathrm{GSO}}
\nc{\GN}{\mathrm{GN}}
\nc{\GNF}{\mathrm{GNFO}}
\nc{\bT}{\begin{theorem}} 
\nc{\eT}{\end{theorem}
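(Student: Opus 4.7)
The plan is to reduce from a standard undecidable problem such as the unbounded tiling of the first quadrant $\mathbb{N} \times \mathbb{N}$ with a given tile set (equivalently, halting of deterministic two-counter machines). The overall strategy, familiar from the literature on query answering under combined inclusion- and key-constraints, is that guarded tgds together with a single key constraint force every model of a small initial instance to contain a homomorphic image of the grid $\mathbb{N} \times \mathbb{N}$, so that OWA query answering becomes as hard as checking whether a compatible tiling of the grid exists.

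For part~(i), I would work over a schema with binary relations $H, V$ (horizontal and vertical successor steps), an auxiliary binary ``diagonal'' relation $F$, and unary tile predicates $T_1, \ldots, T_k$. The set $\Sigma$ consists of (a)~guarded tgds that propagate $H$- and $V$-successors from every cell and generate $F$-facts from both diagonal paths $H \circ V$ and $V \circ H$, together with (b)~the single key constraint $\forall x y z\,(F(x,y) \land F(x,z) \limp y = z)$. The key identifies the two diagonal witnesses at each cell, closing the grid so that $H \circ V = V \circ H$ holds up to equality in every model. The boolean CQ $q_{\mathrm{bad}}$ is then the disjunction of the patterns $\exists x y\,(T_i(x) \land H(x,y) \land T_j(y))$ over all horizontally forbidden pairs $(T_i,T_j)$, and analogously for $V$. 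Thus $I,\Sigma \models_\OWA q_{\mathrm{bad}}$ iff every model extending $I$ and satisfying $\Sigma$ contains a forbidden adjacent pair, iff no valid tiling extends the starting configuration encoded by $I$, whence undecidability of the tiling problem transfers.

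For part~(ii), the task is to absorb the guarded tgds from $\Sigma$ into the query itself, leaving only the single key constraint $\sigma$. The crucial observation, already made in the paper, is that every frontier-guarded tgd and its negation is equivalent to a boolean SGNQ (the negation being in fact a CQ), and SGNQs are closed under finite disjunction. Hence the query
\[
q' \;:=\; q_{\mathrm{bad}} \;\lor\; \bigvee_{j=1}^m \lnot \tau_j
\]
is itself a boolean SGNQ, where $\tau_1, \ldots, \tau_m$ are the guarded tgds of the $\Sigma$ from part~(i). For every instance $I$, one then has $I, \{\sigma\} \models_\OWA q'$ iff $I, \Sigma \models_\OWA q_{\mathrm{bad}}$, giving a direct reduction of part~(i) to part~(ii).

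The hard part will be the careful design of the tgds in (a) together with the single key constraint on $F$. The tgds must be strong enough that every model must contain an $\mathbb{N}\times\mathbb{N}$-grid as a homomorphic image, yet weak enough that legitimate tilings do not accidentally force inconsistency or unintended collapse. The standard trick is to generate $F$ only along a composed $H \circ V$ or $V \circ H$ path, using auxiliary unary markers distinguishing cells of ``ready to diagonalize'' type, so that the single key on $F$ enforces exactly the grid commutation $H \circ V = V \circ H$, without imposing functionality of $H$ or $V$ themselves. Once this structural core is verified, the fact that $q_{\mathrm{bad}}$ detects forbidden tile pairs across all models is routine.
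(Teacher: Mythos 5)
Your overall strategy---guarded tgds plus one key constraint to force a grid, then read off undecidability from a grid-based computation model---is exactly the paper's strategy (the paper simulates a Turing machine rather than a tiling, but that difference is cosmetic). The genuine gap is in the one step you defer as ``the hard part'': the diagonal-closing tgds you describe cannot be written as guarded tgds. A rule generating $F$ along a composed path, such as $H(x,y)\land V(y,z)\limp F(x,z)$, has three body variables and only binary body atoms, so no atom guards the whole body; it is not even frontier-guarded, since no body atom contains both $x$ and $z$. Adding unary ``ready to diagonalize'' markers does not help, because unary atoms can never cover a three-variable body. This is not a technicality: guardedness is precisely what makes the chase tree-like, and the entire point of the theorem is that a \emph{single} key constraint suffices to break tree-likeness. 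The paper's way around this is to introduce a $4$-ary relation $cell(x,y,u,v)$ that packages an entire grid cell, created existentially from the guarded body $succ(x,y)$, and then to emit $next(x,u)$, $next(y,v)$, $succ(u,v)$ from the single guard atom $cell(x,y,u,v)$; the key constraint is placed on $next$, whose facts all originate inside one $cell$ atom, so every rule stays guarded while the key still glues adjacent cells into a grid. Your construction would need an analogous higher-arity packaging relation, which changes its core mechanism.

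Two smaller points. First, part~(i) of the statement asks for a boolean \emph{conjunctive} query, but your $q_{\mathrm{bad}}$ is a union of conjunctive queries; this is fixable by adding guarded tgds $T_i(x)\land H(x,y)\land T_j(y)\limp \mathit{Bad}$ (guard $H(x,y)$) for each forbidden pair and querying the single atom $\mathit{Bad}$, which is essentially what the paper does with $\exists x\,S_{acc}(x)$. Second, your reduction of part~(ii) to part~(i)---folding the guarded tgds into the query as a disjunction of their negations, which remains a serial GNFO query---is correct and is exactly the paper's argument for part~(ii).
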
}
\nc{\bD}{\begin{definition}}
\nc{\eD}{\end{definition}}
\nc{\bC}{\begin{corollary}}
\nc{\eC}{\end{corollary}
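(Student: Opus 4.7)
The plan is to derive undecidability of both (i) and (ii) from one reduction of a classical undecidable problem, using the observation that (ii) follows cheaply from (i) by absorbing the guarded tgds of $\Sigma$ into a serial GNFO query.

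For (i), I would reduce from the halting problem of a fixed universal deterministic Turing machine $M$ (alternatively, from the $\mathbb{N}\times\mathbb{N}$ tiling problem), encoding the input word $w$ in the instance $I_w$ while $q$ and the constraint set remain fixed. The fixed schema contains a binary relation $F$, with the single key constraint $\sigma := \forall x y z\,(F(x,y)\land F(x,z)\to y=z)$, plus auxiliary unary/binary relations marking cells with state–symbol pairs. Guarded tgds implement the transition function of $M$: each labelled cell fires an $F$-successor carrying the next configuration's data, and further guarded tgds copy and update tape contents, each using an EDB atom as an atomic guard covering all frontier variables. The functionality of $F$ merges distinct chase paths to the same configuration, keeping the simulated run deterministic. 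The fixed boolean conjunctive query $q$ simply asks for the presence of a distinguished halting marker. A standard invariant argument then yields $I_w \models_{\OWA,\Sigma} q$ iff $M$ halts on $w$.

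For (ii), I would reuse the construction of (i) and fold the guarded tgds into the query. Suppose $\Sigma=\{\sigma,\tau_1,\ldots,\tau_k\}$ with each $\tau_i$ of the form $\forall\textbf{x}\,(\alpha_i(\textbf{x})\land\phi_i(\textbf{x})\to\exists\textbf{y}\,\psi_i(\textbf{x},\textbf{y}))$, where $\alpha_i$ is the atomic guard. The negation
\[
\neg\tau_i\;\equiv\;\exists\textbf{x}\bigl(\alpha_i(\textbf{x})\land\phi_i(\textbf{x})\land\neg\exists\textbf{y}\,\psi_i(\textbf{x},\textbf{y})\bigr)
\]
is a GNFO sentence whose only negated conjunct $\neg\exists\textbf{y}\,\psi_i$ is guarded by the atomic formula $\alpha_i$ --- so, viewed in DNF, it contains exactly one negated disjunct and is therefore serial. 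Setting $q':=q\lor\bigvee_i\neg\tau_i$ yields a disjunction of serial disjuncts and hence an SGNQ. Unfolding the definition of $\models_{\OWA}$ gives $I\models_{\OWA,\{\sigma\}} q'$ iff $I\models_{\OWA,\Sigma} q$, so the undecidability of (i) transfers verbatim to (ii).

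The main obstacle will be carrying out (i) strictly within the guarded-tgd fragment while using only one key constraint. Classical reductions of CQ answering under functional and inclusion dependencies use several functional dependencies, most critically to force the two-dimensional commutativity of a simulated grid (right-then-up equals up-then-right). With a single key available, this commutativity must be baked into the identity of the elements generated by the chase, so that the single functional atom already identifies equivalent cells; I expect to do this by encoding a cell's coordinate, direction of entry, and configuration index into the first argument of $F$, so that all derivation paths ending at the same grid position are automatically merged. The other subtlety is the finite-model aspect of OWA here: if this matters for the specific variant of the undecidable problem used, the standard lasso trick --- allowing finite extensions to ``close up'' the computation but using the query to rule out spurious loops --- should resolve it. Modulo these encoding details, the rest of the argument is routine.
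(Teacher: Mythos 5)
Your overall strategy coincides with the paper's: part (i) by simulating a Turing machine with guarded tgds plus a single key constraint, with the input word encoded in the instance and a fixed atomic ``accept'' query; part (ii) by folding the negations of the guarded tgds --- which are indeed serial GNFO sentences --- into the query, exactly as the paper does with $\varphi_M \lor \exists x\, S_{acc}(x)$. Your part (ii) is complete and correct, and your reduction of (ii) to (i) is the same one the paper uses.

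The gap is in the step you yourself flag as the main obstacle, and your proposed fix would fail. You suggest enforcing the grid's two-dimensional commutativity ``by encoding a cell's coordinate, direction of entry, and configuration index into the first argument of $F$, so that all derivation paths ending at the same grid position are automatically merged.'' Under open-world semantics the witnesses introduced by the existential heads of tgds are anonymous labelled nulls; you have no handle on their identity and cannot make two independently generated nulls coincide by an intended encoding --- only a dependency can force an identification, and you have allowed yourself a single key. The paper's device is different and is the real content of the proof: each horizontal edge $succ(x,y)$ spawns a quaternary fact $cell(x,y,u,v)$, and the guarded tgd $cell(x,y,u,v) \to next(x,u) \land next(y,v) \land succ(u,v)$ creates \emph{both} vertical successors of the edge's endpoints \emph{together with} the horizontal edge between them in a single firing; the one key constraint, functionality of $next$, then glues the cells spawned by adjacent horizontal edges (the $v$ of one cell is identified with the $u$ of the next), so each row of the grid closes up with no further dependencies. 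Without this or an equivalent mechanism the chase of your guarded tgds remains tree-like, successive configurations are never synchronized cell by cell, and the reduction does not go through as written.
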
}
\nc{\bL}{\begin{lemma}}
\nc{\eL}{\end{lemma}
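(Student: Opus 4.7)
The plan is to reduce from an undecidable problem such as the halting problem of a fixed universal deterministic Turing machine, adapting the classical undecidability arguments for conjunctive-query answering under inclusion dependencies in the presence of key constraints. The fixed constraint set $\Sigma$ and the fixed query $q$ will together simulate the universal machine, while the input instance $I$ encodes the initial tape configuration whose halting is to be decided, so that $I \models_{\OWA} q$ holds iff the universal machine halts on the encoded input.

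For part (i), I would use guarded tgds of schematic form
\[
\mathit{Config}(\textbf{x}) \,\limp\, \exists\,\textbf{y}\,(\mathit{Config}(\textbf{y}) \land \mathit{Next}(\textbf{x},\textbf{y}))
\]
(with the single wide atom $\mathit{Config}(\textbf{x})$ covering all body variables as guard) to express that every non-halting configuration admits a successor according to the universal transition function; additional guarded tgds propagate the unchanged parts of the tape. The single key constraint $\forall\textbf{x}yz\,(F(\textbf{x},y)\land F(\textbf{x},z)\limp y=z)$ enforces \emph{determinism} of the simulation: by routing all functional dependencies through a single relation $F$ whose first argument packages (configuration-id, field-index), it identifies the existentially introduced witnesses of successive tgd firings, so that every model extending $I$ and satisfying $\Sigma$ realises exactly one computation trace. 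The boolean conjunctive query $q$ then asks whether that trace reaches the accept state.

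For part (ii), I would absorb the guarded tgds of (i) into a single SGNQ, leaving only the key constraint as an external constraint. The key observation is that the negation of each frontier-guarded tgd takes the form $\exists\textbf{x}\,(\alpha(\textbf{x}) \land \lnot\exists\textbf{y}\,\psi(\textbf{x},\textbf{y}))$, which contains only \emph{isolated} negations and is therefore serial. Setting
\[
q' \;:=\; q \;\lor\; \bigvee_{\sigma\in\Sigma_{\mathrm{tgd}}} \lnot\sigma
\]
yields a single boolean SGNQ with $I\models_{\OWA,\{\sigma_{\mathrm{key}}\}} q' \iff I\models_{\OWA,\Sigma} q$, exactly by the reduction used earlier to fold frontier-guarded tgds into SGNQs. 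The key constraint, in contrast, cannot be folded in the same way, because the equality $y=z$ in its head has no natural guard and the constraint is not equivalent to any GNFO sentence.

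The main obstacle is designing the encoding in (i) so that all tgds are genuinely guarded and so that one key constraint on one relation $F$ already enforces the full determinism of the simulation. This forces a careful bundling: each rule must carry a single wide body atom as its guard, and every functional dependency of the simulation must be routed through the single relation $F$, whose key argument must be expressive enough to distinguish all the different functional roles. A further delicate point in (ii) is to check that the disjunction of several $\lnot\sigma$ with $q$ does not inadvertently introduce a positive conjunction of two or more negated conjuncts sharing a free variable; since each $\lnot\sigma$ is a single existential followed by one negated subformula, seriality is preserved.
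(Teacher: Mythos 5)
Your overall strategy for Theorem~\ref{thrm:OWA_SGNQineq_undec} --- simulate a Turing machine by guarded tgds plus one key constraint, then for part~(ii) fold the tgds into the query as the disjunction $q \lor \bigvee_{\sigma} \lnot\sigma$ of their (serial) negations, leaving only the key constraint outside --- is exactly the paper's plan, and your part~(ii) matches the paper's argument essentially verbatim, including the observation that the key constraint itself cannot be folded in because of its unguarded equality.

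The gap is in the concrete encoding for part~(i). A schematic rule $\mathit{Config}(\textbf{x}) \limp \exists \textbf{y}\,(\mathit{Config}(\textbf{y}) \land \mathit{Next}(\textbf{x},\textbf{y}))$ with a single wide guard atom cannot work: the query and constraints must be \emph{fixed} while only the instance varies, so $\mathit{Config}$ has fixed arity and can carry only a bounded amount of tape, whereas undecidability requires an unbounded, input-dependent tape. If instead $\mathit{Config}(x)$ is a narrow atom naming a configuration node, then the whole difficulty is how to align the tape cells of one configuration with those of the next using only guarded tgds, and this is precisely the step you leave unspecified (``additional guarded tgds propagate the unchanged parts of the tape''). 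Guarded tgds alone cannot do it: their chase is tree-like, which is exactly why OWA answering against them is decidable. The paper's solution is geometric: a configuration is a horizontal $succ$-chain of cell elements; the tgds $succ(x,y)\limp\exists uv\; cell(x,y,u,v)$ and $cell(x,y,u,v)\limp next(x,u)\land next(y,v)\land succ(u,v)$ spawn the next row locally; and the single key constraint $next(x,y)\land next(x,z)\limp y=z$ identifies the overlapping existential witnesses so that the tree-like chase collapses into an infinite grid, on which head position, state, and unchanged tape symbols are then propagated cell by cell with further guarded tgds. Accordingly, the role of the key constraint is not to enforce determinism of the machine (the paper's machine is deterministic anyway, and the correctness argument goes through the universal chase model and homomorphism-preservation of the conjunctive query, not through the claim that ``every model realises exactly one computation trace,'' which is false since models may contain arbitrary extra facts); its role is to manufacture the grid that guarded tgds cannot build on their own. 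Without this piece your reduction for part~(i) does not go through.
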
}
\nc{\bP}{\begin{proposition}}
\nc{\eP}{\end{proposition}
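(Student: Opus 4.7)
For part (i), the plan is to reduce from a classical undecidable problem about computations, for example the non-halting problem of a deterministic Turing machine on empty input, or equivalently the implication problem for a suitable set of functional and inclusion dependencies from which this has already been reduced (in the spirit of Mitchell, and of Cal\`\i--Gottlob--Kifer). First I would fix a schema with a binary relation $F(\mathit{pos},\mathit{val})$ intended to map a ``position'' (a pair of tape-cell and time-step) to the symbol written there, together with auxiliary relations $\mathit{Conf}$, $\mathit{Cell}$, $\mathit{Succ}_{\text{time}}$, $\mathit{Succ}_{\text{cell}}$ and a relation marking the head position and current state. The single key constraint would be $\forall \mathbf{x}yz\,(F(\mathbf{x},y)\land F(\mathbf{x},z)\to y=z)$, expressing that each cell-at-time-step has a unique symbol. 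The guarded tgds would (a) generate from each configuration a successor configuration with fresh cells and a fresh time-step witness, and (b) propagate cell contents from one time step to the next. Both kinds of rule are easily made guarded because the body can always be taken to contain an atom collecting all of its variables (e.g.\ a $\mathit{Conf}$-fact with the appropriate arity).

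The core idea is that the tgds, chased alone, generate an infinite tree of possible runs (each application introducing fresh labelled nulls), but the key constraint $\sigma$ forces all witnesses representing ``the same cell at the same time'' to be identified across these branches, so that in every model of $I\cup\Sigma$ there is a homomorphic image of the unique deterministic computation. The boolean conjunctive query $q$ would then ask for the existence of an accepting-state fact at some position. One shows that $I \models_{\mathrm{OWA}} q$ iff the machine halts, using on the one hand that any model of $I\cup\Sigma$ must contain an accepting state whenever the machine accepts (via the forced identifications from the key), and on the other hand that a non-halting machine produces a proper universal model not containing the query answer.

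For part (ii), I would internalise the guarded tgds into the query using the observation made earlier in Section~\ref{sec:answering}: every frontier-guarded tgd $\tau = \forall\mathbf{x}\mathbf{y}\,(\phi(\mathbf{x},\mathbf{y})\to\exists\mathbf{z}\,\psi(\mathbf{y},\mathbf{z}))$ has a negation $\exists\mathbf{x}\mathbf{y}\,(\phi\land\neg\exists\mathbf{z}\,\psi)$ expressible as a boolean SGNQ (the frontier-guard in $\phi$ guards the single negated conjunct, so the ``at most one negated conjunct'' requirement of SGNQ is met). Hence setting
\[
  q' \;:=\; q \;\lor\; \bigvee_{\tau\in\Sigma_{\text{tgd}}} \neg\tau,
\]
where $\Sigma = \Sigma_{\text{tgd}}\cup\{\sigma\}$ comes from part~(i), yields a boolean SGNQ for which $I\models_{\mathrm{OWA},\Sigma} q$ iff $I\models_{\mathrm{OWA},\{\sigma\}} q'$. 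The undecidability then transfers directly. Note that only the key constraint $\sigma$, which is not a tgd, must remain as an external constraint, matching the statement of (ii).

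The main obstacle is the encoding in part (i): one has to design the tgds so that the fresh nulls they introduce are ``caught'' by the key constraint in exactly the positions needed to close the deterministic simulation loop, while the bodies of all rules remain guarded. In particular, every generation step must include an atom that is both a guard (containing all body variables) and carries the positional information that makes the key fire with cells from neighbouring time steps; this is what forces determinism and distinguishes the construction from standard chase-terminating settings of frontier-guarded tgds. Once this is arranged, the rest of the argument is a straightforward verification that models of $I\cup\Sigma$ correspond to (possibly quotiented) computations of the machine.
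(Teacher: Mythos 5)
Your overall strategy coincides with the paper's: encode Turing machine computations using guarded tgds together with a single key constraint, and for part (ii) fold the tgds into the query as a disjunction of their (serial, guarded) negations so that only the key remains as an external constraint. Part (ii) of your plan is exactly the paper's argument and is fine.

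There is, however, a genuine gap in part (i), and it sits precisely at the point you flag as ``the main obstacle'' without resolving it. Your key constraint is placed on the content relation $F(\mathit{pos},\mathit{val})$, asserting that each position carries a unique symbol. But the positions themselves are fresh labelled nulls introduced by the existential tgds, and a key on $F$ only merges \emph{values} $y,z$ once the position tuples $\mathbf{x}$ are \emph{already equal}; it never identifies two distinct null positions that are merely intended to denote the same cell at the same time step. Consequently the chase of your rules remains tree-like: the successor cells generated from position $i$ and from position $i+1$ live on incomparable branches, no horizontal adjacency between them is ever derived, and the propagation of tape contents and head moves cannot be closed up. The paper's construction avoids this by keying the relation that \emph{generates} the fresh nulls: it introduces $cell(x,y,u,v)$, producing fresh vertical successors $u,v$ of adjacent cells $x,y$ together with $succ(u,v)$, and imposes the key $next(x,y)\land next(x,z)\to y=z$ on the vertical successor relation itself. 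Functionality of $next$ then forces the null created as ``the cell above $y$'' from the pair $(x,y)$ to coincide with the one created from the pair $(y,z)$, which is exactly what glues the tree-like chase into a grid and makes the simulation deterministic. With the key placed on $F(\mathit{pos},\mathit{val})$ as you state it, no choice of guarded tgds can achieve this identification, so the reduction as described does not go through; you would need to move the key onto the successor-generating relation (or an equivalent device).
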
}
\nc{\bR}{\begin{rrem}}
\nc{\bO}{\begin{observation}}
\nc{\eO}{\end{observation}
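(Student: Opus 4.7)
The plan is to establish the observation in two independent sub-claims: (a) for every frontier-guarded tgd $\sigma$, its negation $\neg\sigma$ is equivalent to a boolean SGNQ; and (b) SGNQ is closed under finite disjunction. Once both are in place, given a serial GNFO query $q$ and a finite set $\Sigma$ of frontier-guarded tgds, each $\neg\sigma$ with $\sigma\in\Sigma$ is a boolean SGNQ by (a), and then $q\lor\bigvee_{\sigma\in\Sigma}\neg\sigma$ is an SGNQ by (b).

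For (a), let $\sigma = \forall\mathbf{x}\mathbf{y}\,(\phi(\mathbf{x},\mathbf{y})\to\exists\mathbf{z}\,\psi(\mathbf{y},\mathbf{z}))$ where $\phi,\psi$ are conjunctions of positive atoms and $\phi$ contains a frontier-guard atom $P(\mathbf{y})$ carrying all shared variables. Then $\neg\sigma$ is logically equivalent to $\exists\mathbf{x}\mathbf{y}\,(\phi(\mathbf{x},\mathbf{y})\land\neg\exists\mathbf{z}\,\psi(\mathbf{y},\mathbf{z}))$. I would verify that this sits inside GNFO syntactically by noting that the inner negation $\neg\exists\mathbf{z}\,\psi(\mathbf{y},\mathbf{z})$ has free variables among $\mathbf{y}$, and is guarded by the atom $P(\mathbf{y})$ already present in $\phi$. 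The resulting formula has DNF form \eqref{eq:DNF}, with a single existentially quantified conjunction of atoms together with one guarded negation over a purely positive subformula. Seriality is immediate: the unique negation in the formula is applied to a conjunction that contains no further negations, so no positive occurrence of a conjunction of two or more negated conjuncts can arise.

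For (b), DNF form is trivially preserved under disjunction since the concatenation of disjunctions of disjunction-free GNFO formulas is again such a disjunction. The seriality condition is equally preserved: if $\neg\chi_1\land\cdots\land\neg\chi_m$ with $m\geq 2$ occurred positively in $\varphi_1\lor\cdots\lor\varphi_k$, it would have to occur positively inside one of the disjuncts $\varphi_i$, contradicting seriality of $\varphi_i$. Putting (a) and (b) together yields the observation, and establishes the reduction used in the proof of Theorem~\ref{thrm:OWA_SGNQ_PTIME}, which folds frontier-guarded tgd constraints into the query without leaving SGNQ.

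The only delicate point is verifying that (a) lands cleanly in DNF rather than requiring a normalization that might introduce violations of seriality. This is avoided by exploiting the frontier-guard $P(\mathbf{y})$ directly as the syntactic guard of $\neg\exists\mathbf{z}\,\psi$, so no nontrivial DNF conversion is needed and no nested negations are ever created. Because the only negation we introduce scopes over a purely positive formula, the seriality check is local and immediate, and no further combinatorial work is required.
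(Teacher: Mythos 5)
Your proof is correct and is exactly the routine verification the paper leaves implicit: the paper asserts this observation without proof, relying on the fact that $\neg\sigma$ rewrites as $\exists\mathbf{x}\mathbf{y}\,(\phi \land \neg\exists\mathbf{z}\,\psi)$ with the frontier atom $P(\mathbf{y})$ serving as the guard, which is already in the DNF shape of~\eqref{eq:DNF} and introduces only a single negated conjunct, and that both the DNF property and the seriality condition are preserved under disjunction because polarity is unaffected. Nothing is missing.
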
}
\nc{\prf}{\begin{proof}}
\nc{\eprf}{\end{proof}}
\nc{\BDD}{\mathrm{BDD}}
\nc{\CBDD}{\mathrm{BDD}_c}
\nc{\FBDD}{\mathrm{BDD}_f}
\nc{\uBDD}{\mathrm{BDD}^1}
\nc{\abar}{\mathbf{a}}
\nc{\bbar}{\mathbf{b}}
\nc{\xbar}{\mathbf{x}}
\nc{\ybar}{\mathbf{y}}
\nc{\zbar}{\mathbf{z}}
\nc{\Ibar}{\mathbf{I}}
\nc{\Xbar}{\mathbf{X}}
\nc{\Pbar}{\mathbf{P}}
\newenvironment{romanenumerate}%
{\begin{list}{(\roman{enumi})}{\usecounter{enumi}
\setlength{\labelwidth}{2cm}
\setlength{\itemindent}{0pt}
\setlength{\itemsep}{0.5\itemsep}
\setlength{\topsep}{\itemsep}
\setlength{\parsep}{0pt}
}}{\end{list}}
\nc{\bre}{\begin{romanenumerate}}
\nc{\ere}{\end{romanenumerate}}

In this section, we study the boundedness problem for GN-Datalog.  Our main result, Corollary~\ref{cor:full-boundedness}, states that it is decidable whether a GN-Datalog program is fully bounded, i.e., whether, for every instance, the computation of each stratum of the GN-Datalog program reaches a fixed point in a bounded number of steps.

The semantics of a Datalog program 
$\Pi$ can be defined in terms of a \emph{least fixed point} for the 
IDB predicates. For this we view $\Pi$, or rather each of its instatiations 
$\Pi_I$ over a given instance $I$,
as a monotone operator.
An application of this operator to any instantiation of the IDB predicates
produces the result of firing all rules once and in parallel,
on these IDB predicates and the static EDB predicates
as given in $I$. This operator  $\Pi_I$ is monotone, and the 
desired interpretation of the IDB predicates in $\Pi^\infty(I)$ 
is its unique least fixed point. This view
extends to not necessarily finite instances $I$, where $\Pi_I$, 
due to its monotonicity, still has a unique least fixed point, also refered 
to as $\Pi^\infty(I)$. As in the case of finite instances, this fixed point is 
obtained as the limit of the monotone sequence of stages 
$\Pi_I^\alpha$ generated by iterating $\Pi_I$ as an update operator, starting 
from the empty instantiation for all IDB predicates in stage $0$, and 
taking unions at limit ordinals, until finally (for cardinality reasons) 
a stage $\Pi_I^\alpha$ is reached that is a fixed point, and indeed the 
unique least fixed point ($\Pi_I^{\alpha +1} = \Pi_I^{\alpha}$ implies 
$\Pi_I^{\alpha} = \Pi^\infty(I)$).

All these considerations hold for any 
notion of program or recursion scheme that shares the crucial monotonicity 
with Datalog programs. Monotonicity refers to monotonicity 
in the IDB arguments, and is guaranteed 
by syntactic positivity in the IDB predicates in all cases we consider.%
We are mostly interested in IDB-positive $\GNF$-programs, which we 
first investigate in isolation, towards understanding their 
stratified, and overall no longer monotone, use in $\GN$-Datalog
(cf.~Definition~\ref{GNFprogramdef} below). 

The notion of \emph{boundedness} captures 
the semantic and procedural essence of non-recursive behavior 
(in contrast with syntactic non-recursiveness as defined in 
Section~\ref{GNDsec}, which focuses on a trivial reason for boundedness). 

\bD
\label{boundedprogrdef}
A monotone program $\Pi$ is \emph{c-bounded} (bounded in the classical
sense, or over unrestricted instances) if there 
exists some $n \in \mathbb{N}$ such that $\Pi_I^{n+1} = \Pi_I^n$ 
for every finite or infinite instance $I$.
It is bounded over a class of instances $\mathcal{I}$ if 
there is such an $n$ that is good for all $I \in \mathcal{I}$.
We call $\Pi$ \emph{f-bounded} if it is bounded over the 
class of \emph{all finite} instances.
\\
$\BDD(\mathcal{P},\mathcal{I})$ stands for the 
\emph{boundedness problem} for programs from $\mathcal{P}$ 
over instances from $\mathcal{I}$: given $\Pi \in \mathcal{P}$, decide whether 
$\Pi$ is bounded over $\mathcal{I}$.  
We reserve the names 
$\CBDD(\mathcal{P})$ 
and $\FBDD(\mathcal{P})$ 
for $\BDD(\mathcal{P},\mbox{\sc{All}})$ and 
$\BDD(\mathcal{P},\mbox{\sc{Fin}})$,
where $\mbox{\sc{All}}$ and 
$\mbox{\sc{Fin}}$ are the classes of
unrestricted and of finite instances, 
respectively.
\eD

Despite its basic nature, the boundedness problem is known to be undecidable 
for even very rudimentary classes of programs -- a fact which frustrated all 
hopes to systematically eliminate bounded, i.e.\ spurious, recursion in effective 
tools for query optimization.
See for instance~\cite{HillebrandEtAl95} for the undecidability of (f-)boundedness 
for Datalog programs with binary IDB predicates, as well as for Datalog programs 
with just monadic IDB predicates but with EDB negation or even just with inequalities 
in the bodies. One of the few major decidability results is the following 
from~\cite{CosmadakisGaKaVa88}.

\bT[Cosmadakis,Gaifman,Kanellakis,Vardi]
\label{monDLdecthm}
$\FBDD(\mathcal{P}) = \CBDD(\mathcal{P})$ is decidable for the 
class $\mathcal{P}$ of all monadic Datalog programs.   
\eT

The following result from classical modelö theory 
is of fundamental importance for links between boundedness 
and first-order ($\FO$) definability. It speaks about
IDB-positive programs $\Pi$ that are first-order
in the sense that the bodies of rules can be expressed in $\FO$,
by formulas that are positive in all IDB predicates 
(which guarantees monotonicity).
We use the term \emph{first-order programs} in this sense.
We say that the fixed point of $\Pi$ is $\FO$-definable 
over the class $\mathcal{I}$ if each IDB predicate 
in the least fixed point 
$\Pi^\infty(I)$ is definable in terms of the EDB predicates
by some first-order formula, uniformly across all 
$I \in \mathcal{I}$. 
  
\bT[Barwise--Moschovakis \cite{BarwiseM78}]
\label{BMthm}
An IDB-positive first-order program $\Pi$
is bounded in the classical sense
if, and only if,  
the fixed point of $\Pi$ is $\FO$-definable 
over the class of all (finite and infinite) 
instances.
\eT

Analogous equivalences can be derived for many natural 
fragments $\mathrm{L}\subseteq \FO$, where boundedness of
IDB-positive $\mathrm{L}$-programs is equated with 
$\mathrm{L}$-definability of their fixed points. 
This is true in particular also for the guarded negation 
fragment $\GNF \subseteq \FO$. 

Moreover, for many well-behaved fragments $\mathrm{L} \subseteq \FO$ 
there are model theoretic 
transfer results that say that an $\mathrm{L}$-program $\Pi$
is bounded over $\mathcal{I}$ if, and only if, it is bounded over some 
subclass $\mathcal{I}_0 \subseteq \mathcal{I}$. A case of 
particular interest is a \emph{finite model property} for boundedness,
which links the classical notion to its finite model theory version.
This, too, is available in the case of $\GNF$.  

\bD
\label{GNFprogramdef}
A $\GNF$-program is an IDB-positive program $\Pi$ with 
rules of the form 
\[
X \xbar_{s} \;\leftarrow\; \alpha_{s}(\xbar_{s}) 
\wedge \phi_{s}(\Xbar,\xbar_{s})
\]
where $\phi_{s} \in \GNF$ is positive in the IDB predicates $\Xbar$
and $\alpha_{s}$ is an EDB atom guarding the variable 
tuple $\xbar_{s}$ in the head.
\eD

The following say that for $\GNF$ we are in the ideal situation that 
f-boundedness and c-boundeness coincide, 
and that the classical and finite model theory variants of the 
Barwise--Moschovakis correspondence hold. The finite model theory analogue 
is the least straightforward of these.%
\footnote{It is known, for instance, that the universal
fragment of $\FO$, despite its finite model property, does not satisfy
this analogue: there is a purely universal program 
whose limit is uniformly definable in universal $\FO$ across all
finite instances, although it is unbounded over finite instances.}

\bP
\label{fmpobs}
For $\GNF$-programs $\Pi$ and their least fixed points $\Pi^\infty$, 
t.f.a.e.:
\bre 
\item%
$\Pi^\infty$ is $\FO$-definable 
over all finite instances.
\item%
$\Pi^\infty$ is $\FO$-definable 
over all unrestricted instances.
\item%
$\Pi^\infty$ is $\GNF$-definable 
over all finite instances.
\item%
$\Pi^\infty$ is $\GNF$-definable 
over all unrestricted instances.
\item%
$\Pi$ is bounded over all finite instances.
\item%
$\Pi$ is bounded over all unrestricted instances.
\ere%
\eP

Another crucial transfer property for $\BDD(\GNF)$ is based on the notion of \emph{treewidth}. In \cite{BtCS11}, it was suggested that the key to the good computational behavior of GNFO and GNFP lies in the fact that these logics have a \emph{tree-like model property}: for testing the satisfiability and the entailment of formulas, it suffices to consider structures of bounded treewidth. The same notion provides the key to decidability of boundedness as well.

The \emph{width} $\mathrm{w}(\Pi)$ of a $\GNF$-program $\Pi$ 
is the maximum number of element variables used in any of its 
rules in DNF.

\bL
\label{transferlem}
A $\GNF$-program $\Pi$ of width $%
\leq w$
is bounded over all unrestricted instances %
if, and only if, it is bounded over the class of 
all (possibly infinite) instances of treewidth at most $w$.
\eL

\prf
Each finite stage $\Xbar^n$ of $\Pi$ can be defined by a sequence of $\GNF$-formulas whose width is bounded by $w$. In particular, for each natural number $n$, boundedness of $\Pi$ at stage $n\geq 1$, w.r.t.~a class of structures, is equivalent to the validity of a certain $\GNF$-sentence of width $w$, on that class of structures.  Since a $\GNF$-sentence of width $w$ is valid on arbitrary structures if and only if it is valid on structures of treewidth at most $w$ (cf.~\cite{BtCS11,O2012}), the claim follows.
\eprf

We turn to decidability of $\CBDD(\GNF)$ and of 
\emph{full boundedness} (to be defined below) of $\GN$-Datalog.
Given the meager history of decidability results concerning 
boundedness for database purposes, it is interesting 
that here is one considerable extension of \emph{the} 
early decidability result for monadic Datalog from~\cite{CosmadakisGaKaVa88}, 
cf.~Theorem~\ref{monDLdecthm} above.

We note that $\GN$-Datalog is stricly more expressive than 
monadic Datalog, but avoids the dangers of negation that render 
boundedness undecidable, for instance, in the extension of 
monadic Datalog by 
just inequalities, or by negative as well as positive access 
to some binary EDB predicates.

Technically, the following decidability assertion is an easy corollary 
to the decidability results for monadic second-order logic and guarded 
second-order logic over tree-like structures in~\cite{BOW}. 
These results in turn are based on a non-trivial reduction to an automata 
theoretic decidability result of Colcombet and L\"oding, which, in the 
relevant strength needed here, has not been published yet. 
As in~\cite{BOW} we indicate this caveat formally as an assumption
(ILT), which refers to the decidability of \emph{limitedness} 
for weighted parity automata on infinite trees, as announced in 
connection with progress on earlier work in ~\cite{ColcombetLoeding08}. 

Recall that $\CBDD(\GNF)$ and $\FBDD(\GNF)$ coincide.

\bT[assuming ILT]
\label{bddGNFthm}
Boundedness for $\GNF$-programs is decidable.
\eT

\prf
The $\GNF$-formulas in an $\GNF$-program 
can be translated into explicitly guarded formulas of 
guarded second-order logic $\GSO$ (denoted $\GSO^\ast$ in~\cite{BOW}).   
The result then follows from the decidability 
of $\BDD(\GSO^\ast,\mathcal{W}_k)$, boundedness for 
$\GSO^\ast$ over the class of structures of treewidth $k$, where 
both the $\GSO^\ast$-formulas and the parameter $k$ are treated as
input (Theorem~8.8 in~\cite{BOW}). We apply 
this to the $\GSO^\ast$-translation of 
the input $\GNF$-program $\Pi$ over the class $\mathcal{W}_k$
for $k := \mathrm{w}(\Pi)$.%
\footnote{NB: since $\Pi$ really corresponds to a \emph{system} of 
least fixed points in several IDB predicates $X$, we 
need the result for systems of simultaneous fixed points 
in $\GSO^\ast$ from~\cite{BOW}, as discussed in the 
proof sketch for Theorem~11.5 there.} 
By Lemma~\ref{transferlem}, (ii), 
this is a valid reduction.
\eprf

Towards our interest in $\GN$-Datalog, with its stratified 
use of guarded negation as defined in Section~\ref{GNDsec}, 
we %
 extend the notion 
of boundedness from Definition~\ref{boundedprogrdef}
as follows. 

\bD
\label{boundedstratprogrdef}
A GN-Datalog program $\tilde{\Pi} = (\Pi_i)_{i \leq n}$ 
is called \emph{fully f/c-bounded} over  
a class of instances $\mathcal{I}$ 
if each stratum $\Pi_i$ is f/c-bounded over the class of all 
instances obtained from instances in $\mathcal{I}$ by 
evaluating all IDB predicates from lower strata 
according to $\tilde{\Pi}_{< i}$ and treating
them as EDB for $\Pi_i$. 
Equivalently, a GN-Datalog program $\tilde{\Pi} = (\Pi_i)_{i \leq t}$ is fully f/c-bounded if 
there are natural numbers $k_1, \ldots, k_t$ such that for all finite/unrestricted instances $I$,
$\Pi^\infty(I)=\Pi_t^{k_t}(\Pi_{t-1}^{k_{t-1}}(\cdots \Pi_1^{k_1}(I)\cdots))$. 
\eD

\bC[assuming ILT] \label{cor:full-boundedness}
For $\GN$-Datalog, full f-boundedness is decidable and coincides with full c-boundedness.
\eC

\prf
The proof is by induction on the number of strata. 
Note that by definition of full boundedness, 
a stratified $\GN$-Datalog program $\tilde{\Pi} = (\Pi_i)_{i \leq n}$
fails to be fully bounded if, and only if, there is a least
stratum $m \leq n$ such that $\Pi_m$ is unbounded over the class of
instances obtained by evaluating all IDB predicates of lower strata
according to $\tilde{\Pi}_{<m}$. Since these lower strata are bounded, 
this partial evaluation is in fact $\GNF$-definable.
It follows that the above arguments concerning the $\GNF$-variant 
of the Barwise--Moschovakis theorem and its finite model theory
version carry through -- stratum by stratum, and up to the 
first stratum that turns out to be unbounded, if any.
This also reduces the decidability claim to that in Theorem~\ref{bddGNFthm}.
\eprf

We remark that the passage through boundedness for $\GSO^\ast$
over $\mathcal{W}_k$, which is known to be of non-elementary complexity
even for $k=1$, 
prevents us from extracting any reasonable complexity bounds.
It is conceivable, of course, that alternative methods
yield such bounds (as is the case for other special 
cases of interest, besides that of monadic Datalog, that also 
follow from the master result of~\cite{BOW}).

\section{Discussion}\label{sec:discussion}

\subsection{Further extensions of GN-SQL}

\vskip -.4em
\paragraph*{Inequalities}
GN-SQL can be viewed as a well-behaved query language extending unions of conjunctive queries with a restricted form of negation.  In this sense, it is natural to compare GN-SQL to UCQ($\neq$), the language of unions of conjunctive queries with inequalities.  Like GN-SQL, UCQ($\neq$) is computationally well-behaved: query containment is $\Pi^p_2$-complete \cite{Klug88,Meyden97}, the combined complexity of query evaluation is NP-complete, and the data complexity of open world query answering is NP-complete w.r.t.~a large class of constraints~\cite{FKMP05}, cf.~also~\cite{Madry05}.
In this light, and in the light of Figure~\ref{fig:usage}, the question arises whether we can extend GN-SQL to allow for the use of (unguarded) inequalities. 

Let us denote by GN-SQL($\neq$) the extension of GN-SQL where conditions may make use of the inequality relation ($\neq$), but the inequality relation cannot be used to guard negations. It is easy to see that Theorem~\ref{thm:PNPlogsq} extends to GN-SQL($\neq$) --- we may view the inequality as just another relation that is part of the input instance. 
 All the other results we obtained for GN-SQL, however, fail for GN-SQL($\neq$). This follows from the fact that it is possible to express functional dependencies in GN-SQL($\neq$).
 Indeed, 
  every functional dependency $$\forall\textbf{x}\textbf{y}\textbf{z}uv(F(\textbf{x},\textbf{y},u)\land F(\textbf{x},\textbf{z},v)\to u=v)$$
  is equivalent to the GNFO sentence with inequality $$\neg\exists\textbf{x}\textbf{y}\textbf{z},u,v(F(\textbf{x},\textbf{y},u)\land F(\textbf{x},\textbf{z},v)\land u\neq v)~,$$
which can easily be expressed in GN-SQL($\neq$) as well. 
Recall that inclusion dependencies too can be expressed in GN-SQL. 
This, together with classical results in dependency theory (cf.~\cite{AHV95}) 
and Theorem~\ref{thrm:OWA_SGNQineq_undec}(ii)), implies the following:

\vskip -.4em
\begin{theorem}\label{thm:inequality-undecidable} 
\begin{enumerate} \setlength{\itemsep}{-1mm}
\item[(i)] GN-SQL($\neq$) is not finitely controllable for satisfiability or query containment. 
\item[(ii)] The satisfiability and query containment problems for GN-SQL($\neq$) are undecidable 
            (both on finite instances and on unrestricted instances).
\item[(iii)] There is a GN-SQL($\neq$) query for which open world query answering is undecidable.
\end{enumerate}
\end{theorem}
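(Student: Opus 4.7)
The plan is to exploit the two expressive observations already stressed in the paragraph preceding the theorem: for every functional dependency $\tau$, its violation is a closed formula of the form $\exists\mathbf{x}\mathbf{y}\mathbf{z}uv(F\wedge F\wedge u\neq v)$, hence a boolean GN-SQL($\neq$) query $V_\tau$; for every inclusion dependency $\rho$, the violation is a sentence $\exists\mathbf{x}(R(\mathbf{x})\wedge\neg\exists\mathbf{y}\,S(\mathbf{x},\mathbf{y}))$ with the negation guarded by $R(\mathbf{x})$, hence a boolean GN-SQL query $V_\rho$. For any finite set $\Sigma$ of FDs and IDs, the disjunction $V_\Sigma := \bigvee_{\tau\in\Sigma}V_\tau$ is then a legal boolean GN-SQL($\neq$) query (via \textsf{union}) that returns \textsf{true} on $I$ precisely when some dependency in $\Sigma$ fails.

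For parts~(i) and~(ii), I would reduce from the classical joint implication problem for functional and inclusion dependencies collected in~\cite{AHV95}. That problem is undecidable on finite as well as on unrestricted instances, and it is known to lack finite controllability: there exist $\Sigma,\sigma$ for which $\Sigma\models_{\mathrm{unrest}}\sigma$ but $\Sigma\not\models_{\mathrm{fin}}\sigma$. The reduction is that $\Sigma\models\sigma$ (under either semantics) iff the boolean containment $V_\sigma\subseteq V_\Sigma$ holds (under the same semantics). This transfers both undecidability and the finite/unrestricted gap to containment for GN-SQL($\neq$). To handle satisfiability, I would wrap everything into a single boolean query
\[
Q_{\Sigma,\sigma} \;:=\; \textsf{exists}(V_\sigma)\ \textsf{and}\ \bigwedge_{\tau\in\Sigma}\textsf{not exists}(V_\tau),
\]
treated as a \textsf{select-from-where} clause over any witness relation. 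Each $V_\tau$ is closed, so every \textsf{not exists}$(V_\tau)$ negates a condition with no free tuple variables and is therefore guarded. Then $Q_{\Sigma,\sigma}$ is satisfiable (under the given semantics) iff $\Sigma\not\models\sigma$, and (i)--(ii) follow.

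For part~(iii), I would simply absorb the key constraint into the query. Theorem~\ref{thrm:OWA_SGNQineq_undec}(ii) supplies a boolean SGNQ $q$ and a single key constraint $\sigma$ for which $\OWA_{q,\{\sigma\}}$ is undecidable. Using the standard equivalence $\OWA_{q,\{\sigma\}} = \OWA_{q\vee\neg\sigma}$, it suffices to observe that $q\vee\neg\sigma$ lies in GN-SQL($\neq$): the query $q$ is in GN-SQL by Theorem~\ref{thm:Codd-GNSQL}, and $\neg\sigma$ is exactly the FD-violation query $V_\sigma$ exhibited in the first paragraph.

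The only obstacle here is bookkeeping: verifying that every \textsf{not} introduced in these constructions still negates a condition with at most one free tuple variable, and that we phrase everything via \textsf{not exists} rather than \textsf{except}, so that the more restrictive \textsf{except} rule is not at issue. Beyond this, the reductions are routine once one invokes the cited classical dependency-theoretic results and Theorem~\ref{thrm:OWA_SGNQineq_undec}(ii).
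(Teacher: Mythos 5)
Your proposal is correct and follows essentially the same route as the paper: parts (i)--(ii) reduce the (undecidable, not finitely controllable) implication problem for functional/key and inclusion dependencies to containment and satisfiability of GN-SQL($\neq$) queries assembled from dependency-violation queries, and part (iii) takes the union of the SGNQ from Theorem~\ref{thrm:OWA_SGNQineq_undec}(ii) with the query expressing violation of its key constraint, exactly as in the paper. One small slip to fix: the witness to failure of finite controllability must satisfy $\Sigma\models_{\mathrm{fin}}\sigma$ but $\Sigma\not\models_{\mathrm{unrest}}\sigma$, not the reverse, which is impossible because finite instances form a subclass of unrestricted ones.
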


Known results for various description logics contained in GNFO 
imply that OWA answering for GNFO(${\neq}$) queries is undecidable
when the query is part of the input \cite{Ros06icdt}. Theorem~\ref{thm:inequality-undecidable}
strengthens this by showing that the problem is undecidable already 
for a fixed GNFO(${\neq}$) query. 
Naturally, similar results can be obtained for the extension of GN-Datalog with inequalities.

\vskip -.4em
\paragraph*{Constants and Comparisons}
GN-SQL queries, as we defined them, cannot contain constant values, nor arithmetical comparisons (i.e., conditions of the form $t_1<t_2$). Indeed, over linearly ordered domains, inequalities can be expressed using arithmetical comparisons ($x\neq y$ is equivalent to $x<y\lor y<x$), and hence, by Theorem~\ref{thm:inequality-undecidable}, most problems immediately become undecidable when arithmetical comparisons are allowed.
However, as we will show, our results \emph{do} generalize to the extension of GN-SQL where (i) queries may contain constant values, and (ii) arithmetical comparisons of the form $t_1<t_2$ are allowed provided that at least one of $t_1, t_2$ is a constant value.

In what follows, let $\textsc{lin}=(D,\prec)$ be any ordered domain (where  $D$ is a countable set and $\prec$ is a total order on $D$) that is ``reasonable'' in the sense that the following problems are all solvable in polynomial time (for some appropriate representation of the elements of $D$):
{%
\begin{enumerate} \setlength{\itemsep}{0em} \setlength{\topsep}{0em} \setlength{\parsep}{0em}
\item 
given 
$d_1,d_2\in D$, is it the case that  $d_1\prec d_2$?
\item given 
$d\in D$, does there exist $d'\in D$ with $d'\prec d$?
\item given 
$d\in D$, does there exist $d'\in D$ with $d\prec d'$?
\item given 
$d_1,d_2\in D$, does there exist $d'\in D$ with $d_1\prec d' \prec d_2$?
\end{enumerate}
}
Essentially, all the usual  ordered domains, such as the natural numbers $(\mathbb{N},<)$, the rational numbers $(\mathbb{Q},<)$, and the strings $(A^*,<_{lex})$ over a finite ordered alphabet $A$, are reasonable in this sense.

Let GN-SQL(\textsc{lin}) be the extension of the GN-SQL syntax where (i) all terms $t$ are allowed to be either of the form $R.\textsc{attr}$ (as before) or to be an element of $\textsc{lin}$ (in which case we call $t$ a \emph{constant}); and (ii) for all terms $t_1, t_2$ of which at least one is a constant value, $t_1<t_2$ is allowed as an atomic condition. 
The semantics of GN-SQL(\textsc{lin}) queries is only well-defined for instances whose active domain is a subset of $\textsc{lin}$. Therefore, we restrict attention to such instances.

All results for GN-SQL that we have presented can be extended to GN-SQL. For simplicity, we sketch the relevant construction here only for the query containment problem.
\begin{theorem} \label{thm:ordering-containment}
  Let $\textsc{lin}$ be any reasonable ordered domain. 
  GN-SQL(\textsc{lin}) query containment is 2ExpTime-complete.
\end{theorem}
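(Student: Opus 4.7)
The 2ExpTime lower bound follows immediately from Theorem~\ref{thm:containment}, since GN-SQL is a syntactic fragment of GN-SQL$(\textsc{lin})$. For the matching upper bound, the plan is to reduce GN-SQL$(\textsc{lin})$ query containment to GNFO entailment, which by Theorem~\ref{thm:GNFO} is decidable in 2ExpTime.

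Fix queries $q_1, q_2$ over schema $\sigma$ and let $C = \{c_1 \prec \cdots \prec c_k\}$ enumerate the constants occurring in either query. The key observation is that each active-domain element of a $\textsc{lin}$-instance contributes to the output of $q_1, q_2$ only through its \emph{$C$-type}, namely which of the $2k+1$ regions (below $c_1$, equal to some $c_i$, strictly between $c_i$ and $c_{i+1}$ for some $i$, or above $c_k$) it inhabits. I would therefore extend $\sigma$ to $\widehat\sigma$ by adjoining constant symbols $\mathbf{c}_1, \ldots, \mathbf{c}_k$ and fresh unary relation symbols $P_0, \ldots, P_k$ naming the $k+1$ open intervals, translate each $q_i$ in linear time into a GN-SQL query $\widehat{q_i}$ over $\widehat\sigma$ by substituting $\mathbf{c}_i$ for $c_i$ and replacing every atomic comparison $t < c_j$ (respectively $c_j < t$) by the disjunction $\bigvee_{l=0}^{j-1} P_l(t) \vee \bigvee_{l=1}^{j-1} (t = \mathbf{c}_l)$ (respectively its mirror image), and finally assemble, in polynomial time using the ``reasonable'' oracles for $\textsc{lin}$, a GNFO sentence $\theta$ axiomatizing the intended typing: the $P_j$'s are pairwise disjoint and disjoint from each $\{\mathbf{c}_i\}$; the $\mathbf{c}_i$'s are pairwise distinct; every value appearing in a $\sigma$-relation belongs to some $P_j$ or equals some $\mathbf{c}_i$; and $P_j$ is empty whenever $(c_j, c_{j+1}) \cap \textsc{lin} = \emptyset$. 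Each such clause is readily expressible in GNFO, as element-level negations are guarded either by a $\sigma$-relation atom or by $P_j(x)$, while sentence-level negations may be guarded by the ground equality $\mathbf{c}_1 = \mathbf{c}_1$.

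The correctness claim is that $q_1 \subseteq q_2$ over $\textsc{lin}$-instances if, and only if, $\theta \wedge \widehat{q_1}(\mathbf x) \models \widehat{q_2}(\mathbf x)$ as a GNFO entailment, from which the 2ExpTime upper bound then follows via Theorem~\ref{thm:GNFO}. The easy direction sends any $\textsc{lin}$-instance $I$ to its canonical $\widehat\sigma$-expansion $\widehat I$, which satisfies $\theta$ and on which $\widehat{q_i}$ computes $q_i(I)$. The hard part will be the converse, which requires realizing any counter-structure $M \models \theta$ as a $\sigma$-instance over $\textsc{lin}$ by an injective, type-preserving choice of $\textsc{lin}$-representatives for the elements of $\mathrm{adom}_\sigma(M)$. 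This is immediate when every $C$-type in use corresponds to an infinite or empty $\textsc{lin}$-interval, but the case of nonempty \emph{finite} intervals is delicate: the naive realization may collapse distinct elements of $P_j^M$ and thereby falsify the negative subformulas of $\widehat{q_i}$. I plan to resolve this by strengthening $\theta$ with additional constants $\mathbf{d}_{j,1},\ldots,\mathbf{d}_{j,n_j}$ enumerating each such finite interval together with GNFO axioms of the form $\neg \exists x (P_j(x) \wedge \neg \bigvee_l x = \mathbf{d}_{j,l})$, which are legitimate under the guard $P_j(x)$ and correctly bound the cardinality of $P_j^M$ to that of the interval; the main technical point will be to manage the size of this enlargement --- exploiting the bounded-treewidth and finite-model properties of GNFO to ensure that only intervals whose size is actually witnessed by a small counter-structure need be explicitly controlled --- so that the overall reduction remains within the 2ExpTime budget of GNFO entailment.
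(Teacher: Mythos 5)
Your overall route is the same as the paper's: the lower bound is inherited from Theorem~\ref{thm:containment} because GN-SQL embeds into GN-SQL(\textsc{lin}), and the upper bound compiles the query constants and comparisons into a GNFO vocabulary together with a polynomially computable axiom $\theta$ fixing, for each element, its order type relative to the constants (with a region declared uninhabitable exactly when the reasonableness oracles report the corresponding interval of $\textsc{lin}$ empty), and then invokes Theorem~\ref{thm:GNFO}. The paper keeps $<$ as a binary relation and axiomatizes the atomic $(<,=)$-type of every element with respect to the constant symbols; your unary region predicates $P_0,\dots,P_k$ are an equivalent piece of bookkeeping. Up to that point the two arguments coincide.

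Where you diverge is the treatment of nonempty \emph{finite} intervals, and there your plan as written does not close. Introducing constants $\mathbf{d}_{j,1},\dots,\mathbf{d}_{j,n_j}$ enumerating $(c_j,c_{j+1})\cap\textsc{lin}$ is not a polynomial-time (or even elementary) step: that interval may contain, say, $2^{2^{|q|}}$ elements (between the numerals $0$ and $N$ in $(\mathbb{N},<)$), and the four oracles defining a reasonable domain only answer emptiness and betweenness queries --- they do not let you list, or even count, the interval's members. The appeal to bounded treewidth and the finite model property to argue that ``only intervals witnessed by a small counter-structure need be controlled'' is a hope, not an argument. The observation that actually rescues the reduction is that a GN-SQL(\textsc{lin}) query can certify the pairwise distinctness only of elements that co-occur in a single guard atom (every negated equality is guarded by one relational atom or refers to a single tuple variable), so it can never separate more than $w$ elements of one region, $w$ being the width/maximal arity. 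It therefore suffices to add, for each region $j$ and relation symbol $R$, a guarded axiom forbidding $m_j+1$ pairwise distinct $P_j$-elements inside one $R$-tuple, where $m_j=\min\bigl(|(c_j,c_{j+1})\cap\textsc{lin}|,\,w\bigr)$; this is polynomial-size, and the (no longer injective) realization map can then be chosen injective on every guard atom, which is what the preservation argument needs. Even so, you must check that $m_j$ is computable from the reasonableness conditions (condition~4 alone only distinguishes $m_j=0$ from $m_j\ge 1$). It is worth noting that the difficulty you isolated is genuine and is not visibly handled by the paper's own $\theta_S$ either, which likewise distinguishes only empty from nonempty regions; flagging it was the right instinct, but the repair needs the width-based cardinality bound rather than an exhaustive enumeration.
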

\vskip -.6em
\paragraph*{Aggregation}
Recall that GN-SQL does not allow for any form of aggregation that is available in SQL. This is for good reason:
allowing even simple forms of aggregation such as 
counting would  quickly lead to undecidability, since query containment for unions of conjunctive queries 
under the bag semantics is undecidable~\cite{Ioannidis95:bag}.

\subsection{Further Extensions of GN-Datalog}

\vskip -.4em
\paragraph*{Allowing IDBs As Guards}
If in the definition of negation-guarded Datalog rules one permits also 
the use of IDB atoms \emph{from the same or lower stratum} as guards, 
this can result in an exponential gain in succinctness but does not increase 
the expressive power. (A simple induction on strata and on stages of 
inductive definitions of IDB predicates confirms that all tuples added 
to the interpretation of IDB predicates are guarded by some EDB atom.) 
Query evaluation complexity, however, suffers an exponential blow-up 
as a consequence of this relaxation.

\vskip -.8em
\begin{proposition}[GN-Datalog with IDB guards]\label{prop:IDB-guards}
Answering GN-Datalog queries with IDB atoms allowed as guards 
is ExpTime-complete in combined complexity. 
\end{proposition}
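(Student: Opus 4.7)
For the \textbf{upper bound}, I would invoke the exponential-succinctness claim from the paragraph preceding the proposition: every GN-Datalog program with IDB guards can be translated into an equivalent GN-Datalog program without IDB guards by inductively replacing each IDB guard with the EDB-grounded bodies from which its tuples arise (the induction on strata and stages sketched there confirms that every IDB tuple is ultimately witnessed by an EDB atom). This substitution terminates but may incur an at-most-exponential blow-up in program size. Combining this with the \PNP combined-complexity bound for GN-Datalog (Theorem~\ref{thm:GNDatalog-PNP}) applied to the exponentially larger program gives a time bound of $2^{\mathrm{poly}(n)}$ with an \NP oracle, hence ExpTime. A simpler but less refined argument also works, noting that GN-Datalog with IDB guards is syntactically a fragment of Datalog with stratified negation, whose combined-complexity evaluation is known to be in ExpTime.

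For the \textbf{lower bound}, I would reduce from an ExpTime-complete problem --- most naturally the combined complexity of plain Datalog evaluation, or directly the acceptance problem for alternating polynomial-space Turing machines. The reduction encodes configurations (or derived tuples) using polynomial-arity IDBs whose heads are guarded by auxiliary same-stratum IDBs, exploiting precisely the IDB-guard relaxation: mutual recursion between an IDB and its guard becomes available, so that a $k$-ary head atom $X(\mathbf{x})$ can be guarded by an auxiliary same-stratum IDB $G_X(\mathbf{x})$ even when no EDB body atom contains all $k$ variables of the head.

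The \textbf{main obstacle} is keeping the reduction polynomial-time. A naive construction that materialises a $k$-ary universal guard from the EDB would yield $n^k$ tuples, which is exponential when the arity $k$ grows with the input. To avoid this, the auxiliary guards must be built up through a cascade of same-stratum IDBs whose mutual recursion mirrors the step-by-step derivation of the original Datalog program (or the transitions of the alternating machine) without materialising the full universe. The technical core of the proof then consists in verifying that this cascade is properly \emph{seeded} by rules whose bodies already contain an EDB-witnessed guard, and that the least-fixed-point semantics correctly propagates tuples through the mutually recursive system so as to faithfully reproduce the intended semantics.
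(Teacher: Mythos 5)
Your proposal takes essentially the same route as the paper: the upper bound is exactly your simpler argument (GN-Datalog with IDB guards is syntactically a fragment of stratified Datalog, whose combined-complexity evaluation is in ExpTime by grounding), and the lower bound is a reduction from acceptance of alternating polynomial-space Turing machines using arity-$p(n)$ configuration IDBs $S_{q,i}$ that guard one another through the (backward) transition rules. The seeding issue you correctly flag as the technical core is resolved in the paper by a small trick you leave open: the recursion is anchored at the unique accepting configuration with all-blank tape, encoded as $S_{acc,1}(z,\ldots,z,z,o)$ over a fixed two-element structure, so despite its high arity the seed head contains only two distinct variables and is guarded by the binary EDB atom $Bits(z,o)$; no separate auxiliary guard predicates $G_X$ are needed.
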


\vskip -.8em
\paragraph*{Capturing the Alternation-Free Fragment of GNFP} 
In \cite{GottlobGV02}, an extension of Datalog-LIT was presented, called Datalog-LITE, which includes ``generalized literals'' and was shown to capture the alternation-free fragment of guarded fixed point logic (GFP). We expect that GN-Datalog can be similarly extended, in order to subsume Datalog-LITE and capture the alternation-free fragment of GNFP.

\enlargethispage*{1mm}
\bibliographystyle{abbrv}
\bibliography{gnqueries}


\appendix

\section{Missing proofs}

\subsection{Proof of Theorem~\ref{thm:GNRA} and inexpressibility claims}

\begin{proof}
    From GN-RA expressions to GNFO formulas, there is a straightforward inductive linear translation.
    More precisely, the following table describes how to translate each GN-RA 
expression $E$ of arity $k$ into an equivalent (therefore domain-independent) 
GNFO formula $\varphi_E(x_1,\ldots,x_k)$.   

{\small\begin{tabular}{r l}
  $R$ & $R(x_1,\ldots,x_k)$ \\
  $\sigma_{i=j}(E)$ & $\varphi_E(\mathbf{x}) \land x_i=x_j$ \\
  $\pi_{i_1\ldots i_m}(E)$ & $\exists \mathbf{z} \varphi_E(\mathbf{z}) \land \bigwedge_{j=1}^m x_j=z_{i_j}$ \\
  $E \times E'$ & $\varphi_E(x_1,\ldots,x_{k_1}) \land \varphi_E(x_{k_1+1},\ldots,x_{k_1+k_2})$ \\ 
  $E \cap E'$ & $\varphi_E(x_1,\ldots,x_k) \land \varphi_E(x_1,\ldots,x_k)$ \\ 
  $E \cup E'$ & $\varphi_E(x_1,\ldots,x_k) \lor \varphi_E(x_1,\ldots,x_k)$ \\ 
  $\pi_{i_1\ldots i_m}(R) \setminus E$ & 
      $\exists \mathbf{z} \big( R(\mathbf{z}) \land \lnot \varphi_E(\mathbf{x}) \land \bigwedge_{j=1}^m x_j=z_{i_j} \big)$ \\  
\end{tabular}}

\medskip

For  the converse direction, 
    we proceed as follows: we first construct a GN-RA expression \textsc{adom} that defines the active domain of the instance 
     (the union of all unary projections of atomic relations). Next,
    for each sequence of variables $\textbf{x}=x_1, \ldots, x_k$ and for each atomic GNFO formula $\phi$ whose free variables are 
    included in $\textbf{x}$, we compute a $k$-ary GN-RA expression $\tr_{\textbf{x}}(\phi)$ that is equivalent to it under the 
    active domain semantics (i.e., over structures whose domain coincides with the active domain). 
    For instance $\tr_{x_1,x_2,x_3}(R(x_2,x_2)) = \pi_{1,2,4}\sigma_{2=3} (\textsc{adom}\times R\times \textsc{adom})$, and
    $\tr_{x_1,x_2,x_3}(x_1=x_2) = \pi_{1,1,2}(\textsc{adom}\times\textsc{adom})$.  Finally, the
    translation $\tr_{\textbf{x}}(\cdot)$ is extended to complex GNFO formulas. 
    Conjunction, disjunction and existential quantification are translated as intersection, union, and projection. Hence, the only
    remaining case is for $\tr_{\textbf{x}}(\alpha(\textbf{y})\land\neg\phi(\textbf{y}))$, where the variables in $\textbf{y}$ are included in the variables in $\textbf{x}$.
    If the guard $\alpha$ is a relational atom, $\tr_{\textbf{x}}(\alpha(\textbf{y})\land\neg\phi(\textbf{y}))$ can be defined as
the
   GN-RA expression obtained from $\tr_{\textbf{y}}(\alpha) - \tr_{\textbf{y}}(\phi)$ by (i) pulling out selections and projections as 
    necessary in order to turn the first argument of the complementation operator into a projection of an atomic relation; and (ii)
   taking a product with $\textsc{adom}$ for each variable from $\textbf{x}$ that is not included in $\textbf{y}$.

    If the guard $\alpha$ is of the form $y_1=y_2$, 
then $\tr_{\textbf{y}}(\alpha(\textbf{y})\land\neg\phi(\textbf{y}))$
is defined, in the first instance,
    as $\pi_{1,1}(\textsc{adom} - \pi_1\sigma_{1=2}\tr_{\textbf{y}}(\phi))$. Since \textsc{adom} is in general a union of all unary projections of 
  atomic relations, we need to pull out the union from the scope of the complementation operator. This is where an
    exponential blow-up may be incurred.
\end{proof}

\begin{proposition}
     The following RA expressions are not equivalent to GN-RA expressions:
\begin{enumerate}
\item $(\pi_1(R)\times S) - \pi_{1,1}(R)$
\item $\pi_{1,4}(\sigma_{2=3}(R\times R)) - R$ 
\item $\pi_{1}(R) - \pi_{1}((\pi_{1}(R)\times S)-R)$
\end{enumerate}
\end{proposition}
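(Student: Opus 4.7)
By Theorem~\ref{thm:GNRA}, every GN-RA expression is equivalent to a domain-independent GNFO formula. It therefore suffices to prove that each of the three queries fails to be expressible in GNFO. Our main tool is the model-theoretic characterization from~\cite{BtCS11}: GNFO-formulas are invariant under GN-bisimulations (partial-isomorphism systems between guarded subsets of the two structures, enjoying the back-and-forth property). Consequently, to show that a query $q$ is not in GNFO it is enough to exhibit a pair of instances $I_1, I_2$ that are GN-bisimilar yet on which $q$ yields different answers.

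The common source of inexpressibility is that the difference operator in each of the three expressions negates a condition that cannot be guarded by any atomic formula containing all of its free variables. In (1), the condition $x \neq y$ linking $x \in \pi_1(R)$ and $y \in S$ has no atomic guard. In (2), the negated atom $\neg R(x, y)$ arising from $\pi_{1,4}(\sigma_{2=3}(R\times R)) - R$ concerns endpoints of a two-step path that share no common atom. In (3), the outer difference realizes the universal quantification $\forall y (S(y) \to R(x, y))$, again unguarded.

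For (1), take $I_1$ with $R = \{(e_1, e_1), (e_2, e_2)\}$ and $S = \{e_1, e_2\}$, and $I_2$ with $R = \{(e, e)\}$ and $S = \{e\}$. Every guarded tuple of $I_1$ shares its atomic type with the unique guarded tuple of $I_2$, and the partial isomorphisms $\{\, e \mapsto e_i : i = 1, 2 \,\}$ together with their extensions form a GN-bisimulation between the two instances. Yet the query evaluates to $\{(e_1, e_2), (e_2, e_1)\}$ in $I_1$ and to the empty set in $I_2$, so the Boolean non-emptiness test distinguishes them. For (2), we construct GN-bisimilar instances by exploiting the tree-like model property of GNFO from~\cite{BtCS11}: an instance $I_1$ realizing a two-step $R$-path uncovered by a direct edge, and an $I_2$ whose guarded neighborhoods agree with those of $I_1$ locally but in which every such two-step path is short-circuited by an $R$-edge; the finite model property of GNFO then yields finite witnesses. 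For (3), we adopt an analogous approach: take $I_1$ in which some $a$ satisfies $R(a, b)$ for every $b \in S$, and $I_2$ GN-bisimilar to $I_1$ but containing an additional element $b^{*} \in S$ not $R$-related to $a$, matched under the GN-bisimulation to another $S$-element of $I_1$ of the same atomic type.

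The main obstacle is the rigorous construction of the GN-bisimulations in cases (2) and (3), where bisimilarity is naturally achieved at the level of (possibly infinite) tree unfoldings and must then be transferred to finite witnesses via the finite model property; the remaining verifications are routine given the machinery of~\cite{BtCS11}, and case (1) is already settled by the explicit finite pair above.
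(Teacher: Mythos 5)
Your overall strategy is the same as the paper's: invoke Theorem~\ref{thm:GNRA} to reduce to GNFO-inexpressibility, pass to the boolean projection, and exhibit a pair of GN-bisimilar instances that the query separates. Your treatment of case (1) is complete and correct (the paper uses the pair $\{R(a,b),S(a),S(c)\}$ versus $\{R(a,b),S(a)\}$; your pair of self-loop structures works equally well, since GN-bisimulation cannot count components).

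The genuine gap is in cases (2) and (3): you never actually produce the distinguishing pairs, and you acknowledge as much. For (2) the paper exhibits concrete \emph{finite} instances, namely $\{R(a,b),R(b,c),R(a,c),R(b,d)\}$ (which has a two-step path $a\to b\to d$ with no shortcut) versus $\{R(a,b),R(a,c),R(b,c)\}$ (in which every two-step path is short-circuited), and asserts their GN-bisimilarity; an analogous finite pair is needed for the quotient query in (3). Your proposed route --- build GN-bisimilar \emph{infinite} tree unfoldings and then ``transfer to finite witnesses via the finite model property'' --- does not work as stated: the finite model property of GNFO concerns satisfiability of formulas, and it gives no mechanism for converting an infinite pair of GN-bisimilar structures separated by the query into a finite such pair. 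Since the proposition (like the rest of the paper) is about finite instances, a GNFO formula could in principle agree with the query on all finite instances while disagreeing on your infinite witnesses, so the infinite construction alone would not refute finite-instance expressibility. To close the gap you must either write down explicit finite GN-bisimilar pairs for (2) and (3) and verify the back-and-forth conditions, or give a genuinely finite unravelling argument; as it stands, two of the three cases are a plan rather than a proof.
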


\begin{proof}
   In \cite{BtCS11}, the notion of \emph{GN-bisimulation} was introduced, and it was shown that GN-bisimulations preserve the truth of GNFO sentences.
   Together with Theorem~\ref{thm:GNRA} this allows us to show non-expressibility of the above RA expressions in GN-RA.  Note that 
   if any of the above RA expressions was definable in GN-RA, then also its boolean projection would be definable in GN-RA. It can be shown that
\begin{enumerate}
\item the instance $\{R(a,b),S(a),S(c)\}$, which satisfies the boolean projection  of $(\pi_1(R)\times S) - \pi_{1,1}(R)$, 
        is GN-bisimilar to the instance $\{R(a,b),S(a)\}$, which does not.
\item the instance $\{R(a,b),R(b,c),R(a,c),R(b,d)\}$, which satisfies the boolean projection  of $\pi_{1,4}(\sigma_{2=3}(R\times R)) - R$ 
        is GN-bisimilar to the instance $\{R(a,b),R(a,c),R(b,c)\}$, which does not.
\item the instance $\{R(a,b),S(a),S(c)\}$, which satisfies the boolean projection  of $(\pi_1(R)\times S) - \pi_{1,1}(R)$, 
        is GN-bisimilar to the instance $\{R(a,b),S(a)\}$, which does not. \qed
\end{enumerate}
\end{proof}

\subsection{Proof of Theorem~\ref{thm:Codd-GNSQL}}

\begin{proof}[sketch]
  Let $q$ be any (closed) GN-SQL query. We may assume without loss of generality that each tuple variable $R$ occurring in $q$ is declared in exactly one \textsf{from}-clause, and therefore has a unique associated relation name, that we will denote by $\textsc{rel}_R$. By a simultaneous induction, we can 
\begin{itemize}
\item translate each (open or closed) GN-SQL query $q$ to a GNFO formula $\phi_q(\textbf{x})$, where $\textbf{x}$ is a sequence of first-order variables, one for each
attribute name belonging to the type of the query $q$ and one for each term $R.\textsc{attr}$ where $R$ is a tuple variable that occurs freely in $q$ and 
$\textsc{attr}$ is an attribute name belonging to the type of $\textsc{rel}_R$,
\item translate each GN-SQL condition $c$ to a GNFO formula $\phi_c(\textbf{x})$, where $\textbf{x}$ is a sequence of first-order variables, one for each
term $R.\textsc{attr}$ where $R$ is a tuple variable that occurs freely in $c$ and 
$\textsc{attr}$ is an attribute name belonging to the type of $\textsc{rel}_R$.
\end{itemize}
We omit the detailed definition of the translation, which is straightforward. The clause for \textsf{not} is as follows:
    \[\phi_{\textsf{not}(\emph{condition})}(\textbf{x}) = \textsc{rel}_R(\textbf{x})\land\neg\phi_{\emph{condition}}(\textbf{x})\]
It is not hard to see that each closed GN-SQL query $q$ is equivalent to its GNFO translation $\phi_q$. In particular, this implies that $\phi_q$ is domain independent.

For the converse translation, from domain-independent GNFO formulas to GN-SQL queries, it is convenient to first assume that we have at our disposal a relation \textsc{adom} with a single attribute $A$ containing all elements belonging to the active domain. As we will show, using such a relation, it is quite straightforward to give an inductive polynomial translation from GNFO to GN-SQL. On the 
other hand, all usage of \textsc{adom} can be eliminated at the cost of an exponential blow-up. 
To see this, recall that relation names can only appear in FO-SQL queries in the \textsf{from}-clause of a \textsf{select-from-where} expression. Thus, any occurrence of $\textsc{adom}$ must be of the form
\[ \textsf{select $\alpha$ from (\ldots, \textsc{adom} $R$, \ldots) where $\beta$}\]
where, in addition, the expressions $\alpha$ and $\beta$ may refer to $R.A$.
We may equivalently replace such an expression by the union of all expressions of the following form (for all relation names $\textsc{rel}_i$ and attribute names $\textsc{attr}_j$):
\[ \textsf{select $\alpha'$ from (\ldots, $\textsc{rel}_i$ $R$, \ldots) where $\beta'$}\] 
where $\alpha'$ and $\beta'$ are obtained from $\alpha$ and $\beta$ by replacing all occurrences of $R.A$ by $R.\textsc{attr}_j$. Clearly, applying this transformation for all occurrences of \textsc{adom} yields an equivalent query that does not make use of \textsc{adom} and that is at most singly exponentially larger than the original query. 

Next, we explain how to translate domain-independent GNFO formulas to
GN-SQL queries with the help of the \textsc{adom} relation.  Let
$\phi(\textbf{x})$ be any domain-independent GNFO formula. We assume
w.l.o.g.\ that $\phi$ does not reuse any variables, and associate to each 
first-order variable $z$ a corresponding distinct tuple variable $R_z$ (whose type, in the expressions below, will consist of a single attribute named $A$). Next, we inductively translate each GNFO formula $\phi$ to a GN-SQL condition $\phi^*$, as follows.
\[\small\begin{array}{@{}l@{~}l@{~}l@{}}
  (x=y)^* &=& (R_x.A=R_y.A) \\
  \textsc{rel}(x_1, \ldots, x_n) &=& \textsf{exists(select $R.A_1$ from \textsc{rel} $R$ where} \\ 
    & &  \textsf{$R.A_1=R_{x_1}.A$ and $\ldots$ $R.A_n=R_{x_n}.A$)} \\
  (\phi\land\psi)^* &=& \phi^*\land\psi^* \\ 
  (\phi\lor\psi)^* &=& \phi^*\lor\psi^* \\ 
  (\exists x~\phi)^* &=& \textsf{exists(select $R_x.A$ from \textsc{adom} $R_x$ where $\phi^*$)} \\
  (\textsc{rel}(\textbf{x})\land\neg\phi)^* &=& \textsf{exists(select $R.A_1$ from \textsc{rel} $R$ where} \\ 
    & & \hspace{-15mm}
       \textsf{$R.{A_1}=R_{x_1}.A$ and $\ldots$ and $R.{A_n}=R_{x_n}.A$ and not($\widehat{\phi^*}$))} \\
  (x=y\land\neg\phi)^* &=& \textsf{($R_x.A=R_y.A$) and not $\phi[x/y]^*$} 
\end{array}\]
where, in the second clause and in the 6th clause, the schema of the relation \textsc{rel} 
is $\{A_1, \ldots, A_n\}$, and where, in the 6th clause,  $\widehat{\phi^*}$ is obtained from $\phi^*$ 
by replacing each term $R_{x_i}.A$ by $R.A_i$. In the last clause, $\phi[x/y]$ is the formula 
obtained from $\phi$ by replacing each free occurrence of $x$ by $y$, so that the formula 
in question has only one free first-order variable. 

Finally, starting with a GNFO formula $\phi(x_1, \ldots, x_n)$ we define the query $q_\phi$ as follows:
\[\begin{array}{l}\textsf{select $R_1.A$ as $\textsc{attr}_1$, $\ldots$, $R_n.A$ as $\textsc{attr}_n$} \\ \textsf{from \textsc{adom} $R_1$, $\ldots$, \textsc{adom} $R_n$ where $\phi^*$}\end{array}\]
where $\textsc{attr}_1$, $\ldots$, $\textsc{attr}_n$ are distinct attribute names.
It is easy to show that each domain-independent GNFO formula is equivalent to the GN-SQL query obtained from it in the above way.
\end{proof}

\subsection{Proof of Theorem~\ref{thm:Codd-nr-GN-Datalog}}
\begin{proof}
Consider any non-recursive GN-Datalog query $q=(\tilde{\Pi},Ans)$ with $\tilde{\Pi}=(\Pi_1, \ldots, \Pi_n)$. A straightforward induction on $k$ shows
  that, for every $k\leq n$ and for every $X\in \IDB^{\Pi_k}$, there is a GNFO formula $\phi$ that defines the relation 
  computed by $X$. The GNFO formula in question can be obtained by taking the disjunction of all bodies of rules that have
  $X$ in the head, replacing all occurrences of IDBs $Y\in \IDB^{\Pi_\ell}$ with $\ell<k$ by their (previously obtained)  defining GNFO formulas. 
  Finally, by taking the query $Ans$ and replacing each IDB $X\in\IDB^{\Pi_n}$ by its defining GNFO formula, we obtain a
  GNFO formula that is equivalent to $q$, and, in particular, domain independent, since $q$ is domain independent.

  Conversely, let $\phi(\textbf{x})$ be a domain-independent GNFO
  formula. We may assume that $\phi(\textbf{x})$ is in DNF. This may
  require an exponential blow-up.  First, we construct a GN-Datalog
  program with a unary IDB $\textsc{adom}$ that computes the active
  domain, as well as a binary IDB $X_{=}$ that computes 
the relation $\{(x,x)\mid x\in\textsc{adom}\}$, which will be used for translating equality statements. We omit the construction, which is straightforward. Next, by induction, we construct for each subformula  of $\phi$ that is of the form $\alpha\land\neg\chi$ a non-recursive GN-Datalog program with IDB relations $X_{\alpha\land\chi}$ and $X_{\alpha\land\neg\chi}$ computing the relation defined by $\alpha\land\chi$ and $\alpha\land\neg\chi$
  under the active-domain semantics (i.e., on structures whose active domain is the entire domain). In particular, if $\alpha$ is a relational atom, then the program includes the rule $$X_{\alpha\land\neg\chi}(\textbf{x})\leftarrow \alpha(\textbf{x})\land\neg X_{\alpha\land\chi}(\textbf{x})$$
If $\alpha$ is an equality statement, we proceed similarly, using the $X_=$ IDB  relation introduced above as a guard.

Finally, if $\phi(\textbf{x})$ is of the form 
  $\phi_1(\textbf{x}_1)\lor\ldots\lor\phi_n(\textbf{x}_n)$, 
we define $Ans$ to be the union of the conjunctive queries 
  $Ans_i(\textbf{x}) := (\, X_{\phi_i}(\textbf{x}_i) \land 
                       \bigwedge_{x\in\textbf{x}}\textsc{adom}(x) \,)$.
From the domain-independence of $\phi(\textbf{x})$, we obtain that $q=(\tilde{\Pi},Ans)$ is equivalent to $\phi$.
\end{proof}

\subsection{Proof of Theorem~\ref{thm:containment}}

It was shown in \cite{tCS11} that satisfiability for UNFO is 2ExpTime-hard, where UNFO is a syntactic fragment of GNFO.  As we explain below, the construction can be adapted to prove the same lower bound result already for GNFO formulas in DNF.  In addition, we can easily ensure that the GNFO formulas in question are domain independent, and force the existence of a fixed unary predicate denoting the active domain.  Under these restrictions, the translation from GNFO to GN-SQL (Theorem~\ref{thm:Codd-GNSQL}) and the translation from GNFO to non-recursive GN-Datalog (heorem~\ref{thm:Codd-nr-GN-Datalog}) both runs in polynomial time. Hence, we obtain 2ExpTime-hardness for satisfiability, and therefore also for query containment, for GN-SQL and non-recursive GN-Datalog.

\begin{proposition}\label{prop-sat-lower-bound}
  There is a fixed schema such that the satisfiability problem for GNFO formulas in DNF is
  2ExpTime-hard, both on arbitrary structures and on finite structures.
\end{proposition}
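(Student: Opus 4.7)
The plan is to adapt the 2ExpTime-hardness proof for UNFO satisfiability from \cite{tCS11}, using the fact that UNFO is syntactically a fragment of GNFO (its negations apply to formulas with at most one free variable and are therefore guarded by $x=x$). That reduction takes an alternating exponential-space Turing machine $M$ with input $w$ and outputs, in polynomial time, a UNFO sentence $\varphi_{M,w}$ whose satisfiability encodes the existence of an accepting computation. Three modifications suffice to obtain the stronger claim of the present proposition.

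First, the schema is to be made independent of $M$ by a standard sort-encoding that collapses the polynomially many relation symbols of the reduction into a single binary relation together with a constant number of unary sort markers; the required typing axioms are themselves UNFO, so this rewriting is innocuous with respect to the logic used.

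Second, the formula must be placed in DNF without the exponential blow-up of naive distribution. I would use a Tseitin-style renaming: for every subformula $\chi$ of $\varphi_{M,w}$ introduce a fresh predicate $P_\chi$ of matching arity (uniformly encoded into the fixed schema via additional sort markers) and conjoin axioms expressing $P_\chi(\textbf{x}) \leftrightarrow \chi^\ast(\textbf{x})$, where $\chi^\ast$ replaces the immediate subformulas of $\chi$ by the corresponding $P$-atoms. Because the source formula is UNFO, every negated subformula has at most one free variable, so each local biconditional can be written as a DNF GNFO formula of constant size, with $P_\chi(x)$ itself serving as the guard of the negated direction. The resulting sentence is $P_{\varphi_{M,w}}$ conjoined with the linearly many constant-size axioms, and the whole has polynomial size.

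Third, to obtain domain independence and to force the existence of a fixed unary predicate denoting the active domain, I would add a unary $\textsc{adom}$ to the schema, axiomatize that every element appearing in any relational fact lies in $\textsc{adom}$ (a guarded-negation-free axiom), and relativize all existential quantifiers to $\textsc{adom}$. The equivalence of satisfiability over finite and over arbitrary structures is then immediate from the finite model property of GNFO (Theorem~\ref{thm:GNFO}(2)), so the lower bound transfers to both settings simultaneously. The main obstacle I anticipate is the DNF normalization step: distributing conjunction over disjunction in a generic GNFO formula produces an exponential blow-up, so the Tseitin argument is essential. What makes it go through cleanly is precisely the UNFO structure of the source formula: each negated subformula has arity one, so the renaming predicates $P_\chi$ can act as their own guards and each biconditional axiom stays in DNF of constant size.
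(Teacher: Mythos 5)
Your overall strategy (adapt the UNFO construction of \cite{tCS11}) matches the paper's, and the remarks about the fixed schema and the finite model property are fine. But the Tseitin-style renaming, which carries the whole weight of the DNF step, has a genuine gap. The subformulas that actually cause the exponential blow-up under naive distribution are not the negated ones (those indeed have one free variable) but the conjunctions of biconditionals $\bigwedge_i (P_i(x) \leftrightarrow P_i(y))$ used to assert that two leaves encode the same tape address, and these have \emph{two} free variables $x,y$ that do not co-occur in any atom of the construction. For such a $\chi(x,y)$, introducing $P_\chi(x,y)$ requires the axiom $\forall xy\,(P_\chi(x,y) \rightarrow \chi(x,y))$, which is fine ($P_\chi$ guards the negation), but also the converse $\forall xy\,(\chi(x,y) \rightarrow P_\chi(x,y))$, i.e.\ $\neg\exists xy\,(\chi(x,y) \land \neg P_\chi(x,y))$, and here $\neg P_\chi(x,y)$ has no admissible guard: $\chi$ contains only unary atoms and binary path atoms between consecutive path elements, none covering $\{x,y\}$. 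The converse direction is exactly the one needed for the negative occurrence of $\chi$ (the consistency axiom $\neg\exists xy(\dots \land \chi(x,y) \land A(x) \land \neg A(y))$): without it, a model may interpret $P_\chi$ as empty, the constraint becomes vacuous, and the reduction is unsound. This is not a presentational issue but the essential expressive limitation of GNFO --- one cannot name a cross-product-like binary relation by a fresh predicate via GNFO axioms.

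The paper circumvents this differently: rather than renaming subformulas, it changes the encoded \emph{structures}, attaching to each node a small gadget (following \cite{Bjorklund08}) with an auxiliary binary relation $E$ whose shape depends on whether the node satisfies $P$. On the modified structures, ``$x$ and $y$ agree on $P_i$'' becomes a disjunction-free conjunctive query (an existential conjunction of $E$- and path-atoms), so the biconditionals disappear entirely and the formula is in DNF of polynomial size without any renaming. If you want to keep a renaming-based argument you would need to first perform a comparable structural modification so that the formulas being named are guarded by a genuine atom over both variables; as written, the proposal does not go through.
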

\begin{proof}[sketch]
  The same result, without the DNF requirement, was shown in \cite{tCS11}, in the context of a fragment of GNFO called UNFO.
  We briefly sketch the construction used in the proof in \cite{tCS11}, and
  explain how it can be adapted to use only GNFO formulas in DNF.

  Fix an alternating $2^n$-space bounded Turing machine $M$ whose word problem
  is 2ExpTime-hard. Let $w$ be a word in the input alphabet of $M$.  We
  construct a formula $\phi_{w}$ that is satisfiable if and only if $M$ accepts
  $w$. Moreover, if $\phi_{w}$ is satisfiable, then in fact it is satisfied in
  some finite tree structure.  In this way, we show that the lower bound holds
  not only for arbitrary structures, but also for finite trees and for any
  class in-between. The formula $\phi_{w}$ describes an (alternating) run of
  $M$ starting in the initial state with $w$ on the tape, and ending in a final
  configuration.

  The run is encoded as a big tree whose nodes correspond to configurations,
  whose child relation correspond to successive configurations, and where each
  node has in addition a small subtree of height $n$ attached to it, that is
  used to describe the tape content at that configuration. Here is an
  illustration of a configuration with two successor configurations (but we
  allow more than two successor configurations):
\begin{center}
\includegraphics{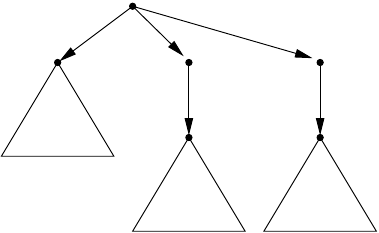}
\end{center}

Each small subtree has depth exactly $n$. The internal nodes of the subtree are
label with a unary predicate $P$. Hence a path from its root to one of its leaf
correspond to a bit string of length $n$ denoting a position of the tape. The
label of the leaf codes the content of the tape at that position.

The formula first enforces that the small subtrees have the desired structure
and that all positions are realized in at least one leaf of each small subtree. Since we
don't have inequality, we cannot force that it is realized exactly once, but we
can force in GNFO that all nodes where it is realized satisfy the same relevant unary predicates $A$:

\[\begin{split}\neg\exists xy(\text{leaf}(x)\land \text{leaf}(y)\land
x\uparrow^n\downarrow^n y \land~~~~~ \\~~~~~ \bigwedge_i (P_i(x) \leftrightarrow P_i(y)) \land
A(x) \land\neg A(y))
\end{split}\]

Here, $x\uparrow^n\downarrow^n y$ is a short for the GNFO formula describing
the fact that there is a path of the form $\uparrow^n\downarrow^n$ from $x$ to
$y$, $\text{leaf}(x)$ is a short for $\neg\exists y Rxy$, and $P_i(x)$ is a
shortcut for $\exists y(x\uparrow^{n-i}y\land P(y))$.

The following formula $\text{suc}(x,y)$ expresses that $x$ and $y$ denote
the same tape position in successive configurations, and it uses
only unary negation:
\[\text{suc}(x,y) ~:=~ \text{leaf}(x)\land \text{leaf}(y)\land (x\uparrow^{n+1}\downarrow^{n+2}y)\land
\bigwedge_i(P_i(x)\leftrightarrow P_i(y))\]
Note that the first half of the formula says that $x$ and $y$
are tape cells of successive configurations.  
Using this formula, we can specify all relevant properties of the run
(the encoding will involve formulas of the form $\forall x(\text{leaf}(x)\land \phi(x) \to \exists y(\text{suc}(x,y) \land \psi(y)))$). 

This concludes the outline of the construction used in \cite{tCS11} for showing 2ExpTime-hardness of the satisfiability problem for (a fragment of) GNFO.

The above proof clearly uses GNFO formulas that are not in DNF, and the straightforward way to bring the formulas in DNF, by ``pulling out disjunction'', would lead to formulas whose length is exponential in $n$. The problem, here, lies in the formula
$\bigwedge_i (P_i(x) \leftrightarrow P_i(y))$ expressing that two leaf nodes, in the same configuration or in successor configurations, encode the same memory location. 
This use of disjunction can be avoided using a construction from \cite{Bjorklund08}. In particular, 
we enrich our encoding of Turing machine configurations as follows: to each node $x$ of the structure, we attach a small substructure consisting of nodes that we mark with a fresh unary predicate $Q$ in order to distinguish them from the nodes that belong to the ``main structure'' (i.e., the structure as it was before adding all these new small substructures). 
The exact substructure that we attach to a node $x$ depends on whether or not the node satisfies $P$. If a node $x$ it satisfies $P$, we create a new node $y$ and add edges $E(x,y)$ and $R(x,y)$. If, on the other hand, $x$ does not satisfy $P$, we create new nodes $y$ and $z$ and add edges $E(x,z), R(x,y), R(y,z)$. Here, $E$ is a new binary predicate. 
This modification of the structure has the consequence that we can avoid the use of disjunction in comparing whether two leaf node encode the same memory location: 
suppose that $x$ and $y$ be leaf nodes of the same configuration subtree. Then 
$(P_i(x) \leftrightarrow P_i(y))$ can be equivalently expressed as 
$$\exists uu'vv'(x\uparrow^{n-i}u\land E(u,u')\land y\uparrow^{n-1}v\land E(v,v')\land u'\uparrow^{i+2}\downarrow^{i+2} v')$$
In a similar way, we can express, without using disjunction, the fact that two nodes encode the same memory location in \emph{successive} configuration subtrees. We omit the details.
\end{proof}

\subsection{Proof of Theorem~\ref{thm:GNDatalog-PNP}}

\begin{proof} 
  Upper bound: for each IDB except possibly the answer IDB, the number of tuples 
that may end up in the extension of the IDB is bounded by the number of tuples 
belonging to the extension of the EDBs, times the number of rules of the Datalog program, because each rule is guarded by an EDB (here, incidentally, what really matters for the argument is the body of each rule includes an EDB atom that contains all variables occurring in the head of the rule).
Hence, the number of times a non-answer rule is applied is bounded by the 
number of facts in the input database instance times the number of rules of the Datalog program.
 Hence, the entire Datalog computation, except for the 
computation of the answer relation, can be viewed as a polynomial computation
with an NP-oracle (for evaluating the bodies of rules). Finally, once all
IDBs except the Answer IDB have been computed, we simply invoke the NP oracle 
once more to test if the given tuple belongs to the extension of the 
answer IDB.

  For the lower bound we provide a reduction from the LEX(SAT) problem: 
given a propositional formula $\Phi(x_1,\ldots,x_n)$, determine if the 
value of $x_n$ is 1 in the lexicographically least satisfying assignment, 
where $x_n$ is the least significant bit.
where $x_n$ is the least significant bit. The LEX(SAT) problem is known 
to be \PNP-complete, even for 3-CNF formulas \cite{Wagner87}.

We devise a structure $\mathfrak{B}$ with a domain of two elements ($\top$ and $\bot$) 
endowed with a unary relation $T$ that is true only of $\top$, 
a unary relation $F$ that is true only of $\bot$, a binary relation $N$ 
that holds precisely the complementary pairs $(\bot,\top)$ and $(\top,\bot)$, 
and a ternary relation $\mathrm{OR}$ 
that is true of all $\{\bot,\top\}$-triplets 
but $(\bot,\bot,\bot)$.
This way, the set of satisfying assignments to every 3-clause $C(x_i,x_j,x_k)$, 
e.g.~$x_i \lor \lnot x_j \lor \lnot x_k$, is the answer set to a corresponding 
conjunctive query $\tilde{C}(x_i,x_j,x_k)$ on $\mathfrak{B}$, such as 
$\exists y_j y_k \ N(x_j,y_j) \land N(x_k,y_k) \land \mathrm{OR}(x_i,y_j,y_k)$ in this example. 
More generally, we can translate every 3-CNF formula $\Phi(x_1,\ldots,x_n)$ 
into a Datalog rule with body 
$$
  \tilde{\Phi}(x_1,\ldots,x_n,y_1,\ldots,y_n) = \bigwedge_i N(x_i,y_i) \ \land 
                     \bigwedge_{C \text{ a clause}} \tilde{C}(\textbf{x},\textbf{y})
$$ 
where $C$ ranges over the clauses of $\Phi$. 

Given a propositional formula $\Phi(x_1,\ldots,x_n)$, the idea is now 
to have, for each $i\leq n$, a unary IDB predicate $X_i$ that computes 
the truth value of the $i$-th bit in the lexicographically least 
satisfying assignment to $\Phi(x_1,\ldots,x_n)$. 
These IDBs belong to different strata of the program as inductively 
defined by the following rules.
\[\small\begin{array}{rcl}
  Z_1      & \leftarrow & F(x_1), \, \tilde{\Phi}(\textbf{x},\textbf{y}) \\
  X_1(x_1) & \leftarrow & F(x_1), \, Z_1 \\
  X_1(x_1) & \leftarrow & T(x_1), \, \lnot Z_1 \\
           & : &  \\
  Z_i      & \leftarrow & X_1(x_1),\ldots,X_{i-1}(x_{i-1}), \, 
                          F(x_i), \, \tilde{\Phi}(\textbf{x},\textbf{y}) \\
  X_i(x_i) & \leftarrow & F(x_i), \, Z_i \\
  X_i(x_i) & \leftarrow & T(x_i), \, \lnot Z_i  \\  
           & : & \\
  Ans      & \leftarrow & X_n(x_n), \, T(x_n), \,  \tilde{\Phi}(\textbf{x},\textbf{y})
\end{array}\]
It is easy to see that the above non-recursive GN-Datalog query computes 
on $\mathfrak{B}$ the solution to the LEX(SAT) problem instance $\Phi$.
\end{proof}

\subsection{Proof of Proposition~\ref{prop:OWA_linModel}}

In what follows it will be convenient to work with GNFO formulas in 
disjunctive normal form. Two critical dimensions of a GNFO formula in DNF 
are its `width', introduced above, and its `negation rank'.
The \emph{negation rank} $\mathrm{nrank}(\phi)$ of $\phi$ in DNF is the maximum 
number of nested negations in $\phi$. %
Naturally, UCQs have negation rank $0$. Set
$
 \text{DNF}_w^r = \{ \phi \mid \phi \text{ in DNF}, \, 
                  \mathrm{width}(\phi) \leq w, \, \mathrm{nrank}(\phi) < r \}  
$.

\smallskip

Given a structure $U$ we let $\atoms(U)$ denote the set 
of tuples $\textbf{u}$ forming the support of a relational atom in $U$. 
For every $\textbf{u} \in \atoms(U)$ and $\textbf{u}' \in \atoms(U')$
and for every $w, r \in \mathbb{N}_{\geq0}$ let $U,\textbf{u} \equiv_w^r U',\textbf{u}'$ 
denote the fact that $U \models \psi(\textbf{u}) \iff U' \models \psi(\textbf{u}')$ 
for every $\psi \in \text{DNF}_w^r$. 
Assuming an ambient finite relational signature, each $\equiv_w^r$ 
is an equivalence of finite index as there are, up to logical equivalence, 
only finitely many formulas in $\text{DNF}_w^r$.
In particular, for every $\textbf{u} \in \atoms(U)$ there is a formula 
$\chi_{U,\textbf{u}}(\textbf{x})$ that is a boolean combination 
of $\text{DNF}_w^r$-formulas and is characteristic of its $\equiv_w^r$-class, 
i.e. such that $U,\textbf{u} \equiv_w^r U',\textbf{u}'$ 
iff $U' \models \chi_{U,\textbf{u}}(\textbf{u}')$. 
We call $\chi_{U,\textbf{u}}(\textbf{x})$ the 
\emph{$\mathrm{DNF}_w^r$-type of $\textbf{u}$ in $U$}. 
Note that $\chi_{U,\textbf{u}}(\textbf{x})$ is itself \emph{not} in $\text{DNF}_w^r$. 
Also note that $\text{DNF}_w^0$ is empty %
and that $\text{DNF}_w^1$ comprises only UCQs. %

\begin{proof} 
For the purposes of this construction we consider the ammendment of the language 
of $\varphi(\textbf{x})$ with constants $\textbf{c}$ corresponding to the free  
variables $\textbf{x}$, and regard $\varphi$ as the sentence obtained from the 
original query by substitution of each constant $c_i$ in place of 
the corresponding free variable $x_i$.
Accordingly, given an instance $I$ with distinguished elements $\textbf{a}$, 
we treat $\textbf{a}$ as the interpretation of the constants $\textbf{c}$. 
Furthermore, we assume w.l.o.g.\ that $\phi$ is in DNF as in~\eqref{eq:DNF} on 
page~\pageref{eq:DNF} and let $w$ be the width and $r$ the negation rank of $\phi$.  

For each $\text{DNF}_w^r$-type $\tau(\textbf{x})$ such that 
$\phi \land \tau(\textbf{x})$ is satisfiable we fix in advance, 
and independently of $I$, a finite model of $\phi$ 
with distinguished elements $(M^\tau,\textbf{a}^\tau)$ 
realising it: $M^\tau \models \phi \land \tau(\textbf{a}^\tau)$.
Let $C$ be the maximum number of facts in any of the $M^\tau$. 
Note that $C$ is independent of $I$.   

To obtain the model $M$, for every $\textbf{b} \in \atoms(I)$ having $\text{DNF}_w^r$-type 
$\tau(\textbf{z})$ in $J$ we take $(M^{\textbf{b}},\textbf{a}^{\textbf{b}})$ to be 
a fresh copy of $(M^\tau,\textbf{a}^\tau)$ and attach it to $I$ by identifying 
its distinguished tuple $\textbf{a}^{\textbf{b}}$ with $\textbf{b}$ component-wise. 
Thus $M$ is made up of at most $C n$ many facts. 
It remains to verify that $M \models \phi$. 

\begin{claim} \label{claim:identype}
For every $\textbf{b} \in \atoms(I)$ and $\textbf{d} \in \atoms(M^{\textbf{b}})$ 
we have $M^{\textbf{b}},\textbf{d} \equiv_w^r M,\textbf{d}$.
\end{claim}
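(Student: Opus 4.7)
My plan is to prove Claim~\ref{claim:identype} by induction on the structure of the formula $\psi \in \mathrm{DNF}_w^r$, establishing both directions of the $\equiv_w^r$-equivalence simultaneously for all $\textbf{b} \in \atoms(I)$ and all $\textbf{d} \in \atoms(M^\textbf{b})$. The starting structural observation is that the gluing construction identifies the distinguished tuple $\textbf{a}^\textbf{b}$ with $\textbf{b}$ and nothing else, so any atom of $M$ that shares an element with $\adom(M^\textbf{b})$ either lies entirely within $M^\textbf{b}$, or else lies in $I$ (or in some other attached copy $M^{\textbf{b}'}$) and uses only elements of the interface $\textbf{b}$. In particular, $M^\textbf{b}$ embeds into $M$ as a substructure whose only ``gateway'' to the rest of $M$ is the guarded tuple $\textbf{b}$.

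The atomic and boolean cases are immediate. For a guarded-negation conjunct $\alpha(\textbf{x}) \land \lnot\phi(\textbf{x})$ at $\textbf{d}$, the outer atom $\alpha(\textbf{d})$ has the same truth value in $M^\textbf{b}$ and in $M$ because any atom of $M$ on $\textbf{d}$ is either in $M^\textbf{b}$ or is anchored entirely at the interface $\textbf{b}$; the inner formula $\phi$ then has free variables contained in $\textbf{d}$ and reduces to the induction hypothesis (its negation rank has dropped).

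The crux is the existential step: $\psi = \exists \textbf{y}\,(\zeta_1 \land \cdots \land \zeta_m)$ in DNF. The direction from $M^\textbf{b}$ to $M$ is immediate since $M^\textbf{b}$ is a substructure. For the converse, suppose $M \models \psi(\textbf{d})$ via a witness tuple $\textbf{e}$. Partition the $\zeta_i$ into those whose variables stay inside $\adom(M^\textbf{b})$ and those that leave it. By the structural key above, every conjunct of the second kind (together with any variables of $\textbf{d}$ or $\textbf{e}$ it uses inside $M^\textbf{b}$) must be anchored at the interface, i.e.\ all of its $M^\textbf{b}$-components are forced into $\textbf{b}$. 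Hence the ``external part'' of the witness is characterized by a $\mathrm{DNF}_w^{r'}$-formula ($r'\leq r$) about $\textbf{b}$ in $J$, which by the choice of $M^\textbf{b}$ (recall $M^\textbf{b},\textbf{a}^\textbf{b} \equiv_w^r J,\textbf{b}$) holds also about $\textbf{a}^\textbf{b}$ in $M^\textbf{b}$. Realizing the corresponding internal witness inside $M^\textbf{b}$ and combining with the (inductively matched) internal part produces the desired witness for $\psi$ in $M^\textbf{b}$.

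The main obstacle is this existential step: managing the interaction between the outer induction on the formula and the interface-based lifting requires packaging the ``external part'' of the witness into a single formula of width at most $w$ and negation rank at most $r$, so that the type equivalence $M^\textbf{b},\textbf{a}^\textbf{b} \equiv_w^r J,\textbf{b}$ is applicable. The width bound of $\psi$ is precisely what keeps the number of interface variables involved bounded by $w$, and the DNF shape is what makes ``the external conjuncts'' a clean sub-formula to which the type equivalence applies. A symmetric statement (for $\textbf{b}' \in \atoms(I)$, comparing $M$ to $I$ at $\textbf{b}'$) is subsumed by the claim through the special case $\textbf{d} = \textbf{a}^{\textbf{b}'} = \textbf{b}'$, and is used implicitly when the witness $\textbf{e}$ lands in $I$.
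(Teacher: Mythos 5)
Your overall plan is the paper's plan: induct on negation rank, get the direction from $M^{\textbf{b}}$ to $M$ essentially for free, and in the hard direction re-route the part of the witness that escapes $M^{\textbf{b}}$ through $J$, using the width bound to keep the resulting interface formula inside $\mathrm{DNF}_w^r$ so that $M^{\textbf{b}},\textbf{b}\equiv_w^r J,\textbf{b}$ applies. The structural observation about the gluing (an atom of $M$ meeting $\adom(M^{\textbf{b}})$ either lies in $M^{\textbf{b}}$ or touches it only in the interface $\textbf{b}$) is exactly what the paper exploits.

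There is, however, a gap in the one sentence that carries the whole hard direction: you assert that the external part of the witness ``is characterized by a $\mathrm{DNF}_w^{r'}$-formula about $\textbf{b}$ \emph{in $J$}'', but the external witnesses live in $M$ --- in $I$ and in \emph{other} attached copies $M^{\textbf{a}}$ --- not in $J$, so the truth of that formula in $J$ is precisely what needs proving. The justification you offer (a ``symmetric statement comparing $M$ to $I$ at $\textbf{b}'$, subsumed by the claim'') does not work as stated: the relevant comparison is with $J$, not $I$, and invoking the claim itself for the other copies at the same negation rank $q$ would be circular. The paper closes this step as follows: every atom of the external witness lies entirely in some copy $M^{\textbf{a}}$ with $\textbf{a}\neq\textbf{b}$; the grounded conjunction $\nu^{\textbf{a}}$ of the conjuncts hosted by $M^{\textbf{a}}$, whose sub-witnesses sit wholly inside $M^{\textbf{a}}$, transfers from $M$ to $M^{\textbf{a}}$ using only the rank-$(q-1)$ induction hypothesis (your ``internal'' case, applied to $M^{\textbf{a}}$ rather than $M^{\textbf{b}}$); then the \emph{by-construction} equivalence $M^{\textbf{a}},\textbf{a}\equiv_w^r J,\textbf{a}$ yields witnesses $\textbf{u}^{(\textbf{a})}$ in $J$. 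One must further record, via explicit equality conjuncts $\delta^{\textbf{a}}$ and $\delta^{\textbf{b}}$, the identifications between witnesses of different copies (which necessarily pass through interface tuples of $I$), so that the various $\textbf{u}^{(\textbf{a})}$ assemble into a single tuple of length at most $w$ satisfying one $\mathrm{DNF}_w^q$ formula $\zeta(\textbf{b},\textbf{u})$ in $J$, which finally pulls back into $M^{\textbf{b}}$. Your sketch correctly identifies the width bound as the reason this stays within $\mathrm{DNF}_w^r$, but the routing through the other copies' equivalence with $J$ and the equality bookkeeping are the actual content of the step and are missing from the proposal.
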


From this claim it follows trivially that $M \equiv_w^r M^{\textbf{b}}$, 
therefore also $M \equiv_w^r J$, since $M^{\textbf{b}} \equiv_w^r J$ by choice.
Because $J \models \phi$, this will allow us to conclude $M \models \phi$. 

To establish Claim~\ref{claim:identype} we prove by induction on $q=0,\ldots,r$ 
that $M^{\textbf{b}},\textbf{d} \equiv_w^q M,\textbf{d}$ for every $\textbf{b} \in \atoms(I)$ 
and $\textbf{d} \in \atoms(M^{\textbf{b}})$. The latter claim is trivially true for $q=0$. 
Towards the induction step assume it is true for $q-1$ and 
consider an arbitrary $\textbf{b} \in \atoms(I)$ and $\textbf{d} \in \atoms(M^{\textbf{b}})$. 
It suffices to show that 
  $M^{\textbf{b}} \models \psi(\textbf{d}) \iff M \models \psi(\textbf{d})$ 
for all %
$$
  \psi(\textbf{x}) = \exists \textbf{y} \ 
      \bigwedge_l ( \alpha_l(\textbf{z}^l) \land \lnot \psi_l(\textbf{z}^l) ) \\[-.7em]
$$
where each $\alpha_l(\textbf{z}^l)$ is an atomic formula and $\psi_l\in\text{DNF}_w^{q-1}$   
with free variables $\textbf{z}^l$ from among $\textbf{x}\textbf{y}$. 
For $\psi$ as above we additionally define 
$\nu(\textbf{x},\textbf{y}) = \bigwedge_l ( 
       \alpha_l(\textbf{z}^l) \land \lnot \psi_l(\textbf{z}^l) )$.

  Tackling first the easy direction, suppose that $M^{\textbf{b}} \models \psi(\textbf{d})$  
and consider witnesses $\textbf{c}$ in $M^{\textbf{b}}$ such that 
$M^{\textbf{b}} \models \nu(\textbf{d},\textbf{c})$. 
Then $M \models \alpha_l(\textbf{e}^l)$ is immediate for each subtuple $\textbf{e}^l$ 
that relates to $\textbf{d}\textbf{c}$ as $\textbf{z}^l$ relates to $\textbf{x}\textbf{y}$, 
while $M \models \lnot \psi_l(\textbf{e}^l)$ follows from 
$M^{\textbf{b}} \models \lnot \psi_l(\textbf{e}^l)$ via the induction hypothesis.
This proves $M \models \psi(\textbf{d})$.

  Suppose now $M \models \psi(\textbf{d})$ and let $\textbf{c}$ be elements of $M$ 
such that $M \models \nu(\textbf{d},\textbf{c})$. 
Our aim is to find witnesses $\textbf{c}'$ in $M^{\textbf{b}}$ 
such that $M^{\textbf{b}} \models \nu(\textbf{d},\textbf{c}')$. 
We distinguish two cases.
\newline\noindent
(i) If $\textbf{c}$ lies entirely in $M^{\textbf{b}}$ then, 
   using the induction hypothesis as in the proof of the opposite direction, 
   we can confirm that $\textbf{c}'=\textbf{c}$ are appropriate witnesses: 
   $M^{\textbf{b}} \models \nu(\textbf{d},\textbf{c})$.
\newline\noindent
(ii) Otherwise we proceed as follows. 
For each atom $\alpha_l(\textbf{z}^l)$ from $\nu$ let $\textbf{e}^l$ be the subtuple 
relating to $\textbf{d}\textbf{c}$ as $\textbf{z}^l$ relates to $\textbf{x}\textbf{y}$. 
Thus $M \models \alpha_l(\textbf{e}^l) \land \lnot \psi_l(\textbf{e}^l)$ for each $l$. 

Next, for each $\textbf{a} \in \atoms(I)$ let $\lambda(\textbf{a})$ be the set 
of those indices $l$ such that $\textbf{e}^l$ lies entirely in $M^{\textbf{a}}$ 
and let $\textbf{e}^{(\textbf{a})}$ enumerate (without repetition) all elements 
from those $\textbf{e}^l$ with $l \in \lambda(\textbf{a})$.
Let in addition $\delta^{\textbf{a}}$ be the conjunction of all those formal 
equalities $a_j=\big(\textbf{e}^{(\textbf{a})}\big)_k$ that hold in $M$.
Further let $\nu^{\textbf{a}} = \delta^{\textbf{a}} \land \, 
  \bigwedge_{l\in\lambda(\textbf{a})} \alpha_l(\textbf{e}^l) \land \lnot \psi_l(\textbf{e}^l)$
and $\psi^{\textbf{a}} = \exists \textbf{e}^{(\textbf{a})} \nu^{\textbf{a}}$. 

Thus, $M \models \nu^{\textbf{a}}(\textbf{a},\textbf{e}^{(\textbf{a})})$ and we find, as in (i), 
that also $M^{\textbf{a}} \models \nu^{\textbf{a}}(\textbf{a},\textbf{e}^{(\textbf{a})})$, 
hence $M^{\textbf{a}} \models \psi^{\textbf{a}}(\textbf{a})$ and, 
because $M^{\textbf{a}},\textbf{a} \equiv_w^r J,\textbf{a}$, we also learn 
that $J \models \psi^{\textbf{a}}(\textbf{a})$. 
So there are $\textbf{u}^{(\textbf{a})}$ in $J$ 
such that $J \models \nu^{\textbf{a}}(\textbf{a},\textbf{u}^{(\textbf{a})})$.
Let $\textbf{u}$ enumerate all $\textbf{u}^{(\textbf{a})}$ 
for $\textbf{a} \in \atoms(I)$ different from $\textbf{b}$. 
Note that, crucially, $|\textbf{u}| \leq w$. 
Indeed, because for each $\textbf{a}$ elements of $\textbf{a}$ and $\textbf{u}^{(\textbf{a})}$ 
satisfy the equalities prescribed in $\delta^{\textbf{a}}$, 
it is ensured that 
any equalities between elements of witnessing tuples $\textbf{e}^{(\textbf{a})}$ 
in $M^{\textbf{a}}$ and $\textbf{e}^{(\textbf{a}')}$ in $M^{\textbf{a}'}$ 
with $\textbf{a}\neq\textbf{a}'$ (which, by definition of $M$, must necessarily 
involve elements occurring both in $\textbf{a}$ and in $\textbf{a}'$) 
are also observed by the corresponding witnesses $\textbf{u}^{(\textbf{a})}$ 
and $\textbf{u}^{(\textbf{a}')}$ in $J$. 

Altogether we have 
$J \models \bigwedge_{\textbf{a}\neq\textbf{b}} 
               \nu^{\textbf{a}}(\textbf{a}, \textbf{u}^{(\textbf{a})})$,
where $\textbf{a}$ ranges over $\atoms(I) \setminus \{\textbf{b}\}$.
Let $\delta^\textbf{b}$ be the conjunction of all equalities $b_j=u_k$ 
that do hold in $J$, and let 
$
  \zeta(\textbf{b},\textbf{u}) = \delta^\textbf{b} \land \, 
      \bigwedge_{\textbf{a}\neq\textbf{b}} \bigwedge_{l\in\lambda(\textbf{a})} 
       \alpha_l(\textbf{e}^l) \land \lnot \psi_l(\textbf{e}^l)      
$,
where $\textbf{a}$ ranges over $\atoms(I) \setminus \{\textbf{b}\}$.
By the above, $J \models \exists \textbf{u} \, \zeta(\textbf{b},\textbf{u})$ 
and $\exists \textbf{u} \, \zeta(\textbf{b},\textbf{u}) \in \text{DNF}_w^q$,  
so from $M^{\textbf{b}},\textbf{b} \equiv_w^r J,\textbf{b}$ it follows that 
 $M^{\textbf{b}} \models \exists \textbf{u} \, \zeta(\textbf{b},\textbf{u})$.
Taking into account that $\delta$ stipulates all equalities between components 
of $\textbf{b}$ and any witnesses $\textbf{u}$ to $\zeta$ in $M^{\textbf{b}}$ that are 
valid in $J$ and, correspondingly, that are valid in $M$ between elements 
of $\textbf{b}$ and the original witnesses $\textbf{c}$ to $\psi$ in $M$, we can conclude 
from this and from of course $M^{\textbf{b}} \models \psi^{\textbf{b}}(\textbf{d})$ 
that $M^{\textbf{b}} \models \psi(\textbf{d})$ as needed.
This completes the induction step in the proof of Claim~\ref{claim:identype}.
\end{proof}

\subsection{Proof of Theorem~\ref{thrm:OWA_SGNQ_PTIME}}

In the proof below, we concentrate on boolean queries. The PTime 
upper bound for boolean queries extends immediately to non-boolean queries:
consider a $k$-ary SGNQ $\phi(x_1, \ldots, x_k)$. Let $P_1, \ldots, P_n$
be fresh monadic predicates. Then the problem of testing whether 
$I\models_{OWA} \phi(a_1, \ldots, a_k)$, for given $I$ and $a_1, \ldots, a_k$, reduces
to the problem whether 
$I'\models_{OWA}\exists x_1, \ldots, x_k(\phi(x_1,\ldots,x_k)\land P_1(x_1)\land\cdots\land P_k(x_k))$,
where $I'$ extends $I$ by interpreting each new predicate $P_i$ by the singleton set $\{a_i\}$. 

\begin{proof}
Consider a boolean SGNQ $Q$. By prescription, conjunctions under 
an even number of negations in $Q$ may contain at most one conjunct 
that is a negated subformula. To allow for a uniform treatment we introduce 
a fresh nullary predicate $\mathtt{false}$ and add $\lnot \mathtt{false}$ 
as a new conjunct in those subformulas of $Q$ (whether in the context of 
an even or odd number of negations), where there were no negative conjuncts.
After this trivial transformation $Q$ takes the form
\begin{equation} \label{eq:disjfree}
  \mathtt{false} \ \lor \ 
  \bigvee_i \exists \textbf{x} \left( 
      \alpha_i(\textbf{x}) \land \lnot \exists \textbf{y} \ \psi_i(\textbf{x},\textbf{y}) 
  \right)  
\end{equation}
where each $\alpha_i$ is a conjunction of atoms and $\psi_i$ is 
a DNF fromula built with only $\exists$, $\land$ and guarded negation.  
We allow above $|\textbf{x}|=0$ and $\alpha_i$ to be an empty conjunction, 
i.e. vacuously true. Thus, \eqref{eq:disjfree} contains, as a special case, 
disjuncts of the form $\lnot \exists \textbf{y} \ \psi(\textbf{y})$. 
As another special case, \eqref{eq:disjfree} may contain disjuncts 
$\exists \textbf{x} \ ( \alpha_i(\textbf{x}) \land \lnot \mathtt{false} )$
with $\psi_i = \mathtt{false}$ and the corresponding quantification 
$\exists \textbf{y}$ being vacuous ($|\textbf{y}|=0$). 
We shall write $Q$ equivalently as 
\begin{equation} \label{eq:disjfreeneg}
  \bigwedge_i \forall \textbf{x} \left( 
      \alpha_i(\textbf{x}) \limp \exists \textbf{y} \ \psi_i(\textbf{x},\textbf{y}) 
  \right) 
  \ \limp \ \mathtt{false} 
\end{equation}
akin to a formulation of CQ entailment of frontier-guarded tgds -- except for the fact 
that $\psi_i(\textbf{x},\textbf{y})$ need not be quantifier free. 
Via induction on the quantifier alternation rank we show that~\eqref{eq:disjfreeneg} 
can be `flattened' to an equi-satisfiable $\forall\exists$-formula of GNFO 
asserting that a conjunction of frontier-guarded tgds entails $\mathtt{false}$. 

Each $\psi_i$ is in disjunctive normalform 
and occurs in the scope of an odd number of negations in the 
disjunction-free $\phi$. Hence it assumes the following general form 
$$
  \psi_i(\textbf{x},\textbf{y}) = 
  \gamma_i \ \land \ \bigwedge_{m} \lnot \delta_{i,m} \ \land \  
  \bigwedge_{n} \lnot \exists \textbf{z} \ (\gamma'_{i,n} \land \lnot \xi_{i,n})  
$$
where $\gamma_i$ and $\gamma'_{i,n}$ are conjunctions of positive atoms,
each $\delta_{i,m}$ is an atom and each $\xi_{i,m}$ is either an atom, 
or an existentially quantified formula, or $\mathtt{false}$.  
Let $W_i(\textbf{x},\textbf{y})$ be a new predicate symbol of the same arity as $\psi_i$. 
We replace in \eqref{eq:disjfreeneg} the $i$-th conjunct with a collection of 
new conjuncts:   
\[\begin{array}{ll}
  \forall \textbf{x} \ \left( \alpha_i(\textbf{x})  
      \limp \exists \textbf{y} \ W_i(\textbf{x},\textbf{y}) \right) 
      & \\
  \forall \textbf{x} \textbf{y} \ \left( W_i(\textbf{x},\textbf{y}) 
      \limp \gamma_i  \right)    
      & \\
  \forall \textbf{x} \textbf{y} \ \left( W_i(\textbf{x},\textbf{y}) \land \delta_{i,m} 
      \limp \mathtt{false} \right) 
      & \text{for each } \delta_{i,m}   \\
  \forall \textbf{x} \textbf{y} \textbf{z} \ \left( 
      W_i(\textbf{x},\textbf{y}) \land \gamma'_{i,n} \limp \xi_{i,n} \right) 
      & \text{for each } \gamma'_{i,n}    
\end{array}\]
Because all negations were properly guarded, all of these rules 
are frontier-guarded tgds, save perhaps some of those of the last kind with $\xi$ 
an existentially quantified formula, which then pertain to the same 
restrictions as the original conjuncts of \eqref{eq:disjfreeneg},  
only having a lower quantifier alternation rank. 
Iterating this transformation one eventually arrives at the desired 
form comprising only frontier-guarded tgds.   
 
The theorem follows from the fact that OWA query answering against 
frontier-guarded tgds has PTime data complexity \cite{Baget11ijcai}. 
In fact, \cite{Baget11rep} shows that every CQ can be rewritten relative to 
a set of frontier-guarded tgds into a Datalog program that can be executed on a 
database instance to yield the OWA answer to the original query. 
Thus, each serial GNFO query $Q$ can also be reformulated as a
Datalog program $(\Pi, \mathtt{false})$ such that  
$I \models_{\mathrm{OWA}} Q \, \iff \, \Pi(I) \models \mathtt{false}$ 
for all instances $I$.    
\end{proof}

\subsection{Proof of Theorem~\ref{thrm:OWA_SGNQineq_undec}}

\begin{proof}
We combine ideas of \cite{Graedel99JSL} and \cite[Theorem 15]{CGK08kr} to encode 
computations of a fixed Turing machine with the database instances 
representing input words. 
Using guarded tgds and an egd (in fact, a key constraint), we will force the existence of a grid frame 
onto which a valid computation of the Turing machine is charted. 
The advantage of using tgds and egds is the well-known fundamental 
principle \cite{JK84,FKMP05,NDR06rep} that open-world query answering 
on an instance $D$ relative to a set $\Sigma$ of tgds and egds reduces 
to query evaluation on the single universal (though generally infinite) 
chase model $chase(D,\Sigma)$. 
While the chase model with respect to guarded tgds is always tree-like 
\cite{CGK08kr,Baget11ijcai}, the additional key constraint 
imposed by the egd enforces a grid-like structure of the chase model.

Let $M$ be a Turing machine, to be chosen later, having tape alphabet $A$, 
states $Q$ and transition function $\delta: Q\times A \to Q \times A \times \{-1,0,1\}$.
Input words to $M$ will be presented as successor-structures: comprising 
a $succ$-chain of $A$-labelled elements. The signature consists of a binary 
relation $succ$ and unary relations $P_a$ for every $a \in A$. 
We assume that the predicates $P_a$ partition the input structure, 
that $A$ contains a special start symbol $\triangleright$ labelling 
only the first element and a special blank symbol $\flat$ labelling 
only the last element of the successor-chain.  
 
Next we define a set $\Sigma_M$ of guarded tgds and egds over an expanded 
signature responsible for simulating $M$. $\Sigma_M$ has as conjuncts the 
following guarded tgds (omitting the implicit universal quantification 
of variables in rule bodies). 
\[\begin{array}{l}
  succ(x,y) \limp \exists z \ succ(y,z) \qquad 
  succ(x,y) \land P_\flat(x) \limp P_\flat(y) \\
  succ(x,y) \limp \exists u v \ cell(x,y,u,v) \\
  cell(x,y,u,v) \limp next(x,u) \land next(y,v) \land succ(u,v) \\
\end{array}\]
In addition, $\Sigma_M$ contains the key constraint 
\begin{equation} \label{eq:egds}
\begin{array}{c}
  next(x,y) \land next(x,z) \limp y=z
\end{array}
\end{equation}
expressing functionality of $next$. %
It is easy to see that the infinite chase of any input structure 
as specified above wrt. these guarded tgds and the egd is an 
infinite grid with $succ$ and $next$ acting as horizontal and 
vertical successor edges and whose bottom $succ$-chain is labelled 
with $\triangleright w \flat^\omega$, where $w$ is the input word.
Consequently, every model of these rules embeds a homomorphic 
image of this grid. 

The next step is to implement, given the grid frame, the workings of 
the Turing machine $M$ using additional guarded tgd rules. 
To this end the we will make use of additional unary predicates $S_q$ 
for every state $q \in Q$ of $M$. Let $init$ be the initial state 
and $acc$ the w.l.o.g.\ unique accepting state of $M$. 
To initiate the computation $\Sigma_M$ specifies 
$$
   P_\triangleright(x) \limp S_{init}(x)
$$ 
and to carry it on $\Sigma_M$ contains guarded tgds associated to 
each transition $(p,a,q,b,\iota) \in \delta$.
\[\begin{array}{l} 
  cell(x,y,u,v) \land S_p(x) \land P_a(x) \limp P_b(x) \land S_q(u)
  \ \text{ for } \iota=0 \\
  cell(x,y,u,v) \land S_p(x) \land P_a(x) \limp P_b(x) \land S_q(v)
  \ \text{ for } \iota=1 \\
  cell(x,y,u,v) \land S_p(y) \land P_a(y) \limp P_b(y) \land S_q(u)
  \ \text{ for } \iota=-1 \\
\end{array}\]
To ensure that tape symbols not affected by a transition are copied 
from one configuration to the next we add unary predicates $L$ and $R$ 
(intuitively, $L$ and $R$ mark the positions left and right of the head 
of a configuration, respectively) and the following guarded tgd rules. 
\[\small\begin{array}{c} 
  succ(x,y) \land S_q(x) \limp R(y) \qquad 
  succ(x,y) \land R(x) \limp R(y) \\[0.2em]
  succ(x,y) \land S_q(y) \limp L(x) \qquad
  succ(x,y) \land L(y) \limp L(x) \\[0.3em]
  cell(x,y,u,v) \land R(y) \land P_a(y) \limp P_a(v) \\[0.2em] 
  cell(x,y,u,v) \land L(x) \land P_a(x) \limp P_a(u) \\
\end{array}\]
This completes the specification of the set $\Sigma_M$ of guarded tgds 
and the single egd responsible for simulating the Turing machine $M$. 
It should be clear that $M$ accepts a word $w$ if, and only if, 
the corresponding instance $D_w$ satisfies 
$$
  D_w, \Sigma_M \ \models_\mathrm{OWA} \ \exists x \ S_{acc}(x) \ .
$$
The first claim of the theorem now follows by choice of some Turing machine $M$
that accepts an r.e.-complete language. 
For the second claim consider the key constraint~\eqref{eq:egds} 
and the GNFO query $\varphi_M \lor \exists x \ S_{acc}(x)$, where $\varphi_M$ 
is the disjunction of the negations of the guarded tgds of $\Sigma_M$.   
\end{proof}

\subsection{Proof of Proposition~\ref{fmpobs}}

\prf
For the equivalence between (v) and (vi) we merely note that 
the stage increments $X^{n+1} \setminus X^n$  for each IDB predicate $X$
are $\GNF$-definable, for each $n \in \mathbb{N}$. Now 
$\Pi$ is classically unbounded 
if, and only if, for at least one $X$,
these formulas are individually satisfiable, 
for every $n \in \mathbb{N}$; and similarly in restriction to
finite instances. The finite model property for $\GNF$ 
therefore shows the equivalence. 

(ii) $\Rightarrow$ (iv) and (i) $\Rightarrow$ (iii)
follow from the characterizations of $\GNF$ as a fragment of
$\FO$ in terms of preservation under suitable notions of guarded negation 
bisimulation ($w$-bounded guarded negation 
bisimulation), as presented in~\cite{BtCS11} for the classical version 
and in~\cite{O2012} for the finite model theory version. 
These apply since all stages of $\Pi$ 
(finite and infinite, if we admit infinite instances) and especially 
the limit $\Pi^\infty$ are preserved under $w$-bounded guarded negation 
bisimulation, if $\Pi$ is of width $w$. 

(iv) $\Rightarrow$ (vi) is the natural variant  
of the classical Barwise--Moschovakis theorem for $\GNF$, which 
may be obtained from the classical via the semantic 
characterisation of $\GNF$  as a fragment of $\FO$ in~\cite{BtCS11}. 

We concentrate on (iii) $\Rightarrow$ (iv).
Assume that formulas $\psi_X(\xbar_i) \in \GNF$ define $X^\infty$
across all finite instances, but fail to define $\Xbar^\infty = \Pi^\infty(I)$ 
over some infinite instance $I$. Appealing to the form of the rules in
$\Pi$, we assume w.l.o.g.\ that $\psi_X(\xbar)$ is 
explicitly guarded in the form 
$\psi_X(\xbar) = 
\bigvee_s \bigl( \alpha_{s}(\xbar_{s}) \wedge
\psi_X(\rho_{s}(\xbar))\bigr)$, 
where every rule in $\Pi$ with head predicate $X$ 
gives rise to one disjunct, and $\rho_{s}$ is the appropriate 
substitution to match the variable tuple $\xbar$ onto the $\xbar_{s}$
used in that rule (in particular, $\alpha_i$ guards
all free variables in $\psi_X(\rho_{s}(\xbar))$).

The fact that a tuple of predicates $\Pbar$ 
is a fixed point of $\Pi$ is expressible by a sentence 
$\chi\in \GNF$ 
(in the signature extended with new $P_X$, one for each IDB predicate 
$X$ in $\Xbar$, which may even be used as
guards). But for the tuple or predicates defined by the $\psi_X$, there is
even a sentence $\xi \in \GNF$ in the basic (EDB) signature 
saying that this tuple is a fixed point of $\Pi$: 
the crucial point to note is that 
these predicate equalities reduce to set inclusions under 
each one of the relevant guards $\alpha_s$ (!). 
If one of the $\psi_X$ failed over any infinite instance, then,
by the finite model property for $\GNF$, it would also fail over some finite
instance. So the $\psi_X$ must define a fixed point of $\Pi$ across all, 
finite and infinite, instances. A similar argument shows that 
this fixed point defined by the $\psi_X$ over an 
infinite instance $I$ must be the least fixed point $\Xbar^\infty$.
Otherwise, there would have to be some other, strictly smaller fixed point
$\Pbar$ (viz.\ $\Pbar := \Xbar^\infty$). This fact can also 
be expressed by a sentence of $\GNF$ in the signature extended by
the new predicate letters $P$. So the finite model property for
$\GNF$ would again pull this situation down to some finite instance --
contradicting the assumption that the $\psi_X$ define $\Xbar^\infty$
over all finite instances. 
\eprf

\subsection{Proof of Theorem~\ref{thm:inequality-undecidable}}

\begin{proof}
  It is known that the implication problem for inclusion dependencies and key constraints
      lacks finite controllability, and is undecidable both on finite and on unrestricted instances (cf.~\cite{AHV95}).
      It follows that also the satisfiability and query containment problems for GN-SQL are not finitely controllable, and are undecidable both on finite instances and on unrestricted instances.
    As for the last item, it follows from Theorem~\ref{thrm:OWA_SGNQineq_undec}, using the fact that key constraints (being a special case of functional dependencies) can be expressed in GN-SQL($\neq$).
     Specifically, the GN-SQL(${\neq}$) query for which open world query answering is undecidable, is $q_1 \textsf{ union } q_2$ where $q_1$ is the boolean GN-SQL query from Theorem~\ref{thrm:OWA_SGNQineq_undec}(ii)
    and $q_2$ is the boolean GN-SQL($\neq$) query expressing the negation of the key constraint from Theorem~\ref{thrm:OWA_SGNQineq_undec}(ii).
\end{proof}

\subsection{Proof of Theorem~\ref{thm:ordering-containment}}

\begin{proof}[(sketch)]
  In translating GN-SQL(\textsc{lin}) to GNFO, we have to overcome a discrepancy in the use of constants. The constants that may appear in a GN-SQL(\textsc{lin}) query are actual values from the linearly ordered domain \textsc{lin}. GNFO, on the other hand, allows for the use of constant symbols, whose interpretation is given by the structure, and may differ between structures. In particular, a structure may interpret two constant symbols by the same element. In order to overcome this discrepancy, we (i) introduce for each element $d$ of $\textsc{lin}$ a corresponding constant symbol \textsf{d}, and (ii) we construct a GNFO sentence that ``axiomatizes'' the correct behavior of the constant symbols (including the fact that distinct constant symbols denote different values). 

\newcommand{\xor}{~\underline{\lor}~}
More precisely, let $\textsc{lin}=(D,\prec)$ and for each finite subset $S=\{d_1, \ldots, d_n\}$ of $D$, with $d_1\prec\ldots\prec d_n$, let $\theta_S$ be the following GNFO sentence, containing a constant symbol $\mathsf{d}_i$ for each $d_i\in S$:
$$\forall x \Big(\phi_{x<\mathsf{d}_1}\lor \phi_{x=\mathsf{d}_1}\lor \phi_{\mathsf{d}_1<x<\mathsf{d}_2}\lor \phi_{x=\mathsf{d}_2} \lor\cdots\lor\phi_{x>\mathsf{d}_n}\Big)$$ 
where
\[\small \hspace{-2mm}\begin{array}{@{}l@{~}l} 
  \phi_{x<\mathsf{d}_1} = & \begin{cases} 
      \bigwedge_{i\leq n} (x<\mathsf{d}_i \land \neg(x=\mathsf{d}_i)\land \neg(\mathsf{d}_i<x)) & \text{if } \exists d\in D ~ d\prec d_1 \\ 
      \bot & \text{otherwise} \end{cases} \\[1em]
  \phi_{x=\mathsf{d}_i} = & (x=\mathsf{d}_i)\land \neg(x<\mathsf{d}_i)\land \neg(\mathsf{d}_i<x)  \\& 
     ~~\land \bigwedge_{j<i} (\mathsf{d}_j<x \land \neg(\mathsf{d}_j=x)\land \neg(x<\mathsf{d}_j)) \\&
     ~~\land \bigwedge_{j>i} (x<\mathsf{d}_j \land \neg(\mathsf{d}_j=x)\land \neg(\mathsf{d}_j<x)) \\[1em]
  \phi_{\mathsf{d}_i<x<\mathsf{d}_{i+1}} = \hspace{-7mm} & \hspace{7mm}  \begin{cases} 
      \bigwedge_{j\leq i}(\mathsf{d}_j<x\land \neg(\mathsf{d}_j= x) \land \neg(x<\mathsf{d}_j)) \\
      ~\land \bigwedge_{j> i}(x<\mathsf{d}_j\land \neg(\mathsf{d}_i= x) \land \neg(\mathsf{d}_j<x))) \\
     & \hspace{-25mm} \text{if } \exists d\in D ~ d_i\prec d\prec d_{i+1} \\ 
      \bot & \hspace{-25mm}\text{otherwise} \end{cases} \\[1em]
  \phi_{\mathsf{d}_n<x} = & \begin{cases} 
      \bigwedge_{i\leq n} (\mathsf{d}_i<x\land \neg(x=\mathsf{d}_i)\land \neg(x<\mathsf{d}_i)) & \text{if } \exists d\in D ~ d_n\prec d \\ 
      \bot & \text{otherwise} \end{cases} \\[1em]
\end{array}\]

Observe that this set $\theta_S$ can be constructed from $S$ in polynomial time, since $\textsc{lin}$ is reasonable.  Furthermore, the following crucial property holds: if $M$ is any structure satisfying $\theta_S$, and if $M'$ is an isomorphic copy of $M$ in which each constant symbol $\textsf{d}_i$ denotes the actual corresponding value $d_i$ (for all $d_i\in S$), then $M$ and $M'$ are indistinguishable with respect to GN-SQL(\textsc{lin}) queries whose constants are included in $S$.
It follows that a GN-SQL(\textsc{lin})  query $q_1$ is contained in a GN-SQL(\textsc{lin}) query $q_2$ if and only if, for their GNFO translations $q^*_1$ and $q^*_2$, we have that
$q_1^*\land \theta_S\models q_2^*$, where $S$ is the set of constants occurring in $q_1$ and $q_2$. 
\end{proof}

\subsection{Proof of Proposition~\ref{prop:IDB-guards}}

\begin{proof}
The upper bound follows from the ExpTime upper bound for stratified Datalog \cite{DEGV01}
(obtained by the standard technique of ``grounding'' a program by instantiating 
its rules via substituting domain elements for the variables in every possible 
way and solving the resulting exponentially large propositional Horn program 
with stratified negation by standard means).  

For the lower bound we provide a reduction from the acceptance problem for 
polynomial-space alternating Turing machines. 
Consider an alternating Turing machine $M$ using $p(n)$ tape cells on 
any input of length $n$. We may assume w.l.o.g.\ that the states of $M$ 
are partitioned into existential states $Q_\exists$ and universal states $Q_\forall$ 
and that the transition table of $M$ 
consists 
of tuples $(p,a,q,b,\epsilon,s,c,\delta)$ 
interpreted as follows. When in state $p \in Q$ and reading $a$ there are two possible 
transitions: writing $b$ at the current position, entering state $q$, and moving the 
read-write head by $\epsilon \in \{-1,0,+1\}$; or writing $c$, entering state $s$ and 
moving the head by $\delta \in \{-1,0,+1\}$. The choice between the two possibilities 
is existential or universal according to whether $p\in Q_\exists$ or $p \in Q_\forall$.
In addition we may assume w.l.o.g.\ that $M$ has a unique initial state $init$, 
and a unique accepting configuration with state $acc$, head position $1$ and 
the used segment of its tape filled with $0$'s. 

Let $\mathfrak{B}$ be the structure with domain $\{0,1\}$ and %
the binary relation $Bits$ that holds the pair $(0,1)$ alone. 
Given an input word $w$ of length $|w|=n$ and $M$ as above we let $N=p(n)$ and 
we devise a GN-Datalog program with IDB predicates $S_{q,i}(u_1,\ldots,u_N,z,o)$ 
of arity $N+2$ for each $q \in Q_\exists \cup Q_\forall$ and $1\leq i \leq N$.
Intuitively speaking, every fact $S_{q,i}(u_1,\ldots,u_N,z,o)$ will encode 
a configuration of $M$ in state $q$, head location $i$ and tape contents $u_1\ldots u_N$,
and $z$ and $o$ will invariably contain the values $0$ and $1$ in every 
such fact ever derived.
The GN-Datalog program $\Pi_{M,n}$ simulating $M$ on an $N=p(n)$-bounded tape comprises 
the following rules. For every transition $(p,a_0,q,a_1,\epsilon,s,a_2,\delta)$ 
and every $i$ such that $ 1 \leq i,i+\epsilon,i+\delta \leq N$, 
if $p \in Q_\exists$ then there are rules
\[\small\begin{array}{l}
  S_{p,i}(u_1,\ldots,u_{i-1},\sigma_0,u_{i+1},\ldots,u_N,z,o) \\ 
     \qquad \leftarrow \  
      S_{q,i+\epsilon}(u_1,\ldots,u_{i-1},\sigma_1,u_{i+1},\ldots,u_N,z,o) \, . \\
  S_{p,i}(u_1,\ldots,u_{i-1},\sigma_0,u_{i+1},\ldots,u_N,z,o) \\ 
     \qquad \leftarrow \    
      S_{s,i+\delta}(u_1,\ldots,u_{i-1},\sigma_2,u_{i+1},\ldots,u_N,z,o) \, . \\
\end{array}\]
and if $p \in Q_\forall$ then there is a rule
\[\small\begin{array}{l}
  S_{p,i}(u_1,\ldots,u_{i-1},\sigma_0,u_{i+1},\ldots,u_N,z,o) \\
     \qquad \leftarrow \  
      S_{q,i+\epsilon}(u_1,\ldots,u_{i-1},\sigma_1,u_{i+1},\ldots,u_N,z,o), \, \\
     \qquad \phantom{\leftarrow} \  
      S_{s,i+\delta}(u_1,\ldots,u_{i-1},\sigma_2,u_{i+1},\ldots,u_N,z,o) \, . \\
\end{array}\]
where in both cases 
$\sigma_j$ is $\left\{\begin{array}{ll}
               z & \text{if } a_j=0\\ 
               o & \text{if } a_j=1
            \end{array}\right.$ 
for each $j=0,1,2$.
In addition there is an acceptance rule 
\[\small\begin{array}{rcl}
  S_{acc,1}(\underbrace{z,\ldots,z}_{N\,\mathrm{times}},z,o) & \leftarrow & Bits(z,o)
\end{array}\]
corresponding to the unique accepting configuration, 
and the answer rule
\[\small\begin{array}{rcl}
  Ans_w      & \leftarrow & S_{init,1}(u_1,\ldots,u_N,z,o), Bits(z,o)  
\end{array}\]
encoding the initial configuration for a given input word $w \in \{0,1\}^n$,
where each $u_i$ is one of the variables $z$ or $o$ according to whether 
the $i$th bit of the initial tape contents with input $w$ is zero or one. 
Then $w$ is accepted by $M$ if the GN-Datalog query $(\Pi_{M,|w|},Ans_w)$ 
evaluates to true on $\mathfrak{B}$.
\end{proof}

\end{document}